\colorlet{shadecolor}{gray!12}
\newcommand{\g} {\mbox{digraph}}
\newcommand{\SPD} {\text{SPD}}
\newcommand{\MSP} {\text{MSP}}
\newcommand{\BC} {\text{BC}}
\newtheorem{remark}{Remark}
\newtheorem{proposition}{Proposition}
\newenvironment{desctight}
  {\begin{list}{}{\setlength\labelwidth{0pt}
        \setlength{\itemsep}{0.5pt}
        \setlength{\parsep}{0pt}
        \setlength\itemindent{-\leftmargin}
        }}
    {\end{list}}
\newtheorem{theorem}{Theorem}[section]
\newtheorem{example}[theorem]{Example}
\newtheorem{corollary}[theorem]{Corollary}
\newtheorem{lemma}[theorem]{Lemma}
\newtheorem{observation}[theorem]{Observation}
\newtheorem{definition}[theorem]{Definition}
\newenvironment{proof}{\noindent{\bf Proof~}}{\null\hfill $\Box$\par\medskip}
\newcommand{\bigo}{\text{$\mathcal O$}}
\newcommand{\ideg}{\text{indegree}}
\newcommand{\odeg}{\text{outdegree}}
\newcommand{\DC} {\text{DC}}
\newcommand{\OC} {\text{OC}}
\newcommand{\TT} {\text{TT}}
\newcommand{\SPO} {\text{SPO}}
\begin{document}

\title{Solutions for Subset Sum Problems with Special Digraph Constraints\thanks{A short version of 
this paper will appear in Proceedings of the {\em International Conference on Operations Research} (OR 2019) \cite{GKR19e}.}}

\author{Frank Gurski}
\author{Dominique Komander}
\author{Carolin Rehs}

\affil{\small University of  D\"usseldorf,
Institute of Computer Science, Algorithmics for Hard Problems Group,\newline 
40225 D\"usseldorf, Germany}

\maketitle


\begin{abstract}
The subset sum problem is one of the simplest and most fundamental
NP-hard problems in combinatorial optimization.
We consider two extensions of this problem:
The subset sum problem with digraph
constraint (SSG) and subset sum problem with weak digraph constraint (SSGW).
In both problems there is given a digraph with sizes assigned to the vertices.
Within SSG we want to find a subset of vertices
whose total size does not exceed a given capacity and which contains a vertex
if at least one of its  predecessors is part of the solution.
Within SSGW we want to find a subset of vertices
whose total size does not exceed a given capacity and which contains a vertex
if all its predecessors are part of the solution. SSG and SSGW have been
introduced recently by Gourv\`{e}s et al.\ who studied  their complexity for 
directed acyclic graphs and oriented trees.
We show that both problems are
NP-hard even on oriented co-graphs and minimal series-parallel digraphs. Further, 
we provide pseudo-polynomial solutions for SSG and SSGW with
digraph constraints given by directed co-graphs and series-parallel digraphs.

\bigskip
\noindent
{\bf Keywords:} 
subset sum problem; digraph constraint; directed co-graphs; series-parallel digraphs
\end{abstract}


%


\section{Introduction}

The subset sum problem is one of the most fundamental
NP-hard problems in combinatorial optimization.
Within the {\em subset sum problem (SSP)}
there is given a set $A=\{a_1,\ldots,a_n\}$ of $n$ items.
Every item $a_j$ has a  size $s_j$ and there is a capacity $c$.
All values are assumed to be positive integers and $s_j\leq c$ for every
$j\in\{1,\ldots,n\}$.
The task is to choose a subset $A'$ of $A$, such that the sum of the sizes 
of the items in $A'$ is maximized and is at most $c$.

In this paper we consider the following two sum problems which additionally 
have given a digraph on the item set. Both problems have been
introduced recently by Gourv\`{e}s et al.\ \cite{GMT18}.
Within the {\em subset sum problem with digraph
constraint} (SSG) we want to find a subset of vertices
whose total size does not exceed a given capacity and which contains a vertex
if at least one of its  predecessors is part of the solution.
Within the {\em subset sum problem with weak digraph constraint} (SSGW) the goal is to 
find a subset of vertices whose total size does not exceed a given capacity and 
which contains a vertex if all its predecessors are part of the solution. 
Since SSG and SSGW generalize SSP, they are NP-hard.
Both problems are integer-valued problems, which motivates
to observe whether they are weakly NP-hard, i.e. 
the existence of pseudo-polynomial algorithms.

For related works we refer to \cite[Section 3]{GMT18}.
In \cite{GMT18} it has been shown that on directed acyclic graphs (DAGs) SSG is
strongly NP-hard and SSGW is even APX-hard. Further, they showed that the
restriction to oriented trees allows to give a  pseudo-polynomial algorithm
using  dynamic programming along the tree.

In this paper we consider SSG and SSGW on further special digraph classes.
First, we consider co-graphs (short for complement reducible graphs), which  
can be generated from the single vertex graph by applying
disjoint union, order composition and series composition \cite{BGR97}.
They can also  be characterized by excluding eight forbidden induced 
subdigraphs. 
Directed co-graphs are exactly the digraphs of
directed NLC-width\footnote{The abbreviation NLC results from the 
node label controlled embedding mechanism originally defined for graph
grammars \cite{ER97}.} 1 and a proper subset of the digraphs
of directed clique-width at most 2 \cite{GWY16}.
Directed co-graphs are interesting from an algorithmic point of view 
since several hard graph problems can be solved in 
polynomial time  by dynamic programming along the tree structure of
the input graph, see \cite{BM14,Gur17a,GR18c,GKR19f,GKR19d}.
Moreover, directed co-graphs are very useful for the reconstruction
of the evolutionary history of genes or species using genomic
sequence data \cite{HSW17,NEMWH18}.

Further, we look at SSG and SSGW
on series-parallel digraphs, which are exactly the digraphs whose transitive closure equals
the transitive closure of some minimal series-parallel digraph.
Minimal series-parallel digraphs can be generated from the single vertex graph by applying
the parallel composition and series composition \cite{VTL82}.
Series-parallel digraphs are also interesting from an algorithmic point of view 
since several hard graph problems can be solved in 
polynomial time  by dynamic programming along the tree structure of
the input graph, see \cite{MS77,Ste85,Ren86}.

We show pseudo-polynomial solutions for SSG and SSGW on directed
co-graphs and minimal  series-parallel digraphs and deduce
a pseudo-polynomial time solution for SSG on series-parallel digraphs.
Our results are based on dynamic programming along the
tree-structure of the considered digraphs. 
The considered digraph classes are incomparable w.r.t.\ inclusion
to oriented trees considered in  \cite{GMT18}, see Fig.\ \ref{grcl}. Moreover, 
the digraphs of our interest allow to define dense graphs, i.e.\ graphs where the number 
of directed edges is quadratic in the number of vertices.
In Table \ref{table-A} we summarize the known results from \cite{GMT18} and the results
of this work about subset sum problems with special digraph constraints.

\begin{table}[h]
\caption{Known running times for SSG and SSGW with digraph constraints restricted
to special graph classes. 
Let $n$ be the number of vertices and $m$ the number of directed edges of the input digraph and $c$
be the capacity.}\label{table-A}
\begin{center}
\begin{tabular}{l|p{2.9cm}p{2.3cm}|p{2.29cm}p{2.3cm}|}
                       & \multicolumn{2}{c|}{SSG}                     &   \multicolumn{2}{c|}{SSGW}        \\
\hline
transitive tournaments &  $\bigo(n^2)$                  &  Remark \ref{tt}           &   $\bigo(n\cdot c^4 + m)$  &Theorem   \ref{ssgw}     \\
bioriented cliques     &  $\bigo(n)$                    & Remark \ref{cl}            &    $\bigo(n\cdot c^4 + m)$  &Theorem   \ref{ssgw}     \\

\hline
DAGs                   &  strongly NP-hard              &\cite{GMT18}                &  APX-hard  &\cite{GMT18}              \\
oriented trees         &   $\bigo(n\cdot c^3)$          & \cite{GMT18}               & $\bigo(n\cdot c^2)$ & \cite{GMT18}             \\
directed co-graphs     &  $\bigo(n\cdot c^2 + m)$       & Theorem \ref{th-co-ssg}    &  $\bigo(n\cdot c^4 + m)$ & Theorem   \ref{ssgw}                \\
minimal series-parallel&  $\bigo(n\cdot c^2+m)$         & Theorem \ref{th-co-ssg-msp}&   $\bigo(n\cdot c^4+m)$ &  Theorem \ref{th-co2-ssg-sp}                 \\
series-parallel        &  $\bigo(n\cdot c^2+n^{2.37})$  &Theorem \ref{th-co-ssg-sp}                      &    open &         \\
\hline
\end{tabular}
\end{center}
\end{table}


\section{Preliminaries}

\subsection{Digraphs}

A {\em directed graph} or {\em digraph} is a pair  $G=(V,E)$, where $V$ is
a finite set of {\em vertices} and  $E\subseteq \{(u,v) \mid u,v \in
V,~u \not= v\}$ is a finite set of ordered pairs of distinct
vertices called {\em arcs} or {\em directed edges}.
For a vertex $v\in V$, the sets $N_G^+(v)=\{u\in V~|~ (v,u)\in E\}$ and 
$N_G^-(v)=\{u\in V~|~ (u,v)\in E\}$ are called the {\em set of all successors} 
and the {\em set of all  predecessors} of $v$ in $G$. 
The  {\em outdegree} of $v$, $\odeg_G(v)$ for short, is the number
of successors of $v$ and the  {\em indegree} of $v$, $\ideg_G(v)$ for short, 
is the number of predecessors of $v$.
We may omit indices if the graph under consideration is clear from the context.
A vertex  $v\in V$ is  {\em out-dominating (in-dominated)} if
it is adjacent to every other vertex in $V$ and is a source (a sink, respectively).

A digraph $G'=(V',E')$ is a {\em subdigraph} of digraph $G=(V,E)$ if $V'\subseteq V$
and $E'\subseteq E$.  If every arc of $E$ with start- and end-vertex in $V'$  is in
$E'$, we say that $G'$ is an {\em induced subdigraph} of digraph $G$ and we
write $G'=G[V']$.

For  $n \ge 2$ we denote by 
$$\overrightarrow{P_n}=(\{v_1,\ldots,v_n\},\{ (v_1,v_2),\ldots, (v_{n-1},v_n)\})$$
a {\em directed path} on $n$ vertices. Vertex $v_1$ is the {\em start vertex}
and $v_n$ is the {\em end vertex} of $\overrightarrow{P_n}$.
For  $n \ge 2$ we denote by
$$\overrightarrow{C_n}=(\{v_1,\ldots,v_n\},\{(v_1,v_2),\ldots, (v_{n-1},v_n),(v_n,v_1)\})$$
a {\em directed cycle} on $n$ vertices.

A {\em directed acyclic graph (DAG for short)} is a digraph without any $\overrightarrow{C_n}$,
for $n\geq 2$, as subdigraph.
A vertex $v$ is {\em reachable} from a vertex $u$ in $G$, if $G$ contains
a $\overrightarrow{P_n}$ as a subdigraph having start vertex $u$ and
end vertex $v$.

A {\em weakly connected component} of $G$ is a maximal subdigraph, 
such that the corresponding underlying graph is connected.
A {\em strongly connected component} of $G$ is a maximal subdigraph, in which
every vertex is reachable from every other vertex.

An  {\em out-rooted-tree} ({\em in-rooted-tree}) 
 is an orientation of a tree with a distinguished root such that
all arcs are directed away from (to) the root.

\subsection{Problems}

Let $A=\{a_1,\ldots,a_n\}$ be a set of $n$ items, such that every item $a_j$ has a  size $s_j$.
For a subset $A'$ of  $A$  we define
$$s(A'):=\sum_{a_j\in A'}s_j$$  and  the
{\em capacity constraint} by
\begin{equation}
s(A')\leq c. \label{cap}
\end{equation}

\begin{desctight}
\item[Name] Subset sum  (SSP)

\item[Instance]
A set $A=\{a_1,\ldots,a_n\}$ of $n$ items.
Every item $a_j$ has a  size $s_j$ and there
is a capacity $c$.

\item[Task]  Find a  subset $A'$ of $A$
that maximizes $s(A')$ subject to (\ref{cap}).
\end{desctight}

The parameters $n$, $s_j$, and $c$ are assumed to be 
positive integers. See \cite[Chapter 4]{KPP10} for a survey on the subset sum problem.
In order to consider generalizations of the subset sum problem we will
consider  constraints for a digraph $G=(A,E)$ with objects assigned
to the vertices.

The {\em digraph constraint} ensures that $A'\subseteq A$  contains a vertex $y$, if
it contains at least one predecessor of $y$, i.e.\
\begin{equation}
\forall y\in A \left(N^{-}(y)\cap A' \neq   \emptyset  \right) \Rightarrow  y \in A'. \label{dc}
\end{equation}

The {\em weak digraph constraint} ensures that
$A'$ contains a vertex $y$, if
it contains every predecessor of $y$, i.e.\
\begin{equation}
\forall y\in A \left(N^{-}(y)\subseteq A' \wedge N^{-}(y) \neq \emptyset \right) \Rightarrow y \in A'. \label{wdc}
\end{equation}

This allows us to state the following optimization problems given in \cite{GMT18}.

\begin{desctight}
\item[Name] Subset sum  with  digraph constraint (SSG)

\item[Instance]
A set $A=\{a_1,\ldots,a_n\}$ of $n$ items and a digraph $G=(A,E)$.
Every item $a_j$ has a  size $s_j$ and there
is a capacity $c$.

\item[Task]  Find a  subset $A'$ of $A$
that maximizes $s(A')$ subject to (\ref{cap}) and (\ref{dc}).
\end{desctight}

\begin{desctight}
\item[Name] Subset sum  with weak digraph constraint (SSGW)

\item[Instance] A set $A=\{a_1,\ldots,a_n\}$ of $n$ items and a digraph $G=(A,E)$.
Every item $a_j$ has a  size $s_j$ and there is a capacity $c$.

\item[Task] Find a  subset $A'$ of $A$
that maximizes $s(A')$ subject to  (\ref{cap}) and (\ref{wdc}).
\end{desctight}

In our problems the parameters $n$, $s_j$, and $c$ are assumed to be 
positive integers.\footnote{The results in \cite{GMT18} also consider null sizes,
which are excluded in our work. All our solutions
can be extended to  pseudopolynomial solutions which solve  SSG and SSGW using null sizes, 
see Section \ref{sec-con}.}
Further, in the defined problems a subset $A'$ of $A$ is called {\em feasible}, if
it satisfies the prescribed constraints of the problem.
By $OPT(I)$ we denote the value of an
optimal solution for input $I$.

\medskip
\begin{observation}
Every feasible solution for SSG is also a feasible solution for
SSGW, but not vice versa. 
\end{observation}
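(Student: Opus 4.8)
The plan is to separate both feasibility requirements into the shared capacity constraint~(\ref{cap}) and the differing digraph constraints, and then to show that the constraint~(\ref{dc}) logically implies the constraint~(\ref{wdc}) for any fixed candidate set $A'$. Since SSG-feasibility and SSGW-feasibility demand exactly the same inequality~(\ref{cap}), it suffices to reason about the conditions imposed on the arc structure.

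For the forward direction I would assume that $A'\subseteq A$ satisfies the digraph constraint~(\ref{dc}) and verify the weak digraph constraint~(\ref{wdc}). Take any vertex $y\in A$ whose predecessor set fulfils the hypothesis of~(\ref{wdc}), that is, $N^{-}(y)\subseteq A'$ and $N^{-}(y)\neq\emptyset$. Because $N^{-}(y)$ is nonempty, we may pick some predecessor $p\in N^{-}(y)$, and since $N^{-}(y)\subseteq A'$ we get $p\in A'$; hence $N^{-}(y)\cap A'\neq\emptyset$. Now~(\ref{dc}) applied to $y$ forces $y\in A'$, which is precisely what~(\ref{wdc}) requires. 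As $y$ was arbitrary, $A'$ satisfies~(\ref{wdc}), so every feasible solution for SSG is also feasible for SSGW.

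For the converse (the ``not vice versa'' part) I would exhibit a small counterexample. Consider the digraph on three vertices $p_1,p_2,y$ with arcs $(p_1,y)$ and $(p_2,y)$, so that $N^{-}(y)=\{p_1,p_2\}$, with unit sizes $s_{p_1}=s_{p_2}=s_y=1$ and capacity $c=3$. The set $A'=\{p_1\}$ respects~(\ref{cap}) and satisfies~(\ref{wdc}): the only candidate vertex is $y$, but $N^{-}(y)=\{p_1,p_2\}\not\subseteq A'$ because $p_2\notin A'$, so the premise of~(\ref{wdc}) fails and nothing is forced. However $A'$ violates~(\ref{dc}), since $N^{-}(y)\cap A'=\{p_1\}\neq\emptyset$ while $y\notin A'$. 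Thus $A'$ is feasible for SSGW but not for SSG.

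There is no deep obstacle here; the argument is a direct logical implication between the two constraints together with one witnessing instance. The only point requiring a little care is the nonemptiness clause in~(\ref{wdc}): one must use $N^{-}(y)\neq\emptyset$ to extract a witness predecessor $p$ that certifies $N^{-}(y)\cap A'\neq\emptyset$, and one should note that whenever $N^{-}(y)=\emptyset$ the weak constraint imposes no requirement on $y$, so the implication holds vacuously for such vertices.
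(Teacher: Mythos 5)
Your argument is correct and is exactly the reasoning the paper leaves implicit (the observation is stated without proof): the hypothesis of the weak digraph constraint~(\ref{wdc}) implies the hypothesis of the digraph constraint~(\ref{dc}), so any set satisfying~(\ref{dc}) satisfies~(\ref{wdc}), and your two-predecessor counterexample correctly separates the converse. Nothing is missing.
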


\medskip
\begin{observation}
$A'=\emptyset$ and $A'=A$ for $s(A)\leq c$ are feasible solutions for every instance of SSG and for every instance of 
SSGW. 
\end{observation}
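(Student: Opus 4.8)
The plan is to verify both assertions directly against the constraints (\ref{cap}), (\ref{dc}), and (\ref{wdc}), treating $A'=\emptyset$ and $A'=A$ as two separate cases. Since (\ref{dc}) and (\ref{wdc}) are universally quantified implications, in each case the strategy is to argue that either the antecedent is never satisfied (so the implication is vacuously true) or the consequent always holds.

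First I would treat $A'=\emptyset$. The capacity constraint (\ref{cap}) holds because $s(\emptyset)=0\leq c$, using that $c$ is a positive integer. For the digraph constraint (\ref{dc}) the premise becomes $N^-(y)\cap\emptyset\neq\emptyset$, which is never true, so (\ref{dc}) holds vacuously for every $y\in A$. For the weak constraint (\ref{wdc}) the premise $N^-(y)\subseteq\emptyset\wedge N^-(y)\neq\emptyset$ is self-contradictory, since $N^-(y)\subseteq\emptyset$ forces $N^-(y)=\emptyset$; hence it too is never satisfied and (\ref{wdc}) holds vacuously. Thus $\emptyset$ is feasible for SSG and for SSGW. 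Next I would treat $A'=A$ under the hypothesis $s(A)\leq c$: the capacity constraint (\ref{cap}) is exactly this hypothesis, and in both (\ref{dc}) and (\ref{wdc}) the consequent ``$y\in A'$'' reads $y\in A$, which is true for every $y\in A$; so both implications hold regardless of their premises and $A$ is feasible for both problems.

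I expect no real obstacle, as the argument is a direct check of the definitions. The single point worth flagging is the empty-set case of the weak constraint (\ref{wdc}): feasibility of $\emptyset$ depends on the conjunct $N^-(y)\neq\emptyset$ in the premise. Were it omitted, any vertex $y$ with $N^-(y)=\emptyset$ would satisfy $N^-(y)\subseteq\emptyset$ and thereby spuriously force $y\in A'$; observing that the definition explicitly excludes this is essentially the only subtlety in the argument.
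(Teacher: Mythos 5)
Your proof is correct and matches the paper's (unstated but evidently intended) justification: the observation is left without proof precisely because it follows from the direct vacuous/trivial check of constraints (\ref{cap}), (\ref{dc}), and (\ref{wdc}) that you carry out. Your remark about the conjunct $N^-(y)\neq\emptyset$ in (\ref{wdc}) being essential for the feasibility of $\emptyset$ is a correct and worthwhile observation.
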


\medskip
In order to give equivalent characterizations for SSG and SSGW
we use binary integer programs.

\begin{remark}
To formulate SSG and SSGW as a binary integer program,
we introduce a binary variable $x_j\in \{0,1\}$ for each 
item $a_j\in A$, $1\leq j \leq n$. The idea is to have 
$x_j=1$ if and only if item $a_j\in A'$. 
\begin{enumerate}
\item  SSG corresponds
to maximizing  $\sum_{j=1}^n s_jx_j$   subject to $\sum_{j=1}^n s_j x_j \leq c$,
$x_i\leq x_j$ for every $j\in\{1,\ldots n\}$ and for every $a_i\in N^{-}(a_j)$, and $x_j\in\{0,1\}$
for every $j\in\{1,\ldots n\}$.

\item  SSGW corresponds
to maximizing  $\sum_{j=1}^n s_jx_j$   subject to   $\sum_{j=1}^n s_j x_j \leq c$, 
$\sum_{\{i \mid a_i\in N^{-}(a_j)\}} x_i \leq x_j + \ideg(a_j) - 1$
for every $j\in\{1,\ldots n\}$, and $x_j\in\{0,1\}$
for every $j\in\{1,\ldots n\}$.
\end{enumerate}
\end{remark}

The complexity for SSG and SSGW restricted to DAGs and oriented trees was considered
in \cite{GMT18}.

\begin{theorem}[\cite{GMT18}]
On DAGs SSG is strongly NP-hard and  SSGW is APX-hard.
\end{theorem}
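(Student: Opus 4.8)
The statement bundles two independent hardness results, so my plan is to prove each by a separate polynomial reduction from a classical problem: strong NP-hardness of SSG from \textsc{Clique}, and APX-hardness of SSGW from an APX-hard maximization (for concreteness \textsc{Maximum Independent Set} on graphs of bounded degree, equivalently \textsc{Vertex Cover}). In both cases I would build the digraph as a bipartite, depth-one DAG, so acyclicity is automatic, and use the constraints (\ref{dc})/(\ref{wdc}) to turn adjacency into a forcing rule.

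For SSG I would reduce from \textsc{Clique}. Given $H=(V_H,F_H)$ with $n_H=|V_H|$, $m_H=|F_H|$ and an integer $k$, create one \emph{vertex-item} $u_v$ of size $M$ for every $v\in V_H$ and one \emph{edge-item} $w_e$ of size $1$ for every $e\in F_H$, and add the arcs $w_e\to u_v$ for the two endpoints $v$ of each $e$. Choosing $M=\binom{n_H}{2}+1$ and capacity $c=kM+\binom{k}{2}$, the digraph constraint (\ref{dc}) forces that whenever an edge-item is selected, both of its endpoint vertex-items are selected. Hence a feasible $A'$ consists of $q$ edge-items together with $p$ vertex-items covering their endpoints, and its value is $pM+q$ with $q\le\binom{p}{2}$. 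Since $0\le q\le\binom{n_H}{2}<M$, the equation $pM+q=kM+\binom{k}{2}$ can hold only with $p=k$ and $q=\binom{k}{2}$, and then $q=\binom{k}{2}\le\binom{p}{2}=\binom{k}{2}$ forces all $\binom{k}{2}$ edges among the $k$ chosen vertices to be present, i.e.\ a clique. Thus $OPT=c$ iff $H$ has a $k$-clique. The decisive point is that all sizes and the capacity are bounded by a polynomial in $n_H$ ($M=\bigo(n_H^2)$, $c=\bigo(n_H^3)$); because the reduction still decides the NP-complete \textsc{Clique} with only polynomially large integers, it witnesses \emph{strong} NP-hardness rather than mere weak NP-hardness.

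For SSGW the weak constraint (\ref{wdc}) behaves like an AND-gate: a vertex is forced into $A'$ only once \emph{all} of its predecessors are present, so a sink $y$ is triggered exactly by the conjunction $\bigwedge_{z\in N^-(y)}[z\in A']$. I would exploit this to encode the packing/consistency constraints of the APX-hard source problem: each source-item represents a choice, and to every forbidden combination one attaches a gadget of AND-triggered items whose activation consumes capacity. The reduction must be tuned so that it is an \emph{L-reduction}, relating the two optima by an affine function and mapping solutions back, which then transports the non-existence of a PTAS.

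The part I expect to be genuinely delicate is exactly this gap, for a structural reason worth isolating. SSGW is a \emph{monotone} maximization with positive sizes, and every constraint-respecting subset within the capacity is feasible, so there is a permanent temptation to simply ``pad'' $A'$ with freely-includable items until $c$ is met --- which would make $OPT=c$ independent of the source instance and destroy any gap. The gadget must therefore be engineered so that values close to $c$ \emph{cannot} be reached by padding but only by activating AND-gates in a pattern that forces an overflow, while simultaneously respecting the input requirement $s_j\le c$: no single forced item may exceed the capacity, so a super-capacity penalty has to be realised as a \emph{cascade} of many small forced items along a directed chain (a length-one predecessor chain $b^1\to b^2\to\cdots$ propagates inclusion, so one trigger forces the whole chain). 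Calibrating the weights, the chain lengths, and $c$ so that $OPT$ equals an affine function of the source optimum --- and so that a near-optimal SSGW solution yields a correspondingly near-optimal independent set/vertex cover --- is the crux; once that calibration is fixed, APX-hardness of the source problem carries over directly.
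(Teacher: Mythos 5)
This theorem is not proved in the paper at all: it is imported verbatim from \cite{GMT18}, so there is no in-paper argument to compare against. Judging your attempt on its own merits, the two halves fare very differently.

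Your SSG half is a complete and correct proof. The reduction from \textsc{Clique} is sound: the depth-one DAG with arcs $w_e\to u_v$ makes constraint (\ref{dc}) force both endpoint items whenever an edge item is chosen, every feasible solution has value $pM+q$ with $q$ at most the number of edges induced by the $p$ chosen vertices, and since $0\le q\le\binom{n_H}{2}<M$ and $0\le\binom{k}{2}<M$ the identity $(p-k)M=\binom{k}{2}-q$ forces $p=k$, $q=\binom{k}{2}$, hence a $k$-clique; the converse is immediate, and $OPT\ge c$ is equivalent to $OPT=c$ because $c$ is the capacity. All sizes and the capacity are $\bigo(n_H^3)$, so the reduction indeed certifies \emph{strong} NP-hardness. (For the record, \cite{GMT18} also argue via a reduction with polynomially bounded weights; your route through \textsc{Clique} is a perfectly standard alternative.)

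The SSGW half, however, is not a proof but a research plan, and the part you yourself flag as ``genuinely delicate'' is precisely the part that is missing. You never exhibit the gadget, the weights, the chain lengths, the capacity, or the two maps and constants $\alpha,\beta$ of an L-reduction, and without them nothing is established. The obstacle you identify is real and is exactly why the construction cannot be waved away: since every item has positive size and sources are never forced by (\ref{wdc}), one must argue that near-capacity values are unreachable by padding with unconstrained items, and that an adversarial solver cannot deliberately \emph{omit} one predecessor of each AND-gate to disarm the cascade (inclusion of a forced chain is only triggered when \emph{all} predecessors are present, so a solver picking all but one source of a gadget pays nothing). Handling that second failure mode is the heart of any such reduction, and it is absent. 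As it stands, the APX-hardness claim is unproved; you would need to either carry out the calibration in full or fall back on citing the construction of \cite{GMT18}.
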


\subsection{Basic results}

Let $G=(V,E)$ be a digraph  and $x\in V$.
By $R_x$ we denote the vertices
of $V$ which are reachable from $x$
and by  $S_x$ we denote the vertices
of $V$ which are in the same strongly connected component as $x$.
Thus, it holds that $\{x\}\subseteq S_x \subseteq R_x \subseteq V$.

\begin{lemma}\label{le-reach-fr}
Let $A'$ be a feasible solution for SSG
on a digraph $G=(A,E)$ and $x\in A$. Then, it holds that
$x\in A'$ if and only if $R_x\subseteq A'$.
\end{lemma}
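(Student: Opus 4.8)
The plan is to prove the biconditional by treating the two implications separately, since the reverse direction is essentially immediate and the forward direction carries all the content. Note first that only the digraph constraint (\ref{dc}) and not the capacity constraint (\ref{cap}) is relevant here, so throughout I would use \emph{feasibility} only in the form of (\ref{dc}). For the direction ``$R_x\subseteq A'$ implies $x\in A'$'', I would simply invoke the convention recorded just above the statement, namely $\{x\}\subseteq R_x$: since $x$ is reachable from itself, $x\in R_x$, and hence $R_x\subseteq A'$ forces $x\in A'$. This half uses neither (\ref{dc}) nor (\ref{cap}).

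For the substantive direction ``$x\in A'$ implies $R_x\subseteq A'$'', I would fix an arbitrary $y\in R_x$ and reason along a witnessing directed path. By the definition of reachability there is a $\overrightarrow{P_k}$ with start vertex $x=v_1$ and end vertex $y=v_k$ (the degenerate case $y=x$ being covered by the base of the induction). I would then show by induction on $i$ that $v_i\in A'$ for every $i\in\{1,\dots,k\}$. The base case $v_1=x\in A'$ holds by hypothesis. For the inductive step, assuming $v_i\in A'$, the arc $(v_i,v_{i+1})\in E$ gives $v_i\in N^{-}(v_{i+1})$, so $N^{-}(v_{i+1})\cap A'\neq\emptyset$; applying (\ref{dc}), which $A'$ satisfies, yields $v_{i+1}\in A'$. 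Thus $y=v_k\in A'$, and since $y\in R_x$ was arbitrary we conclude $R_x\subseteq A'$.

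The argument is elementary, and I do not anticipate a genuine obstacle; the one point deserving care is that the membership must be propagated \emph{one arc at a time} along the path rather than by applying (\ref{dc}) to $y$ directly. Since a single predecessor in $A'$ already triggers (\ref{dc}), stepping through consecutive vertices $v_i\to v_{i+1}$ is precisely what makes the induction close. The only other thing to phrase cleanly is the appeal to the convention $x\in R_x$, which underlies both the base case and the reverse implication.
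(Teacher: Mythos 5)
Your proof is correct; the paper in fact states Lemma \ref{le-reach-fr} without any proof, and your argument (induction along a witnessing directed path using the digraph constraint (\ref{dc}) one arc at a time, plus the convention $x\in R_x$ for the trivial direction) is exactly the standard reasoning the paper implicitly relies on, e.g.\ in the proof of Lemma \ref{lem2}. No gaps.
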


\begin{lemma}\label{le-reach-sc}
Let $A'$ be a feasible solution for SSG
on a digraph
$G=(A,E)$ and $x\in A$. Then, it holds that
$x\in A'$ if and only if $S_x\subseteq A'$.
\end{lemma}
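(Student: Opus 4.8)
The plan is to obtain Lemma~\ref{le-reach-sc} as an almost immediate consequence of the preceding Lemma~\ref{le-reach-fr}, exploiting the containment $\{x\}\subseteq S_x\subseteq R_x$ recorded just above its statement. Since the equivalence ``$x\in A'$ if and only if $R_x\subseteq A'$'' is already available, I only have to squeeze $S_x$ between $\{x\}$ and $R_x$.

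For the forward direction I would assume $x\in A'$. By Lemma~\ref{le-reach-fr} this yields $R_x\subseteq A'$. Every vertex in the strongly connected component $S_x$ of $x$ is in particular reachable from $x$, so $S_x\subseteq R_x\subseteq A'$, which is exactly what is claimed. For the reverse direction I would assume $S_x\subseteq A'$; since $x$ trivially lies in its own strongly connected component, $x\in S_x$, and therefore $x\in A'$.

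The only point that deserves a word of justification --- and the closest thing to an ``obstacle'' here --- is the inclusion $S_x\subseteq R_x$: a vertex $y\in S_x$ is reachable from $x$ by the very definition of a strongly connected component, hence $y\in R_x$. Everything else is bookkeeping, so I expect no genuine difficulty, as the substantive work has effectively been done in Lemma~\ref{le-reach-fr}.

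If one preferred a self-contained argument not invoking Lemma~\ref{le-reach-fr}, the forward direction could instead be derived directly from the digraph constraint~(\ref{dc}): take $y\in S_x$, fix a directed path $x=v_0,v_1,\dots,v_k=y$ lying inside the component, and show $v_i\in A'$ by induction on $i$, using at each step that $v_{i-1}\in N^{-}(v_i)\cap A'\neq\emptyset$ forces $v_i\in A'$ by~(\ref{dc}). However, relying on Lemma~\ref{le-reach-fr} together with $S_x\subseteq R_x$ is the cleaner route, and that is the one I would present.
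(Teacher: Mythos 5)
Your argument is correct: the paper states this lemma without proof, treating it as immediate from Lemma~\ref{le-reach-fr} together with the containment chain $\{x\}\subseteq S_x\subseteq R_x$ that it records just beforehand, which is precisely the reasoning you spell out. Both directions are sound and no gap remains.
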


\begin{lemma}\label{cl1c}
SSG is solvable in  $\bigo(2^t\cdot (n+m))$ time on digraphs with $n$ 
vertices, $m$ arcs, and $t$ strongly connected components.
\end{lemma}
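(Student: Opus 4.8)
The plan is to use Lemmas \ref{le-reach-fr} and \ref{le-reach-sc} to reduce the feasible solutions of SSG to a purely combinatorial object on the strongly connected components, and then to solve the problem by brute-force enumeration over those components. Concretely, I would first prove the characterization: a set $A'\subseteq A$ satisfies the digraph constraint (\ref{dc}) if and only if $A'$ is a union of strongly connected components of $G$ that is \emph{closed under reachability}, i.e.\ whenever a component $C$ is contained in $A'$, every component reachable from $C$ is contained in $A'$ as well. The forward direction is immediate from the hypotheses: for $x\in A'$, Lemma \ref{le-reach-sc} gives $S_x\subseteq A'$, so $A'$ is a union of components, and Lemma \ref{le-reach-fr} gives $R_x\subseteq A'$, which is exactly reachability-closure. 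For the converse, suppose $A'$ is a reachability-closed union of components and $N^{-}(y)\cap A'\neq\emptyset$; choosing $x\in N^{-}(y)\cap A'$, the arc $(x,y)$ makes $y$ reachable from $x$, so $y\in R_x$, and since $A'$ contains the component of $x$ together with all components reachable from it, we get $y\in A'$. Hence (\ref{dc}) holds.

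Next I would describe the algorithm. Compute the strongly connected components $C_1,\dots,C_t$ together with the condensation DAG in $\bigo(n+m)$ time by a standard linear-time procedure, and precompute the total size $s(C_i)$ of each component. Reachability-closure of $\bigcup_{i\in S}C_i$ is equivalent, by induction along paths, to closure under the direct arcs of the condensation, so it can be tested by a single scan of the condensation arcs, checking that no arc leaves the index set $S$. The algorithm enumerates all $2^t$ index subsets $S\subseteq\{1,\dots,t\}$; for each it tests this closure condition, forms the total size $\sum_{i\in S}s(C_i)$, checks the capacity bound (\ref{cap}), and retains the feasible set of maximum total size. By the characterization the feasible solutions are exactly the retained sets, so the best one found is optimal.

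For the running time, the component decomposition costs $\bigo(n+m)$, and each of the $2^t$ subsets is handled in $\bigo(t+m)=\bigo(n+m)$ time, since the closure test scans at most $m$ condensation arcs and the size sum uses the precomputed values in $\bigo(t)$ time; this yields the stated $\bigo(2^t\cdot(n+m))$ bound. The only genuinely substantive step is the characterization, where both implications must be verified; once it is established, the enumeration and its timing analysis are routine.
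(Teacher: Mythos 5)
Your proposal is correct and follows essentially the same route as the paper: Lemma \ref{le-reach-sc} forces every feasible solution to be a union of strongly connected components, so one enumerates all $2^t$ such unions and checks each in $\bigo(n+m)$ time. The only difference is cosmetic --- you verify the digraph constraint via a reachability-closure test on the condensation, whereas the paper checks constraint (\ref{dc}) directly on $G$; both cost $\bigo(n+m)$ per candidate.
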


\begin{proof}
By Lemma \ref{le-reach-sc} for every feasible solution $A'$ and
every  strongly connected component $S$, it either holds that $S\subseteq A'$
or $S\cap A'=\emptyset$. Since all strongly connected components
are vertex disjoint, we can solve SSG by verifying $2^t$ possible
feasible solutions. Verifying the capacity constraint can be done
in $\bigo(n)$ time and verifying the digraph constraint can be done in
$\bigo(n+m)$ time. 
\end{proof}

In the condensation $con(G)$ of a digraph  $G=(V,E)$
every strongly connected component $C$ of $G$  
is represented by a vertex $v_C$ and there is
an arc between two
vertices $v_C$ and $v_{C'}$ if there exist $u\in C$ and $v\in C'$, such
that $(u,v)\in E$. For every digraph $G$ it holds that $con(G)$
is a directed acyclic graph.

In order to solve SSG it is useful to consider the condensation of the 
input digraph $G=(A,E)$. By defining the size
of a vertex  $v_C$ of $con(G)$ by the 
sum of the sizes of the vertices in $C$,
the following result has been shown in \cite[Lemma 2]{GMT18}.

\begin{lemma}[\cite{GMT18}]\label{lem1}
For a given instance of SSG on digraph $G$, there is a bijection between the 
feasible solutions (and thus the set of optimal
solutions) of SSG for $G$ and the 
feasible solutions (and thus the set of optimal
solutions) for $con(G)$.
\end{lemma}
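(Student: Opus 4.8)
The plan is to exhibit the bijection explicitly and verify that it respects both the constraints and the objective value. Recall that the condensation $con(G)$ has one vertex $v_C$ per strongly connected component $C$ of $G$, with $\text{size}(v_C)=\sum_{x\in C}s_x$. First I would define the map $\Phi$ sending a subset $A'\subseteq A$ to the subset $\Phi(A')=\{v_C \mid C\subseteq A'\}$ of vertices of $con(G)$, and the inverse map $\Psi$ sending a subset $B'$ of vertices of $con(G)$ to $\Psi(B')=\bigcup_{v_C\in B'}C$. The key structural fact making these well-defined as inverses of one another is Lemma~\ref{le-reach-sc}: for any feasible $A'$ and any strongly connected component $C=S_x$, either $C\subseteq A'$ or $C\cap A'=\emptyset$. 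Thus a feasible $A'$ is exactly a union of whole strongly connected components, which is precisely the data of a subset of vertices of $con(G)$, so $\Phi$ and $\Psi$ are mutually inverse bijections between \emph{all} unions-of-components in $G$ and \emph{all} subsets in $con(G)$.

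Next I would check that feasibility is preserved in both directions. For the capacity constraint this is immediate: $s(\Psi(B'))=\sum_{v_C\in B'}\sum_{x\in C}s_x=\sum_{v_C\in B'}\text{size}(v_C)$, so $A'$ satisfies (\ref{cap}) in $G$ if and only if $\Phi(A')$ satisfies the capacity constraint in $con(G)$. The same identity shows the objective values agree, which gives the parenthetical claim that optimal solutions correspond to optimal solutions. The digraph constraint (\ref{dc}) requires a little more care: I would argue that $A'$ satisfies (\ref{dc}) in $G$ if and only if $B'=\Phi(A')$ satisfies (\ref{dc}) in $con(G)$. In one direction, suppose $v_C\in B'$ and $(v_C,v_{C'})$ is an arc of $con(G)$; by definition of the condensation there exist $u\in C$, $v\in C'$ with $(u,v)\in E$, and since $C\subseteq A'$ we have $u\in A'$, so $v\in N^-(v)\cap A'$ forces $v\in A'$ by (\ref{dc}), whence $C'\cap A'\neq\emptyset$ and therefore $C'\subseteq A'$ by Lemma~\ref{le-reach-sc}, i.e.\ $v_{C'}\in B'$. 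The converse direction is analogous, tracing an arc of $G$ crossing between two components back to an arc of $con(G)$ and using that membership in $A'$ is determined component-wise.

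The main obstacle I anticipate is handling arcs of $G$ that lie \emph{inside} a single strongly connected component rather than between components. Such an arc $(u,v)$ with $u,v\in C$ is collapsed in $con(G)$ and has no counterpart there, so one must confirm that the digraph constraint it imposes is automatically consistent with the component-wise description of feasible sets. This is exactly where Lemma~\ref{le-reach-sc} does the work: since any feasible $A'$ contains $C$ entirely or not at all, an internal arc can never be the reason an otherwise component-saturated set fails (\ref{dc}), because its endpoints are always both in or both out of $A'$. Once this is isolated, the remaining verifications are routine set-membership arguments, and I would present the proof as: (i) feasible sets are unions of components, (ii) $\Phi,\Psi$ are mutually inverse bijections on such sets, (iii) capacity and objective are preserved by the size-summation identity, and (iv) the digraph constraint is preserved by the inter-component arc correspondence, with internal arcs handled by the component-saturation property.
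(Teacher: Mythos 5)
Your proof is correct. The paper itself does not prove this lemma --- it is imported verbatim from \cite[Lemma~2]{GMT18} --- so there is no in-paper argument to compare against, but your construction is exactly the standard one: Lemma~\ref{le-reach-sc} shows feasible solutions are unions of whole strongly connected components, the maps $\Phi$ and $\Psi$ are mutually inverse on such sets, the size identity transfers the capacity constraint and the objective, and the inter-component arc correspondence transfers the digraph constraint (with internal arcs vacuous by component saturation). The only blemish is a typo in the forward direction of the digraph-constraint check: ``so $v\in N^-(v)\cap A'$'' should read ``so $u\in N^-(v)\cap A'$''; the surrounding logic is unaffected.
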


Thus, in order to solve SSG we can restrict ourselves to
DAGs by computing the condensation of the input
graph in a first step. 
%
The next example shows that Lemma \ref{lem1} does not hold
for SSGW.

\begin{figure}[hbtp]
\centering
\parbox[b]{60mm}{
\centerline{\epsfxsize=30mm \epsfbox{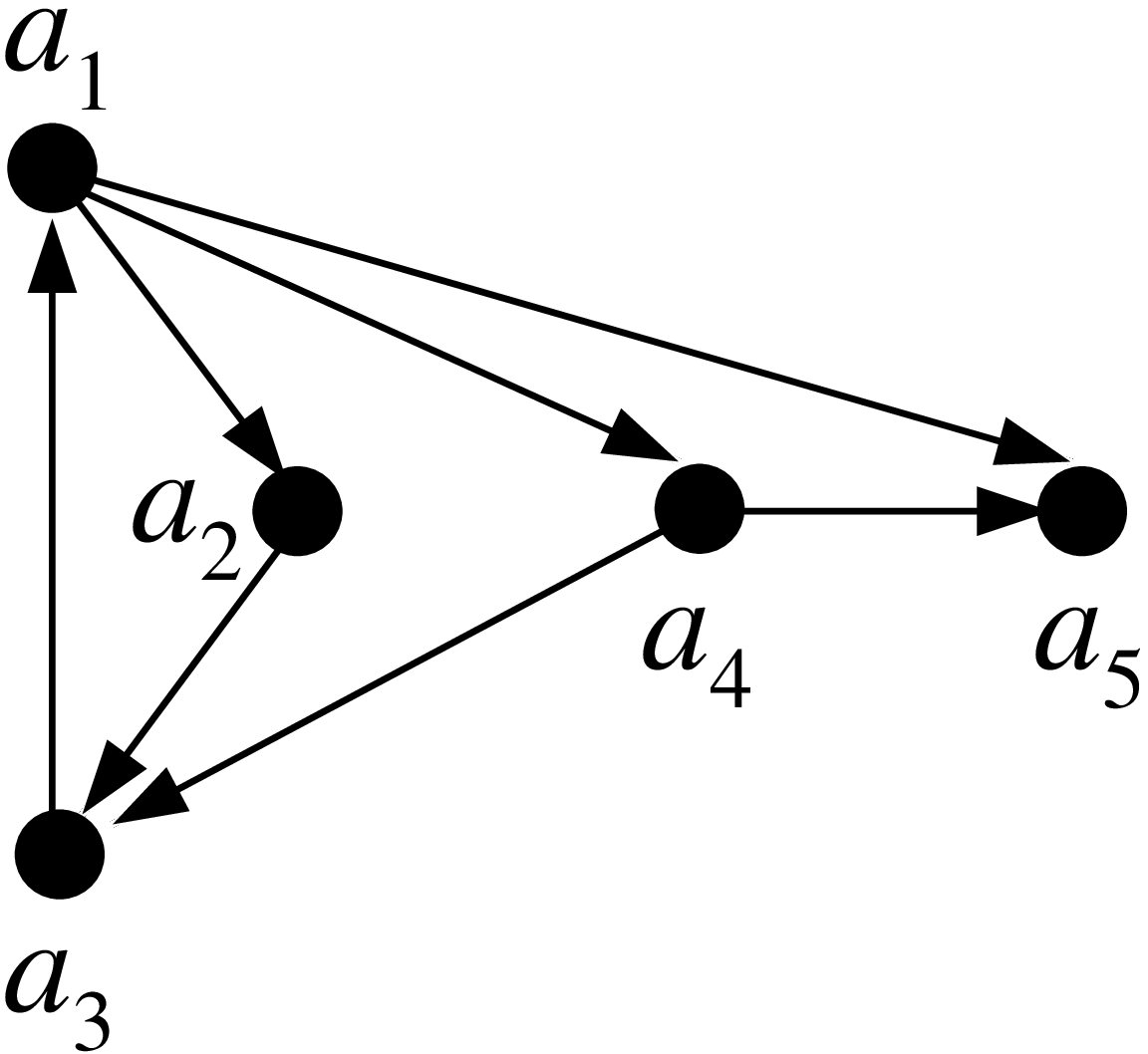}}
\caption{Digraph  in Example \ref{ex-ssgw-sc}.}
\label{F00}}
\hspace{1cm}
\parbox[b]{60mm}{
\centerline{\epsfxsize=30mm \epsfbox{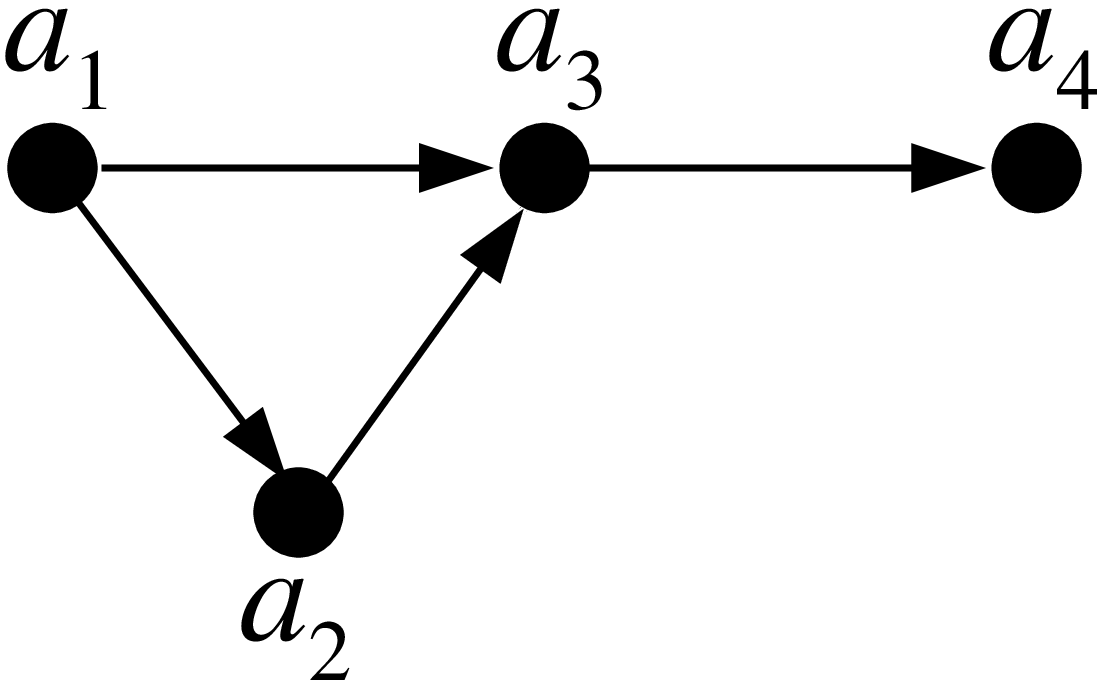}}
\caption{Digraph  in Example \ref{ex-ssgw-tr}.}
\label{F01}}
\end{figure}

\begin{example}\label{ex-ssgw-sc} We consider the digraph $G$ in
Fig.~\ref{F00}. For SSGW with $c=2$ and
all sizes $s_j=1$ we have among others $\{a_4\}$ as a feasible solution.
Since $con(G)$ is a path of length one, formally
$$con(G)=(\{v_{\{a_1,a_2,a_3,a_4\}},v_{\{a_5\}}\},\{(v_{\{a_1,a_2,a_3,a_4\}},v_{\{a_5\}})\}),$$
the only feasible solution is $\{a_5\}$, which implies that
$\{a_4\}$ is not a feasible solution for SSGW using
$con(G)$.
\end{example}

The transitive closure $td(G)$ of a digraph $G$ has the same
vertex set as $G$ and for two distinct vertices $u,v$ there is
an arc $(u,v)$ in $td(G)$ if and only if there is
a directed path from $u$ to $v$ in $G$. The transitive reduction $tr(G)$ of
a digraph $G$ has the same
vertex set as $G$ and as few arcs of $G$ as possible, such that
$G$ and $tr(G)$ have the same transitive closure.
The transitive closure is unique for every digraph.
The transitive reduction is unique for directed acyclic graphs.
However, for arbitrary digraphs the transitive reduction is not unique.
The time complexity of the best known algorithm
for finding the transitive reduction of a graph is the same as the time
to compute the transitive closure of a graph or to perform Boolean matrix
multiplication \cite{AGU72}.
The best known algorithm  to perform Boolean matrix multiplication
has running  time $\bigo(n^{2.3729})$ by \cite{Leg14}.

\begin{lemma} \label{lem2}
For a given instance of SSG on a directed acyclic
graph $G$, the set of feasible solutions and thus the set of
optimal solutions of SSG
for $G$ and for $tr(G)$ are equal.
\end{lemma}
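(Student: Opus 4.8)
The plan is to show that the digraph constraint (\ref{dc}) imposes exactly the same feasibility requirement on a subset $A'$ whether we read it off the arcs of $G$ or off the arcs of its transitive reduction $tr(G)$. Since both digraphs share the same vertex set $A$ and the capacity constraint (\ref{cap}) depends only on the sizes $s_j$ of the chosen vertices, the objective and the capacity test are identical in the two instances. Hence it suffices to prove that $A'$ satisfies (\ref{dc}) for $G$ if and only if it satisfies (\ref{dc}) for $tr(G)$; the equality of feasible sets, and therefore of optimal solutions, follows immediately.

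The key observation I would use is that on a directed acyclic graph the constraint (\ref{dc}) is equivalent to the closure condition of Lemma~\ref{le-reach-fr}: a feasible $A'$ contains $x$ if and only if $R_x\subseteq A'$, where $R_x$ is the set of vertices reachable from $x$. So my plan is to argue that $G$ and $tr(G)$ induce the same reachability relation. This is precisely the defining property of the transitive reduction: by definition $tr(G)$ has the same transitive closure as $G$, and on a DAG the transitive reduction is unique. Two vertices $u,v$ satisfy ``$v$ reachable from $u$'' in $G$ exactly when $(u,v)$ is an arc of $td(G)$, and since $td(G)=td(tr(G))$, the same pairs are reachable in $tr(G)$. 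Consequently $R_x$ computed in $G$ equals $R_x$ computed in $tr(G)$ for every $x\in A$.

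I would then combine these facts. Fix any $A'\subseteq A$. By Lemma~\ref{le-reach-fr} applied to $G$, $A'$ is feasible for SSG on $G$ if and only if for every $x\in A'$ we have $R_x^{G}\subseteq A'$; applied to $tr(G)$, feasibility is equivalent to $R_x^{tr(G)}\subseteq A'$ for every $x\in A'$. Since we have just established $R_x^{G}=R_x^{tr(G)}$ for all $x$, these two conditions coincide, so $A'$ is feasible for $G$ if and only if it is feasible for $tr(G)$. As the two problems have identical item sets, sizes, capacity, and now identical feasible regions, they maximize the same objective over the same set and thus have the same optimal solutions.

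The step I expect to require the most care is the equivalence between the arc-by-arc constraint (\ref{dc}) and the reachability-closure formulation, i.e.\ the clean invocation of Lemma~\ref{le-reach-fr}, which itself relies on acyclicity; this is exactly why the hypothesis that $G$ is a DAG is essential and why the uniqueness of the transitive reduction on DAGs is needed. One should note that the analogue fails for SSGW, as Example~\ref{ex-ssgw-sc} already indicates for the condensation, because the weak constraint (\ref{wdc}) depends on the full predecessor set $N^{-}(y)$ and not merely on reachability, so deleting transitively implied arcs can genuinely change which subsets are feasible.
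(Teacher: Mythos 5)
Your proof is correct and follows essentially the same route as the paper: both arguments reduce feasibility under (\ref{dc}) to closure under reachability via Lemma~\ref{le-reach-fr} and then use that $tr(G)$ has by definition the same transitive closure (hence the same sets $R_x$) as $G$; the paper merely obtains the forward direction more cheaply from the observation that $tr(G)$ is a subdigraph of $G$. One small correction: the equivalence between (\ref{dc}) and reachability-closure does not rely on acyclicity --- Lemma~\ref{le-reach-fr} is stated for arbitrary digraphs --- the DAG hypothesis is needed only so that the transitive reduction $tr(G)$ is unique and well-defined.
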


\begin{proof}
Since a transitive reduction is a subdigraph of the given graph,
every feasible solution $A'$ for $G$ is also a feasible solution
for the transitive reduction $tr(G)$.
To show the reverse direction,
let $A'$ be a feasible solution for $tr(G)$. By the definition
of $tr(G)$ we know that for every vertex $v$, every predecessor $u$ of $v$
in $G$ is also a predecessor of $v$ in $tr(G)$ or there is a path from
$u$ to $v$ in $tr(G)$. By Lemma \ref{le-reach-fr}
we know that $A'$ is also a feasible solution for $G$. 
\end{proof}

Thus, in order to solve SSG we can restrict ourselves to transitive reductions.
The next example shows that Lemma \ref{lem2} does not hold
for SSGW.

\begin{example}\label{ex-ssgw-tr}
We consider the digraph $G$ in Fig.~\ref{F01}. For SSGW with $c=2$ and
all sizes $s_j=1$ we have among others $\{a_2\}$ as a feasible solution.
Since $tr(G)$ is a path, formally
$$tr(G)=(\{a_1,a_2,a_3,a_4\},\{(a_1,a_2),(a_2,a_3),(a_3,a_4)\}),$$
$a_2$ implies by (\ref{wdc}) that $a_3$ and $a_4$
must be part of the  solution, which implies that
$\{a_2\}$ is not a feasible solution for SSGW using
$tr(G)$.
\end{example}

In the correctness proofs of our algorithms in Sections \ref{sol-co}
and \ref{sol-spd}
we will use the following lemmata.

\begin{lemma}\label{induced-sd}
Let $G=(V_G,E_G)$ be a digraph and let $H=(V_H,E_H)$ be an induced
subdigraph of $G$. If $A'$ is a feasible solution for SSG on $G$,
then $A'\cap V_H$ is  a feasible solution for SSG on $H$. 
\end{lemma}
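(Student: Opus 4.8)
The plan is to set $B = A' \cap V_H$ and verify directly that $B$ satisfies both defining conditions of SSG on $H$: the capacity constraint (\ref{cap}) and the digraph constraint (\ref{dc}), now read relative to $H$. Since these are the only requirements for feasibility, establishing both is enough.

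First I would dispose of the capacity constraint, which is immediate. As $B \subseteq A'$ and all sizes are positive integers, we get $s(B) = \sum_{a_j\in B} s_j \leq \sum_{a_j\in A'} s_j = s(A') \leq c$, where the final inequality holds because $A'$ is feasible for $G$. So (\ref{cap}) is inherited by $B$ without any use of the graph structure.

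The substantive step is the digraph constraint on $H$. I would take an arbitrary $y\in V_H$ with $N_H^-(y)\cap B \neq \emptyset$ and show $y\in B$. Pick any $u\in N_H^-(y)\cap B$. Then $(u,y)\in E_H$, and since $H$ is a subdigraph of $G$ we have $E_H\subseteq E_G$, so $(u,y)\in E_G$, i.e.\ $u\in N_G^-(y)$. Moreover $u\in B\subseteq A'$, hence $N_G^-(y)\cap A' \neq \emptyset$. Because $y\in V_H\subseteq V_G$ and $A'$ satisfies (\ref{dc}) on $G$, this forces $y\in A'$. Combined with $y\in V_H$, we conclude $y\in A'\cap V_H = B$, which is exactly what (\ref{dc}) demands on $H$.

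I expect no genuine obstacle here; the one point to keep straight is the direction of the inclusion $N_H^-(y)\subseteq N_G^-(y)$. Possessing a predecessor \emph{inside} $H$ is a stronger hypothesis than possessing a predecessor in $G$, so the global feasibility of $A'$ transfers cleanly to the restricted instance. It is worth noting that the argument uses only $E_H\subseteq E_G$, so it goes through for any subdigraph $H$; the ``induced'' hypothesis is not actually needed for this direction, though it is the form in which the lemma will be applied in Sections \ref{sol-co} and \ref{sol-spd}.
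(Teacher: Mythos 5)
Your proof is correct and follows essentially the same route as the paper's: both reduce the digraph constraint on $H$ to that on $G$ via the inclusion $N_H^{-}(y)\subseteq N_G^{-}(y)$ for $y\in V_H$, the paper phrasing this as a restriction of quantified formulas while you argue element-wise. Your explicit check of the capacity constraint and the observation that only $E_H\subseteq E_G$ (not inducedness) is needed are correct additions but do not change the argument.
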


\begin{proof}
 If $A'$ is a feasible solution for SSG on $G$, then it holds that
$$
\forall y\in V_G \left(N_G^{-}(y)\cap A' \neq   \emptyset  \right) \Rightarrow  y \in A'.
$$
By restricting to  $y$ having no predecessors  from $V_G\setminus V_H$, we obtain
%
%
$$
\forall y\in V_G \left(N_G^{-}(y)\cap A'\cap V_H \neq   \emptyset   \right) \Rightarrow  y \in A'.
$$
By restricting $y$ to $V_H\subseteq V_G$ we obtain 
$$
\forall y\in V_H \left(N_H^{-}(y)\cap A'\cap V_H \neq   \emptyset   \right) \Rightarrow  y \in A'\cap V_H,
$$
i.e., $A'\cap V_H$ is  a feasible solution for SSG on $H$.

\end{proof}

The reverse direction of Lemma \ref{induced-sd} does not hold, since
vertices with predecessors in $A'\cap (V_G\setminus V_H)$ are not 
considered by the feasible solutions for  SSG on $H$. 
By considering  the induced subdigraph $H=(\{a_2,a_3,a_4\},\{(a_2,a_3),(a_3,a_4)\})$ of 
digraph $G$ in Example \ref{ex-ssgw-tr} we observe that 
Lemma \ref{induced-sd} does not hold for SSGW.
%
%
%
%
%
%
Next, we give two weaker forms of Lemma \ref{induced-sd}
which also hold for SSGW.  

\begin{lemma}\label{induced-sd3}
Let $G=(V_G,E_G)$ be a digraph and let $H=(V_H,E_H)$ be a weakly connected
component of $G$. If $A'$ is a feasible solution for SSGW on $G$,
then $A'\cap V_H$ is  a feasible solution for SSGW on $H$. 
\end{lemma}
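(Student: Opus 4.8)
The plan is to exploit the crucial structural fact that a weakly connected component $H$ of $G$ has \emph{no arcs leaving or entering it}: for every vertex $y \in V_H$, all of its $G$-predecessors already lie inside $V_H$, i.e.\ $N_G^{-}(y) = N_H^{-}(y) \subseteq V_H$. This is exactly the property that failed in the counterexample following Lemma~\ref{induced-sd}, where $H$ was an arbitrary induced subdigraph that chopped through incoming arcs. Once we record this observation, the argument becomes a direct verification of the weak digraph constraint~(\ref{wdc}).

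Concretely, let $A'$ be a feasible solution for SSGW on $G$, and set $B := A' \cap V_H$. The capacity constraint~(\ref{cap}) for $B$ is immediate, since $s(B) \le s(A') \le c$ as all sizes are positive. For the weak digraph constraint, I would take an arbitrary $y \in V_H$ with $\emptyset \neq N_H^{-}(y) \subseteq B$ and show $y \in B$. Since $H$ is a weakly connected component, $N_H^{-}(y) = N_G^{-}(y)$, so the hypothesis reads $\emptyset \neq N_G^{-}(y) \subseteq B \subseteq A'$. Feasibility of $A'$ on $G$ via~(\ref{wdc}) then forces $y \in A'$; and because $y \in V_H$ by assumption, we conclude $y \in A' \cap V_H = B$. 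This establishes~(\ref{wdc}) for $B$ on $H$, so $A' \cap V_H$ is feasible for SSGW on $H$.

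The step carrying all the weight is the equality $N_H^{-}(y) = N_G^{-}(y)$ for $y \in V_H$, which is precisely what distinguishes a weakly connected component from a general induced subdigraph. The inclusion $N_H^{-}(y) \subseteq N_G^{-}(y)$ is trivial; the reverse inclusion is where weak connectivity is used: if some predecessor $u$ of $y$ lay outside $V_H$, then the arc $(u,y)$ together with the connectivity of $H$ would merge $u$ into the same weakly connected component, contradicting maximality of $H$. I expect this to be the only genuinely substantive point — everything else is a mechanical unwinding of definitions — and it is worth stating explicitly, since it simultaneously explains why the lemma holds for SSGW while Lemma~\ref{induced-sd} does not.
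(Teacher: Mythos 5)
Your proof is correct and follows essentially the same route as the paper's: both reduce the claim to the key identity $N_H^{-}(y)=N_G^{-}(y)$ for $y\in V_H$, which holds precisely because $H$ is a weakly connected component, and then verify~(\ref{wdc}) directly. Your explicit justification of that identity via maximality of the component is a nice addition the paper leaves implicit.
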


\begin{proof}
 If $A'$ is a feasible solution for SSGW on $G$, then it holds that
$$
 \forall y\in V_G  \left(N_G^{-}(y)\subseteq A' \wedge N_G^{-}(y) \neq \emptyset \right) \Rightarrow y \in A'. 
$$
By restricting $y$ to $V_H\subseteq V_G$ we obtain 
$$
 \forall y\in V_H  \left(N_G^{-}(y)\subseteq A' \wedge N_G^{-}(y) \neq \emptyset \right) \Rightarrow y \in A'\cap V_H. 
$$
Since $H$ is a weakly connected
component of $G$ for all $y\in V_H$ it holds that $N_H^{-}(y)= N_G^{-}(y)$ such that
$$
 \forall y\in V_H  \left(N_H^{-}(y)\subseteq A'\cap V_H\wedge N_H^{-}(y) \neq \emptyset \right) \Rightarrow y \in A'\cap V_H,
$$
i.e., $A'\cap V_H$ is  a feasible solution for SSGW on $H$.
\end{proof}

\begin{lemma}\label{induced-sd4}
Let $G=(V_G,E_G)$ be a digraph and let $H=(V_H,E_H)$ be an induced 
subdigraph of $G$, such that no non-source of $H$ has a predecessor
in $V_G \setminus V_H$.
If $A'$ is a feasible solution for SSGW on $G$,
then $A'\cap V_H$ is  a feasible solution for SSGW on $H$. 
\end{lemma}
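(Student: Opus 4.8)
The plan is to mirror the structure of the proof of Lemma \ref{induced-sd3}, since the statement of Lemma \ref{induced-sd4} is a direct generalization: Lemma \ref{induced-sd3} is the special case where $H$ is a weakly connected component (in which case $H$ trivially has no vertex with a predecessor outside $V_H$). First I would write down the hypothesis that $A'$ is feasible for SSGW on $G$, namely
$$
\forall y\in V_G \left(N_G^{-}(y)\subseteq A' \wedge N_G^{-}(y) \neq \emptyset \right) \Rightarrow y \in A'.
$$
Restricting the universally quantified $y$ to range over $V_H\subseteq V_G$ immediately yields
$$
\forall y\in V_H \left(N_G^{-}(y)\subseteq A' \wedge N_G^{-}(y) \neq \emptyset \right) \Rightarrow y \in A'\cap V_H,
$$
where on the right I can intersect with $V_H$ because $y\in V_H$.

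The key step is to replace $N_G^{-}(y)$ with $N_H^{-}(y)$ in the antecedent, and here is where the structural hypothesis on $H$ enters. For any $y\in V_H$ with $N_H^{-}(y)\neq\emptyset$, the vertex $y$ is a non-source of $H$, so by assumption $y$ has no predecessor in $V_G\setminus V_H$; hence every predecessor of $y$ in $G$ already lies in $V_H$, giving $N_G^{-}(y)=N_H^{-}(y)$. Thus for every such $y$ the two conditions $N_G^{-}(y)\subseteq A'$ and $N_H^{-}(y)\subseteq A'\cap V_H$ coincide (using $N_H^-(y)\subseteq V_H$). Substituting, I obtain
$$
\forall y\in V_H \left(N_H^{-}(y)\subseteq A'\cap V_H \wedge N_H^{-}(y) \neq \emptyset \right) \Rightarrow y \in A'\cap V_H,
$$
which is precisely the statement that $A'\cap V_H$ is a feasible solution for SSGW on $H$.

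The only subtlety worth flagging is the handling of source vertices of $H$: the implication $N_H^-(y)=N_G^-(y)$ need not hold for a source $y$ of $H$ (it could have predecessors outside $V_H$), but such $y$ have $N_H^-(y)=\emptyset$ and so fail the premise $N_H^-(y)\neq\emptyset$, making the weak digraph constraint vacuous for them. The hypothesis that \emph{no non-source of $H$ has a predecessor in $V_G\setminus V_H$} is exactly what is needed to make the argument go through for the remaining, non-source vertices. I do not expect any genuine obstacle here; the proof is a short quantifier-restriction argument identical in spirit to Lemma \ref{induced-sd3}, with the weaker component hypothesis replaced by the precise source/non-source condition.
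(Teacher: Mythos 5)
Your proof is correct and follows essentially the same route as the paper's: restrict the quantifier from $V_G$ to $V_H$, then use the hypothesis that no non-source of $H$ has a predecessor outside $V_H$ to conclude $N_G^{-}(y)=N_H^{-}(y)$ whenever $N_H^{-}(y)\neq\emptyset$, which lets you replace $N_G^{-}(y)\subseteq A'$ by $N_H^{-}(y)\subseteq A'\cap V_H$. Your explicit remark that sources of $H$ are handled vacuously is exactly the point the paper's intermediate step (adding the conjunct $N_H^{-}(y)\neq\emptyset$) encodes.
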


\begin{proof}
 If $A'$ is a feasible solution for SSGW on $G$, then it holds that
$$
 \forall y\in V_G  \left(N_G^{-}(y)\subseteq A' \wedge N_G^{-}(y) \neq \emptyset \right) \Rightarrow y \in A'. 
$$
By restricting $y$ to $V_H\subseteq V_G$ we obtain that 
$$
 \forall y\in V_H  \left(N_G^{-}(y)\subseteq A' \wedge N_G^{-}(y) \neq \emptyset \right) \Rightarrow y \in A'\cap V_H. 
$$
By restricting $y$ to be a non-source of $H$, we obtain
%
$$
 \forall y\in V_H  \left( N_G^{-}(y)\subseteq A' \wedge N_G^{-}(y) \neq \emptyset  \wedge  N_H^{-}(y)\neq \emptyset  \right) \Rightarrow y \in A'\cap V_H. 
$$
Since  no non-source of $H$ has a predecessor
in $V_G \setminus V_H$, we obtain
$$
 \forall y\in V_H  \left( N_G^{-}(y)\subseteq A'  \wedge  N_H^{-}(y)\neq \emptyset \right) \Rightarrow y \in A'\cap V_H. 
$$
Then it holds in $H$ that
$$
 \forall y\in V_H  \left( N_H^{-}(y)\subseteq A' \cap V_H \wedge  N_H^{-}(y)\neq \emptyset  \right) \Rightarrow y \in A'\cap V_H, 
$$
i.e., $A'\cap V_H$ is  a feasible solution for SSGW on $H$.
\end{proof}

\begin{lemma}\label{induced-sd5}
Let $G=(V_G,E_G)$ be a digraph such that there is a 2-partition $(V_1,V_2)$ of $V_G$ with
$\{(u,v) \mid u\in V_1, v\in V_2\}\subseteq E_G$.
If $A'$ is a feasible solution for SSGW on $G$ such that $V_1\subseteq A'$,
then $A'\cap V_2$ is  a feasible solution for SSGW on $G[V_2]$. 
\end{lemma}

\begin{proof}
 If $A'$ is a feasible solution for SSGW on $G$, then it holds that
$$
 \forall y\in V_G  \left(N_G^{-}(y)\subseteq A' \wedge N_G^{-}(y) \neq \emptyset \right) \Rightarrow y \in A'. 
$$
By restricting $y$ to $V_2\subseteq V_G$ we obtain 
$$
 \forall y\in V_2  \left(N_G^{-}(y)\subseteq A' \wedge N_G^{-}(y) \neq \emptyset \right) \Rightarrow y \in A'\cap V_2. 
$$
Since $V_1\subseteq A'$ it holds that
$$
 \forall y\in V_2  \left(N_G^{-}(y)\subseteq V_1 \cup A'\cap V_2 \wedge N_G^{-}(y) \neq \emptyset \right) \Rightarrow y \in A'\cap V_2. 
$$
Thus, it holds that
$$
 \forall y\in V_2  \left(N_{G[V_2]}^{-}(y)\subseteq  A'\cap V_2 \wedge (N_{G[V_1]}^{-}(y) \cup N_{G[V_1]}^{-}(y) )\neq \emptyset \right) \Rightarrow y \in A'\cap V_2. 
$$
Since  $V_1= N_{G[V_1]}^{-}(y)\neq \emptyset$ it holds that
$$
 \forall y\in V_2  \left(N_{G[V_2]}^{-}(y)\subseteq  A'\cap V_2  \right) \Rightarrow y \in A'\cap V_2. 
$$
By the properties of the logical implication it also holds that
$$
 \forall y\in V_2  \left(N_{G[V_2]}^{-}(y)\subseteq A'\cap V_2 \wedge N_{G[V_2]}^{-}(y) \neq \emptyset \right) \Rightarrow y \in A'\cap V_2, 
$$
i.e., $A'\cap V_2$ is  a feasible solution for SSGW on $G[V_2]$.
\end{proof}

Further, we will use the following result for solutions of SSP on digraphs
with sizes assigned to the vertices.

\medskip
\begin{observation}\label{induced-sd2}
Let $G=(V_G,E_G)$ be a digraph with sizes assigned to the vertices and 
let $H=(V_H,E_H)$ be an induced
subdigraph of $G$. If $A'\subseteq V_G$ satisfies (\ref{cap}),
then $A'\cap V_H$ satisfies (\ref{cap}). 
\end{observation}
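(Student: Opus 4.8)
The plan is to exploit the monotonicity of the size function $s(\cdot)$ under set inclusion, which follows immediately from the positivity of the item sizes; the digraph structure of $G$ and $H$ plays no role at all here, so this is a purely set-theoretic statement about sums of nonnegative numbers.

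First I would observe that intersecting with $V_H$ can only remove elements, so $A'\cap V_H\subseteq A'$. Then, recalling that $s(B)=\sum_{a_j\in B}s_j$ and that every size $s_j$ is a positive integer (hence nonnegative), dropping items from a set can never increase its total size. Concretely, since $A'\cap V_H\subseteq A'$,
$$
s(A'\cap V_H)=\sum_{a_j\in A'\cap V_H}s_j \;\leq\; \sum_{a_j\in A'}s_j \;=\; s(A').
$$
Combining this inequality with the hypothesis $s(A')\leq c$ coming from (\ref{cap}) gives $s(A'\cap V_H)\leq c$, which is exactly the assertion that $A'\cap V_H$ satisfies (\ref{cap}).

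There is no genuine obstacle in this argument: the single property required is the nonnegativity of the sizes $s_j$, which is guaranteed by the standing assumption that all sizes are positive integers. I would expect the author's proof to consist of essentially this one-line inequality, possibly stated even more briefly.
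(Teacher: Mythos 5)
Your argument is correct and is the intended one: monotonicity of $s(\cdot)$ under inclusion, which holds because all sizes are positive, immediately gives $s(A'\cap V_H)\leq s(A')\leq c$. The paper states this as an Observation without any written proof, so your one-line inequality matches (indeed supplies) exactly what the authors had in mind.
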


\section{SSG and SSGW on directed co-graphs}\label{sol-co}

\subsection{Directed co-graphs}

Let $G_1=(V_1,E_1)$ and $G_2=(V_2,E_2)$ be two vertex-disjoint digraphs.
The following operations
have already been considered  by Bechet et al.\ in \cite{BGR97}.
\begin{itemize}
\item
The {\em disjoint union} of $G_1$ and $G_2$,
denoted by $G_1 \oplus  G_2$,
is the digraph with vertex set $V_1\cup V_2$ and
arc set $E_1\cup  E_2$.

\item
The {\em series composition} of $G_1$ and $G_2$,
denoted by $G_1\otimes  G_2$,
is defined by their disjoint union plus all possible arcs between
vertices of $G_1$ and $G_2$.

\item
The {\em order composition} of $G_1$ and $G_2$,
denoted by $G_1\oslash G_2$,
is defined by their disjoint union plus all possible arcs from
vertices of $G_1$ to vertices of $G_2$.
\end{itemize}

We recall the definition of directed co-graphs from \cite{CP06}.\footnote{In \cite{CP06} 
directed co-graphs are defined by disjoint union, series composition, and order composition
combining an arbitrary number of digraphs. We restrict ourselves to binary operations, which is
possible since these operations are associative.}

\begin{definition}[Directed co-graphs, \cite{CP06}]\label{dcog}
The class of {\em directed co-graphs} is recursively defined as follows.
\begin{enumerate}
\item Every digraph with a single vertex $(\{v\},\emptyset)$,
denoted by $v$, is a {\em directed co-graph}.

\item If  $G_1$ and $G_2$  are vertex-disjoint directed co-graphs, then
\begin{enumerate}
\item
the disjoint union
$G_1\oplus G_2$,

\item
the series composition
$G_1 \otimes G_2$, and
\item
the order composition
$G_1\oslash G_2$  are {\em directed co-graphs}.
\end{enumerate}
\end{enumerate}
The class of directed co-graphs is denoted by $\DC$.
\end{definition}

Every expression $X$ using  the four operations of Definition  \ref{dcog} 
is called a {\em di-co-expression} and
$\g(X)$ is the defined digraph.

\begin{example}\label{ex-dico}
The di-co-expression 
\begin{equation}
X=((v_1\oplus v_3) \oslash(v_2 \otimes v_4)) \label{eq-ori-c4}
\end{equation}
defines $\g(X)$ shown in Fig.~\ref{F02}.
\end{example}

\begin{figure}[hbtp]
\centering
\parbox[b]{60mm}{
\centerline{\epsfxsize=30mm \epsfbox{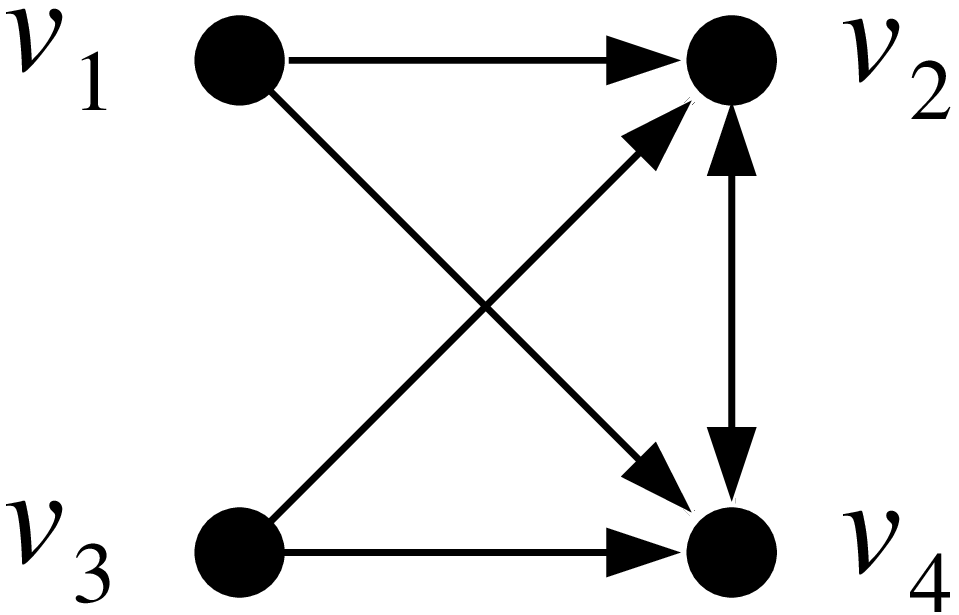}}
\caption{Digraph  in Example \ref{ex-dico}.}
\label{F02}}
\hspace{1cm}
\parbox[b]{60mm}{
\centerline{\epsfxsize=30mm \epsfbox{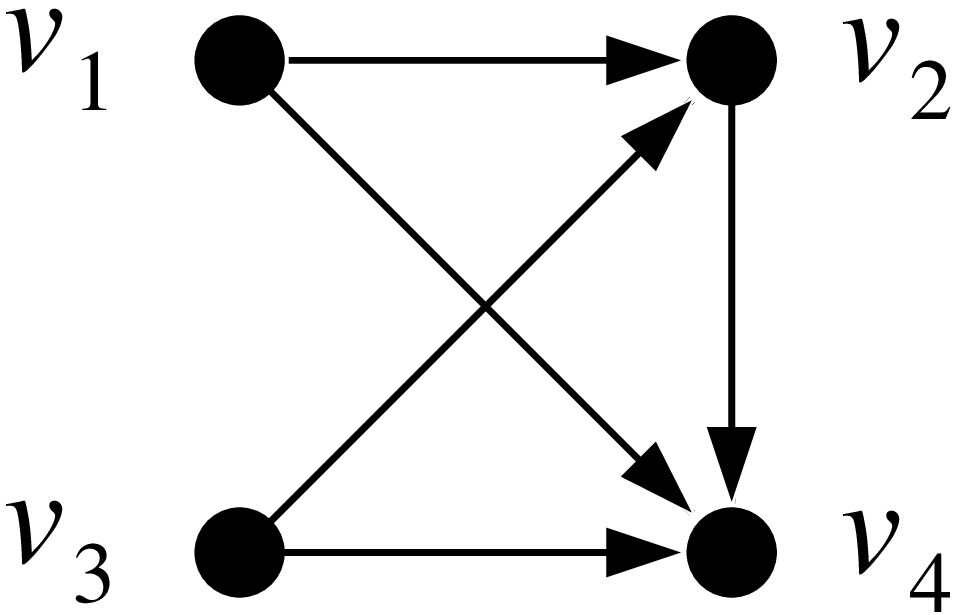}}
\caption{Digraph in Example \ref{ex-orico}.
}
\label{F03}}
\end{figure}

As undirected co-graphs can be characterized by forbidding the $P_4$,
directed co-graphs can be characterized likewise by
excluding eight forbidden induced subdigraphs \cite{CP06}.
For every directed co-graph we can define a tree structure
denoted as {\em di-co-tree}. It is a binary ordered rooted tree whose
vertices are labeled by the operations of the di-co-expression.

\begin{definition}[Di-co-tree]\label{dcot}
The {\em di-co-tree} for some directed co-graph $G$ is recursively defined as follows.
\begin{itemize}
\item The {\em di-co-tree} $T$ for di-co-expression $v$ consists of
a single vertex $r$ (the {\em root} of $T$) labeled by $v$.

\item The {\em di-co-tree} $T$ for   di-co-expression  $G_1\oplus G_2$ 
consists of a copy $T_1$ of the di-co-tree for $G_1$, 
a copy $T_2$ of the di-co-tree for $G_2$, 
an additional vertex $r$ (the {\em root}
of $T$) labeled by $\oplus$ and two additional arcs from vertex $r$ to the roots of $T_1$
and $T_2$. 
The root of $T_1$ is the {\em first child} of $r$ and 
the root of $T_2$ is the {\em second child} of $r$.

\item The {\em di-co-tree} $T$ for   di-co-expressions  $G_1\otimes G_2$ 
and  $G_1\oslash G_2$ are defined analogously to   $G_1\oplus G_2$.
\end{itemize}
\end{definition}

For every directed co-graph one can construct a di-co-tree in linear time,
see \cite{CP06}. Due to their recursive structure there are  problems 
that are hard in general but which can be solved efficiently
on directed 
co-graphs, see \cite{BM14,Gur17a,GKR19f,GKR19d,GR18c,GHKRRW20}.

\medskip
\begin{observation}\label{obs-red}
Let $G$ be a directed co-graph  and $T$ be a  di-co-tree
for $G$. For every vertex $u$ of $T$ which corresponds to a series operation,
the subtree rooted at $u$ defines a strongly connected subdigraph of $G$. 
Further, for every vertex $u$ of $T$ representing a series operation, such that
no predecessor of $u$ corresponds to a series operation, the
leaves of the subtree rooted at $u$ correspond to a strongly connected component of $G$.
\end{observation}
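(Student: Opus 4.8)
The plan is to prove the two assertions separately: the first by a direct routing argument inside the series node, and the second by an acyclicity argument on a quotient (``block'') digraph obtained from the part of the co-tree lying above $u$.

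For the first claim, let $u$ be a node of $T$ labelled by a series operation, so that the subdigraph $\val(X_u)$ defined by the subtree rooted at $u$ has the form $G_1 \otimes G_2$, where $G_1,G_2$ are the directed co-graphs defined by the two children of $u$. I would show strong connectivity directly, without induction: given two vertices $x,y$ of $\val(X_u)$, if they lie in different factors then the series composition supplies the arc $(x,y)$ directly, and if they lie in the same factor, say both in $G_1$, then choosing any vertex $z$ of $G_2$ yields the directed path $x \to z \to y$, since both $(x,z)$ and $(z,y)$ are arcs added by $\otimes$. Hence every vertex reaches every other within $\val(X_u)$, so it is strongly connected. The only thing to note is that both factors are nonempty, which holds because every directed co-graph has at least one vertex.

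For the second claim, fix a series node $u$ none of whose ancestors in $T$ is a series node, write $L$ for its leaf set (so $L=V(\val(X_u))$), and recall from the first part that $L$ is strongly connected; it remains to prove that $L$ is maximal, i.e.\ equals its strongly connected component. (If $u$ is the root then $L=V(G)$ and there is nothing to prove.) Let $a_1,\dots,a_k$ be the ancestors of $u$ from the root down to its parent, and for each $i$ let $B_i$ be the leaf set of the child of $a_i$ not lying on the path to $u$. Then $V(G)\setminus L = B_1 \cup \cdots \cup B_k$ is a disjoint union, and by hypothesis every $a_i$ is a union ($\oplus$) or an order ($\oslash$) node. The key structural observation I would record is that, because $a_i$ is the lowest common ancestor of $B_i$ and of everything lying deeper (all $B_j$ with $j>i$, as well as $L$), the arcs joining $B_i$ to any deeper block are governed entirely by $a_i$: if $a_i=\oplus$ there are none (type~N); if $a_i=\oslash$ they either all point from $B_i$ outward (type~L) or all point into $B_i$ (type~R), according to which child of $a_i$ is the first one. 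In particular the only blocks with arcs into $L$ are the type-L blocks, and the only blocks receiving arcs from $L$ are the type-R blocks. Contracting each block to a single node then gives a block digraph on $\{B_1,\dots,B_k,L\}$ in which, for indices $a<b$, there is an arc $B_a\to B_b$ exactly when $B_a$ is type~L and an arc $B_b\to B_a$ exactly when $B_a$ is type~R, and no arc when $B_a$ is type~N. I would show this block digraph is acyclic: in any putative directed cycle the block of smallest index would have to both emit an arc to a larger index (forcing type~L) and receive one from a larger index (forcing type~R), which is impossible. Finally, a directed cycle of $G$ meeting both $L$ and its complement would visit at least two distinct blocks and hence project onto a nontrivial closed walk, and therefore a directed cycle, in the block digraph, contradicting acyclicity. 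Consequently no vertex outside $L$ lies on a cycle through $L$, so $L$ is a full strongly connected component.

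The delicate point, and the place where the hypothesis that no ancestor of $u$ is a series node is essential, is the step from ``the direct arcs across the boundary of $L$ are one-directional'' to ``there is no indirect path leaving $L$ and returning''. It is tempting to reason only about the immediate in- and out-arcs of $L$, but a path could in principle leave $L$ into a type-R block, wander through several blocks, and sneak back through a type-L block; ruling this out is precisely what the acyclicity of the block digraph buys. That acyclicity in turn relies on every cross-block lowest common ancestor being a union or order node (never a series node), so that between any two blocks the arcs are unidirectional or absent. Carrying out the direction bookkeeping for all block pairs is the main routine-but-careful step, and the smallest-index argument is what keeps it short.
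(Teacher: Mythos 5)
Your proof is correct; the paper states this as an Observation without any proof, and your argument supplies exactly the intended reasoning: the two-arc routing through the opposite factor gives strong connectivity of a series node's leaf set, and maximality follows because every cross-block pair of vertices has its arcs governed by a $\oplus$- or $\oslash$-labelled lowest common ancestor, so the block digraph is acyclic and no cycle can leave $L$ and return. The block-digraph formalization with the smallest-index argument is a clean way to close the ``indirect return path'' loophole that a naive boundary-arc argument would leave open.
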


\medskip
By  omitting the series composition
within Definition \ref{dcog} we obtain the class of all
{\em oriented  co-graphs}. The class of oriented co-graphs is denoted by $\OC$.

\begin{example}\label{ex-orico}
The di-co-expression 
\begin{equation}
X=((v_1\oplus v_3) \oslash(v_2 \oslash v_4)) \label{eq-ori-c4x}
\end{equation}
defines $\g(X)$ shown in Fig.~\ref{F03}.
\end{example}

The class of oriented co-graphs has already been analyzed by Lawler
in \cite{Law76} and \cite[Section 5]{CLS81} using the notation
of {\em transitive series-parallel (TSP) digraphs}.
A digraph $G=(V,E)$ is called {\em transitive} if for
every pair $(u,v)\in E$ and $(v,w)\in E$ of arcs
with $u\neq w$ the arc $(u,w)$ also belongs to $E$.
For oriented co-graphs  the oriented chromatic number
and also the graph isomorphism problem can be solved in
linear time \cite{GKR19d}.

\medskip
\begin{observation}
Every oriented co-graph is a directed co-graph and every oriented co-graph  is a DAG.
\end{observation}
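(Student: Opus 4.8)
The plan is to treat the two assertions separately, both by exploiting the recursive structure of oriented co-graphs. Recall that, by definition, the class $\OC$ is obtained from Definition \ref{dcog} by forbidding the series composition, so every oriented co-graph is built from single vertices using only the disjoint union $\oplus$ and the order composition $\oslash$.

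For the first claim, the inclusion $\OC \subseteq \DC$ is purely definitional: any construction sequence that produces an oriented co-graph uses only the operations $\oplus$ and $\oslash$, both of which are also permitted in the definition of directed co-graphs. Hence the very same di-co-expression witnesses that the graph is a directed co-graph, and no further work is needed.

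For the second claim, I would proceed by structural induction on the construction of an oriented co-graph $G$. The base case is a single vertex $(\{v\},\emptyset)$, which has no arcs and is therefore trivially a DAG. For the inductive step, assume that $G_1=(V_1,E_1)$ and $G_2=(V_2,E_2)$ are oriented co-graphs already known to be DAGs. If $G=G_1 \oplus G_2$, then the arc set is $E_1 \cup E_2$ with no arcs between $V_1$ and $V_2$, so every directed cycle must lie entirely inside $G_1$ or entirely inside $G_2$; by the induction hypothesis no such cycle exists. If $G=G_1 \oslash G_2$, then the only additional arcs run from $V_1$ to $V_2$, and crucially none run from $V_2$ to $V_1$; hence any directed walk that enters $V_2$ can never return to $V_1$, and again every directed cycle must be confined to $G_1$ or to $G_2$, so by induction none exists. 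Thus $G$ contains no $\overrightarrow{C_n}$ for $n \ge 2$ and is a DAG.

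The only place where the argument is not completely mechanical is the order-composition case: one must observe that $\oslash$ introduces arcs in a single direction only, so the cross arcs can never participate in a directed cycle. Equivalently, the whole induction could be phrased in terms of topological orderings, since concatenating a topological order of $G_1$ with one of $G_2$ yields a valid topological order for both $G_1 \oplus G_2$ and $G_1 \oslash G_2$, and admitting a topological order is equivalent to acyclicity. Either formulation makes the result routine, and I do not anticipate any genuine obstacle.
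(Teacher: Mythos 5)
Your proof is correct: the inclusion $\OC\subseteq\DC$ is indeed purely definitional, and the structural induction (or equivalently the topological-ordering argument) establishes acyclicity exactly as intended. The paper states this observation without proof, and your argument is precisely the routine justification it implicitly relies on, so there is nothing to add.
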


\medskip
Since SSP corresponds to SSG and also to SSGW on a digraph without arcs,
which is an oriented co-graph, we obtain the following result.

\begin{proposition}\label{np-oc}
SSG and SSGW are NP-hard on oriented co-graphs.
\end{proposition}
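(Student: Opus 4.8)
The plan is to realize ordinary SSP as the special case of both SSG and SSGW in which the constraining digraph carries no arcs, and to observe that such an edgeless digraph lies in the class of oriented co-graphs. Since SSP is NP-hard, this identity reduction immediately transfers the hardness.

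First I would fix an arbitrary SSP instance with item set $A=\{a_1,\ldots,a_n\}$, sizes $s_j$, and capacity $c$, and attach to it the arcless digraph $G=(A,\emptyset)$. This $G$ is an oriented co-graph: it is built from the single-vertex co-graphs $a_1,\ldots,a_n$ using only the disjoint union, as $a_1\oplus\cdots\oplus a_n$, and the disjoint union is one of the two operations that remain when the series composition is omitted from Definition \ref{dcog}. So the construction stays inside $\OC$ and costs only linear time.

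Next I would check that on $G$ both digraph constraints are vacuous, so that feasibility for SSG and for SSGW both collapse to the bare capacity constraint (\ref{cap}). Because $N^{-}(y)=\emptyset$ for every $y\in A$, the premise $N^{-}(y)\cap A'\neq\emptyset$ of (\ref{dc}) is never met, and the conjunct $N^{-}(y)\neq\emptyset$ in the premise of (\ref{wdc}) is never met either. Hence every $A'\subseteq A$ satisfying (\ref{cap}) is feasible for SSG on $G$ and feasible for SSGW on $G$, and in both cases the task reduces exactly to maximizing $s(A')$ subject to (\ref{cap}), i.e.\ to the original SSP instance. This yields a polynomial-time (in fact size-preserving) reduction from SSP to each of SSG and SSGW restricted to oriented co-graphs, from which the claimed NP-hardness follows.

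There is no real obstacle here; the only point that requires care is the vacuity of the \emph{weak} digraph constraint, which does not follow merely from $N^{-}(y)\cap A'=\emptyset$ but rests specifically on the explicit hypothesis $N^{-}(y)\neq\emptyset$ written into (\ref{wdc}). Once that is noted, the two reductions are the same edgeless construction, and NP-hardness of SSP (the standard subset sum problem) closes the argument.
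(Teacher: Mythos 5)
Your proposal is correct and is exactly the paper's argument: the paper justifies the proposition by the one-line observation that SSP corresponds to both SSG and SSGW on an arcless digraph, which is an oriented co-graph. You merely spell out the vacuity of constraints (\ref{dc}) and (\ref{wdc}) in more detail, which is a faithful elaboration of the same reduction.
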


Next, we will show pseudo-polynomial solutions for SSG and SSGW restricted to
directed co-graphs. The main idea is a dynamic programming along 
the recursive structure of a given directed co-graph.

\subsection{Subset sum  with  digraph constraint (SSG)}

By Lemma \ref{lem1} in order to solve SSG we
can restrict ourselves to directed acyclic graphs.
This can be done by replacing every
strongly connected component $S$ by a new vertex $x_S$ whose size is the sum of
the sizes of the vertices in $S$. 
In order to identify the strongly connected components of directed co-graphs  using 
a di-co-tree we apply Observation \ref{obs-red}.
We perform a breadth first search on a di-co-tree $T$ starting at the root and for every vertex $u$
of $T$ which corresponds to a series operation we substitute the subtree
rooted at $u$ by a single vertex whose size is the sum of
the sizes of the vertices corresponding to the leaves of the subtree
rooted at $u$.
This does not reduce the size of the digraph or its di-co-tree in general, 
e.g. for oriented co-graphs
we have no non-trivial strongly connected component.

We consider an instance of SSG such that $G=(A,E)$ is a directed co-graph which
is given by some di-co-expression $X$.
For some subexpression $X'$ of $X$ let $F(X',s)=1$ if there is a solution
$A'$   in the graph defined by $X'$ satisfying (\ref{cap}) and (\ref{dc}) 
such that $s(A')=s$, otherwise let $F(X',s)=0$.
We use the notation $s(X')= \sum_{a_j\in X'}s_j$.

\begin{lemma}\label{le1} Let $0\leq s \leq c$.
\begin{enumerate}
\item $F(a_j,s)=1$ if and only if  $s=0$ or $s_j=s$.

In all other cases  $F(a_j,s)=0$.
\item $F(X_1\oplus X_2,s)=1$, if and only if 
there are  some $0\leq s'\leq s$ and $0\leq s''\leq s$
such that $s'+s''=s$ and $F(X_1,s')=1$ and $F(X_2,s'')=1$.

In all other cases  $F(X_1\oplus X_2,s)=0$.

\item $F(X_1\oslash X_2,s)=1$, if and only if 
\begin{itemize}
\item
$F(X_2,s)=1$ for $0\leq s \leq  s(X_2)$\footnote{The value $s=0$ is 
for choosing an empty solution in $\g(X_1\oslash X_2)$.} or

\item
there is an $s'>0$, such that $s=s'+s(X_2)$  and  $F(X_1,s')=1$.
\end{itemize}
In all other cases  $F(X_1\oslash X_2,s)=0$.

\item $F(X_1\otimes X_2,s)=1$, if and only if $s=0$ or $s=s(X_1)+s(X_2)$.

In all other cases  $F(X_1\otimes X_2,s)=0$.
\end{enumerate}
\end{lemma}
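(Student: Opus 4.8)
The plan is to prove each of the four cases by verifying the two directions of the stated equivalence, working directly from the definition of $F$ and the semantics of the co-graph operations. Recall that $F(X',s)=1$ precisely when the digraph $\graph(X')$ admits a subset $A'$ with $s(A')=s$ satisfying the capacity constraint (\ref{cap}) and the digraph constraint (\ref{dc}). Since we have already reduced to directed acyclic graphs via Lemma \ref{lem1} and Observation \ref{obs-red}, I may assume $\graph(X')$ is a DAG, so every subexpression defines an oriented co-graph; in particular the feasibility of a candidate solution is governed entirely by the reachability condition of Lemma \ref{le-reach-fr}.

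For the base case (1), the digraph $\graph(a_j)$ is a single vertex with no arcs, so constraint (\ref{dc}) is vacuous and the only feasible solutions are $\emptyset$ (giving $s=0$) and $\{a_j\}$ (giving $s=s_j$); this is immediate. For the disjoint union (2), the key observation is that $\graph(X_1\oplus X_2)$ has no arcs between the two parts, so $N^-(y)$ lies entirely in the part containing $y$. Hence a set $A'$ is feasible in $X_1\oplus X_2$ if and only if $A'\cap V_1$ is feasible in $X_1$ and $A'\cap V_2$ is feasible in $X_2$; this uses Lemma \ref{induced-sd} in the forward direction and the fact that independent feasible solutions can be glued in the reverse direction. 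The size then splits additively as $s=s'+s''$, giving the claim. The series case (4) is the opposite extreme: in $\graph(X_1\otimes X_2)$ every vertex of one part is a predecessor of every vertex of the other (in both directions), so by Lemma \ref{le-reach-sc} the whole digraph is one strongly connected component; any feasible nonempty solution must contain all of $V_1\cup V_2$, forcing $s\in\{0,\,s(X_1)+s(X_2)\}$.

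The main obstacle will be the order composition (3), which is the only genuinely asymmetric case. Here $\graph(X_1\oslash X_2)$ adds all arcs \emph{from} $V_1$ \emph{to} $V_2$, so every vertex of $V_2$ has every vertex of $V_1$ as a predecessor, but not conversely. The crucial consequence, via constraint (\ref{dc}), is that if a feasible solution $A'$ meets $V_1$ at all, then picking any $u\in A'\cap V_1$ forces every vertex of $V_2$ (being a successor of $u$) into $A'$; iterating, $A'\cap V_1\neq\emptyset$ implies $V_2\subseteq A'$. I would split the argument accordingly: either $A'\cap V_1=\emptyset$, in which case $A'\subseteq V_2$ and feasibility reduces to feasibility of $A'$ in $X_2$ (the first bullet, with $s$ ranging up to $s(X_2)$ and $s=0$ recording the empty choice); or $A'\cap V_1\neq\emptyset$, forcing $V_2\subseteq A'$ so that $s=s'+s(X_2)$ with $s'=s(A'\cap V_1)>0$, and feasibility reduces to feasibility of $A'\cap V_1$ in $X_1$ (the second bullet). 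For the reverse direction I must check that any solution witnessing one of the bullets can be extended to a genuine feasible solution of $X_1\oslash X_2$: when all of $V_2$ is included alongside a feasible $A_1'\subseteq V_1$, no constraint is violated because the only new predecessor relations point into the already-included $V_2$. Care is needed to confirm that a feasible solution of $X_1$ remains feasible after adding all of $V_2$, which follows since the added arcs only create predecessors for $V_2$-vertices, all of which are present, so (\ref{dc}) continues to hold.
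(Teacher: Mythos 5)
Your proposal is correct and follows essentially the same route as the paper's proof: Lemma \ref{induced-sd} for the forward directions, direct verification that the glued or extended sets still satisfy (\ref{dc}) for the reverse directions, and the same case split on whether the solution meets $V_1$ in the order composition. The only quibble is your opening claim that every subexpression may be assumed to define a DAG or oriented co-graph --- case (4) concerns $\otimes$, which creates cycles --- but your actual argument there correctly invokes strong connectivity (as the paper does) and does not rely on that assumption.
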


\begin{proof}
We show the correctness of the stated equivalences. Let $0\leq s \leq c$.
\begin{enumerate}
\item The only possible solutions in $\g(a_j)$  are $\emptyset$ and $\{a_j\}$ which have size 
$0$ and $s_j$, respectively.

\item If $F(X_1  \oplus X_2,s)=1$, then by Lemma \ref{induced-sd} there are  
$s'$ and $s''$ such that $s'+s''=s$ and solutions in $\g(X_1)$  
and in $\g(X_2)$ which guarantee $F(X_1,s')=1$ and $F(X_2,s'')=1$.

\medskip
Further, for every $s'$ and $s''$, such that $s'+s''=s$, $F(X_1,s')=1$, and $F(X_2,s'')=1$,
it holds  that $F(X_1  \oplus X_2,s)=1$ since the operation (disjoint union) 
does not create  new edges.

\item If $F(X_1  \oslash X_2,s)=1$, then we distinguish two cases.
If the solution of size $s$ in $\g(X_1\oslash X_2)$ contains no 
vertices of  $\g(X_1)$, then by Lemma \ref{induced-sd} there is a solution in $\g(X_2)$ which
guarantees $F(X_2,s)=1$.  

If the solution $A'$ of size $s$ in $\g(X_1\oslash X_2)$ contains at least one vertex
of  $\g(X_1)$, then by (\ref{dc}) solution $A'$ has to contain all vertices
of  $\g(X_2)$ and by Lemma \ref{induced-sd} there is a solution in $\g(X_1)$ which
guarantees $F(X_1,s-s(X_2))=1$.

\medskip
Further, for every  $0\leq s \leq  s(X_2)$ where $F(X_2,s)=1$ 
we have $F(X_1\oslash X_2,s)=1$ since the solutions from $\g(X_2)$ do not contain
any predecessors of vertices from $\g(X_1)$ in $\g(X_1\oslash X_2)$. 

Also for every  $1\leq s' \leq  s(X_1)$ where $F(X_1,s')=1$ for $s=s'+s(X_2)$ we have  $F(X_1\oslash X_2,s)=1$ since
every solution  in $\g(X_1)$  has to be extended
by $X_2$ since at least one predecessor of $\g(X_2)$ is part of the solution and thus,
all vertices of $\g(X_2)$ have to belong to the  solution.

\item  
If $F(X_1  \otimes X_2,s)=1$, then we distinguish two cases. 
If the solution of size $s$ is empty, then $s=0$. 

Otherwise, $s=s(X_1)+s(X_2)$ since
$\g(X_1\otimes X_2)$ is strongly connected and thus, all vertices
of $\g(X_1)$ and all vertices of $\g(X_2)$ have to be part of  the
solution.

\medskip
Further,
if $s=0$ or $s=s(X_1)+s(X_2)$, it holds  that $F(X_1  \otimes X_2,s)=1$ since 
the empty and the complete vertex set both satisfy (\ref{dc}). 
\end{enumerate}
This shows the statements of the lemma.
\end{proof}

\begin{corollary}\label{cor1}
There is a solution with sum $s$  for an
instance of SSG such that $G$ is a directed co-graph which
is given by some  di-co-expression $X$
if and only if $F(X,s)=1$. Therefore, $OPT(I)=\max\{s \mid F(X,s)=1\}$.
\end{corollary}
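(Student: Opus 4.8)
The plan is to obtain the biconditional by structural induction on the di-co-expression $X$ and then to read off the value of $OPT(I)$. By definition, $F(X',s)=1$ holds exactly when $\g(X')$ admits a subset $A'$ with $s(A')=s$ satisfying both (\ref{cap}) and (\ref{dc}); since the index ranges only over $0\leq s\leq c$, the capacity constraint (\ref{cap}) is automatic, so $F(X',s)$ records precisely the existence of an SSG-feasible solution of size exactly $s$ in $\g(X')$. The content of the corollary is that the four recursive rules of Lemma \ref{le1}, evaluated bottom-up from the leaves of the di-co-tree, reproduce this semantic value at the root $X$.

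First I would induct on the number of operations of $X$. The base case is a single vertex $a_j$, covered directly by Lemma \ref{le1}(1), since $\emptyset$ and $\{a_j\}$ are the only candidate sets and they have sizes $0$ and $s_j$. For the inductive step, write $X=X_1\odot X_2$ with $\odot\in\{\oplus,\oslash,\otimes\}$ and assume the claim for the strictly smaller subexpressions $X_1$ and $X_2$. In each of the three cases the matching item of Lemma \ref{le1} expresses $F(X,s)$ through $F(X_1,\cdot)$ and $F(X_2,\cdot)$, and its proof already verifies that this expression is equivalent to the existence of a feasible solution of size $s$ in $\g(X)$ --- using Lemma \ref{induced-sd} to pass between $\g(X)$ and its induced parts, the strong connectivity of a series composition, and the fact that an order composition forces all of $\g(X_2)$ into any solution that meets $\g(X_1)$. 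Substituting the induction hypothesis for $X_1$ and $X_2$ then shows that the recursively computed $F(X,s)$ equals $1$ if and only if $\g(X)$ has a feasible solution of size $s$.

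The formula for $OPT(I)$ is then immediate. By definition $OPT(I)$ is the maximum of $s(A')$ over all feasible solutions $A'$ of the instance on $G=\g(X)$; every such $A'$ obeys $s(A')\leq c$ by (\ref{cap}). By the biconditional, the set of attainable solution sizes equals $\{s \mid 0\leq s\leq c,\ F(X,s)=1\}$, which is nonempty because $A'=\emptyset$ yields $F(X,0)=1$. Taking maxima of the two equal sets gives $OPT(I)=\max\{s\mid F(X,s)=1\}$.

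I do not expect a real obstacle, as the whole case analysis has already been discharged inside Lemma \ref{le1}; the corollary merely chains those equivalences together along the di-co-tree. The only points deserving care are purely bookkeeping: one must keep every evaluation of $F$ within the range $0\leq s\leq c$ so that (\ref{cap}) is never silently violated, and one must check that the empty solution is propagated correctly through $\oplus$, $\oslash$, and $\otimes$ so that $F(X,0)=1$ always holds, which is exactly what guarantees that the maximum defining $OPT(I)$ is taken over a nonempty set.
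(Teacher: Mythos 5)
Your argument is correct and matches the paper's (implicit) justification: the corollary follows from the semantic definition of $F$ together with Lemma \ref{le1}, chained along the di-co-tree by structural induction, exactly as you describe. Your added bookkeeping remarks about the range $0\leq s\leq c$ and the propagation of $F(X,0)=1$ are sound and consistent with the paper's definitions.
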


\begin{theorem}\label{th-co-ssg}
SSG can be solved in directed co-graphs  with $n$
vertices and $m$ arcs in $\bigo(n\cdot c^2+m)$ time and $\bigo(n\cdot c)$ space.
\end{theorem}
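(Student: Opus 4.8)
The plan is to turn the recursive characterization of Lemma~\ref{le1} into a bottom-up dynamic program over the di-co-tree $T$ of $G$. By Corollary~\ref{cor1}, it suffices to compute, for the root expression $X$, the full table $F(X,s)$ for all $0\le s\le c$, since $OPT(I)=\max\{s\mid F(X,s)=1\}$. First I would preprocess the input: construct a di-co-tree in $\bigo(n+m)$ time using the linear-time recognition algorithm of \cite{CP06}, and compute the condensation as described before Lemma~\ref{le1} (contracting each maximal series subtree into a single vertex of summed size via a breadth-first pass over $T$). This step, together with reading the arcs, accounts for the additive $\bigo(m)$ term. After contraction we may assume $G$ is a DAG, so that the analysis of Lemma~\ref{le1} applies directly at every internal node.

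Next I would associate to each node $u$ of $T$ a Boolean array $F_u$ of length $c+1$, where $F_u[s]=F(X_u,s)$ and $X_u$ is the subexpression rooted at $u$. I would also precompute, for every node, the total size $s(X_u)$ in a single bottom-up sweep. The arrays are then filled from the leaves upward according to the four cases of Lemma~\ref{le1}: a leaf $a_j$ sets $F_u[0]=F_u[s_j]=1$ in $\bigo(1)$ (plus $\bigo(c)$ to zero-initialize); a $\oplus$ node combines its two children's arrays as a Boolean convolution truncated at $c$, i.e.\ $F_u[s]=\bigvee_{s'+s''=s}F_{u_1}[s']\wedge F_{u_2}[s'']$, costing $\bigo(c^2)$; an $\oslash$ node copies the values $F_{u_2}[s]$ for $0\le s\le s(X_2)$ and additionally sets $F_u[s'+s(X_2)]=1$ whenever $F_{u_1}[s']=1$ and $s'>0$, costing $\bigo(c)$; and an $\otimes$ node sets only $F_u[0]=1$ and $F_u[s(X_1)+s(X_2)]=1$, costing $\bigo(c)$. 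Correctness of each update is exactly the content of Lemma~\ref{le1}, so no new argument is required there.

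For the running time I would bound the work node by node. The di-co-tree of an $n$-vertex co-graph has $\bigo(n)$ nodes, and the only expensive update is the $\bigo(c^2)$ convolution at $\oplus$ nodes, of which there are $\bigo(n)$; the leaf, $\oslash$, and $\otimes$ updates each cost $\bigo(c)$, contributing $\bigo(n\cdot c)$ in total. Hence the dynamic program runs in $\bigo(n\cdot c^2)$ time, and adding the $\bigo(n+m)$ preprocessing gives the claimed $\bigo(n\cdot c^2+m)$ bound. For the space bound, I would argue that the arrays need not all be retained simultaneously: processing $T$ in post-order, once a node's array has been used by its parent it can be discarded, so along any root-to-leaf path only $\bigo(1)$ arrays of length $c+1$ are live at once, and with a careful left-child-first traversal the number of simultaneously stored arrays is bounded by the tree depth plus a constant. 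A cleaner way to secure $\bigo(n\cdot c)$ space is simply to store one array of size $c+1$ per node, which already gives $\bigo(n\cdot c)$ directly; I expect the intended statement uses this coarser accounting.

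The main obstacle I anticipate is not the algebra of the updates but the bookkeeping that justifies applying Lemma~\ref{le1} at \emph{every} internal node after contraction. Specifically, the lemma's case for $\oslash$ and $\otimes$ relies on feasibility transfer between a subexpression's induced subgraph and the whole, which is Lemma~\ref{induced-sd} (feasible solutions restrict to induced subdigraphs) and, in the reverse direction, on the fact that the disjoint-union and order operations create no arcs into $\g(X_1)$; I would need to state explicitly that each $X_u$ defines an \emph{induced} subdigraph of $\g(X)$ so that Lemma~\ref{induced-sd} is applicable uniformly. The other delicate point is confirming that the condensation step is compatible with the tree recursion: by Observation~\ref{obs-red}, every maximal series node corresponds to a strongly connected component, so contracting exactly these nodes realizes $con(G)$ while preserving the di-co-tree structure on the remaining DAG, and Lemma~\ref{lem1} guarantees the bijection of feasible solutions. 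Once these two structural facts are in place, the time and space bounds follow by the node-counting argument above.
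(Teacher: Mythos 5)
Your proposal is correct and follows essentially the same route as the paper: a bottom-up dynamic program over the di-co-tree that fills the tables $F(X_u,\cdot)$ according to Lemma~\ref{le1}, with the $\bigo(c^2)$ work at each of the $\bigo(n)$ internal nodes dominating and the $\bigo(n+m)$ construction of the di-co-tree supplying the additive $m$ term. Your per-node refinements (the $\bigo(c)$ shift-and-copy at $\oslash$ nodes and the condensation preprocessing, which the paper describes just before Lemma~\ref{le1}) only tighten constants and do not change the argument.
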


\begin{proof}
Let $G=(A,E)$  be a directed co-graph and $T$ be a di-co-tree for $G$ with root $r$.
For some vertex $u$ of $T$ we denote by $T_u$
the subtree rooted at $u$ and $X_u$ the co-expression defined by $T_u$.
In order to solve the SSG problem for an instance $I$ on
graph $G$, we traverse  di-co-tree $T$ in a bottom-up order.
For every vertex $u$ of $T$ and $0\leq s \leq c$ we compute $F(X_u,s)$
following the rules given in Lemma \ref{le1}. By Corollary \ref{cor1} we can solve our
problem by $F(X_r,s)=F(X,s)$.

A di-co-tree $T$ can be computed in $\bigo(n+m)$ time from
a directed co-graph with $n$ vertices and $m$ arcs,  see \cite{CP06}.
All $s(X_i)$ can be precomputed in $\bigo(n)$ time.
Our rules given in Lemma \ref{le1} show the following running times.
\begin{itemize}
\item
For every $a_j\in A$ and every $0\leq s \leq c$ value
$F(a_j,s)$ is computable in  $\bigo(1)$ time.

\item
For every $0\leq s \leq c$, every
$F(X_1  \oplus X_2,s)$ and every $F(X_1  \oslash X_2,s)$  can be computed  in $\bigo(c)$ time from
$F(X_1,s')$ and $F(X_2,s'')$.

\item For every $0\leq s \leq c$,
every $F(X_1  \otimes X_2,s)$ can be computed
in  $\bigo(1)$ time from $s(X_1)$ and $s(X_2)$.

\end{itemize}
Since we have $n$ leaves and $n-1$ inner vertices in $T$,
the running time is in $\bigo(nc^2+m)$.
\end{proof}

\begin{example}\label{ex1-ssg-co}
We consider the SSG instance $I$ with $n=4$ items using   $\g(X)$ shown in Fig.~\ref{F02} and
defined 
by the expression in (\ref{eq-ori-c4}), $c=7$, and the following sizes.
$$
\begin{array}{l|llll}
j  &  1 & 2 & 3 & 4 \\
\hline
s_j&1 & 2 & 2  & 3
\end{array}
$$

The rules  given in  Lemma \ref{le1} lead to
the values in Table \ref{tab-ssga-co}.
Thus, the optimal solution is $\{a_2,a_3,a_4\}$ with $OPT(I)=7$.
\end{example}

\begin{table}[h!]
\caption{Table for Example \ref{ex1-ssg-co}\label{tab-ssga-co}}

$$
\begin{array}{l|llllllllll}
 &  \multicolumn{8}{c}{F(X',s)} \\
 \hline
X'~~~~~~~~~~~~~~~~~~~~~~~~~~~s          & 0 &  1 & 2 & 3 &  4 & 5 & 6 & 7  \\
\hline
v_1                                     & 1&1 & 0 & 0  & 0 & 0 & 0 & 0  \\
v_2                                     & 1&0 & 1 & 0  & 0 & 0 & 0 & 0  \\
v_3                                     & 1&0 & 1 & 0  & 0 & 0 & 0 & 0  \\
v_4                                     & 1&0 & 0 & 1  & 0 & 0 & 0 & 0  \\
\hline
v_1\oplus v_3                           & 1& 1 & 1 & 1  & 0 & 0 & 0 & 0\\
v_2 \otimes v_4                         & 1& 0 & 0 & 0  & 0 & 1 & 0 & 0\\
(v_1\oplus v_3) \oslash(v_2 \otimes v_4)& 1 &0 & 0& 0  & 0 & 1 & 1 & 1 \\
\hline
\end{array}
$$
\end{table}

A {\em tournament} is a digraph $G=(A,E)$ where for each two different vertices 
$u,v\in A$ it holds that exactly one of the two pairs  $(u,v)$ and $(v,u)$ belongs to $E$.
The class of transitive tournaments is denoted by $\TT$.
Transitive tournaments are characterized in several ways, see \cite[Chapter 9]{Gou12}.

\begin{lemma}[\cite{Gou12}]\label{s11}
For every digraph $G$ the following statements are equivalent.
\begin{enumerate}
\item \label{s11b} $G$ is a transitive tournament.
\item \label{s11c} $G$ is an acyclic tournament.
\item \label{s11e} $G$ is a tournament with exactly one Hamiltonian path.
\item \label{s11f} $G$ is a tournament and every vertex in $G$ 
has a different outdegree, i.e. $\{\odeg(v)~|~v\in V\}=\{0,\ldots,|V|-1\}$.
\item \label{s11h}  $G$ can 
be constructed from the one-vertex graph  by repeatedly adding
an  out-dominating vertex.
\item \label{s11i} $G$ 
can be constructed from the one-vertex graph repeatedly adding
an   in-dominated  vertex.
\end{enumerate}
\end{lemma}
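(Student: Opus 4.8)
The plan is to funnel all six conditions through one normal form for acyclic tournaments and then to prove the equivalences as a single cycle of implications, so that only one genuinely substantial step remains. The workhorse is the following \emph{canonical form}, which I would establish first: a tournament $G=(V,E)$ on $n$ vertices is acyclic if and only if its vertices can be enumerated as $v_1,\dots,v_n$ with $(v_i,v_j)\in E$ precisely when $i<j$. For the forward direction I would use that a finite acyclic digraph has a vertex of indegree $0$ — otherwise one can follow predecessors indefinitely and, with finitely many vertices, close a directed cycle — and that in a tournament such a vertex is adjacent to every other vertex with all incident arcs outgoing; taking it as $v_1$, deleting it leaves an acyclic tournament on $n-1$ vertices, and induction supplies $v_2,\dots,v_n$. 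The backward direction is immediate, since in the enumerated digraph every arc runs from a smaller to a larger index, so no directed cycle can close up.

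I would then prove the chain (\ref{s11b})$\Rightarrow$(\ref{s11c})$\Rightarrow$(\ref{s11f})$\Rightarrow$(\ref{s11h})$\Rightarrow$(\ref{s11i})$\Rightarrow$(\ref{s11e})$\Rightarrow$(\ref{s11b}). For (\ref{s11b})$\Rightarrow$(\ref{s11c}) I argue that a transitive tournament has no directed cycle: a shortest cycle $v_1\to\cdots\to v_k\to v_1$ cannot have $k=2$ (tournaments have no $2$-cycles), and for $k\ge 3$ transitivity yields the chord $(v_1,v_3)$, producing a shorter cycle, a contradiction. For (\ref{s11c})$\Rightarrow$(\ref{s11f}) I pass to the canonical enumeration and note $\odeg(v_i)=n-i$, so the outdegrees are exactly $0,1,\dots,n-1$. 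For (\ref{s11f})$\Rightarrow$(\ref{s11h}), the vertex of outdegree $n-1$ is out-dominating; deleting it leaves a tournament whose remaining outdegrees are unchanged (no vertex had an arc into it) and hence equal to $\{0,\dots,n-2\}$, so induction expresses $G$ as the result of adding this out-dominating vertex last. The step (\ref{s11h})$\Rightarrow$(\ref{s11i}) is pure bookkeeping: reversing the order in which out-dominating vertices were added recovers the canonical enumeration, and reading that same enumeration from the front exhibits $G$ as built by repeatedly adding an in-dominated vertex. Finally (\ref{s11i})$\Rightarrow$(\ref{s11e}) again uses the canonical enumeration produced by an in-dominated construction: $v_1\to\cdots\to v_n$ is a Hamiltonian path, and it is the only one, since any Hamiltonian path must traverse its vertices in strictly increasing index order.

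The one step I expect to be the main obstacle is closing the cycle with (\ref{s11e})$\Rightarrow$(\ref{s11b}): that a tournament with a \emph{unique} Hamiltonian path must be transitive. I would attack this by contraposition, showing that a non-transitive tournament has at least two Hamiltonian paths (it has at least one by Rédei's theorem, that every tournament admits a Hamiltonian path). A non-transitive tournament contains a directed triangle: if $(u,v),(v,w)\in E$ but $(u,w)\notin E$, then $(w,u)\in E$, so $u\to v\to w\to u$ is a $3$-cycle. The delicate part, and the crux of the whole lemma, is then to turn such a triangle into a second Hamiltonian path: starting from any Hamiltonian path one reroutes it locally through the three cyclically arranged vertices to produce a distinct path, and one must check carefully that the rerouted sequence is again a Hamiltonian path of the \emph{entire} tournament. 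Once this is established, uniqueness of the Hamiltonian path forces transitivity, and together with the canonical-form equivalences the cycle closes, proving all six statements equivalent.
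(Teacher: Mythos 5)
The paper offers no proof of this lemma at all --- it is quoted from Gould's textbook --- so there is no in-paper argument to measure yours against; I can only judge the proposal on its own terms. The canonical-form reduction (an acyclic tournament is exactly one whose vertices can be enumerated so that every arc goes from a smaller to a larger index) is correct, and the five implications (\ref{s11b})$\Rightarrow$(\ref{s11c})$\Rightarrow$(\ref{s11f})$\Rightarrow$(\ref{s11h})$\Rightarrow$(\ref{s11i})$\Rightarrow$(\ref{s11e}) are all complete and sound as written.

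The gap is precisely where you locate the crux: (\ref{s11e})$\Rightarrow$(\ref{s11b}). You reduce it to showing that a non-transitive tournament has at least two Hamiltonian paths, produce a directed triangle, and then say one ``reroutes the path locally through the three cyclically arranged vertices'' and ``must check carefully'' that the result is again a Hamiltonian path. That check is the entire content of the step and it is never supplied; worse, the operation is not well defined, since the three vertices of the triangle need not be consecutive (or anywhere near each other) on the Hamiltonian path, so there is no canonical local surgery to perform. A clean way to close the step avoids the triangle altogether: let $P=v_1\to\cdots\to v_n$ be a Hamiltonian path of a tournament $G$ and suppose $v_n$ is not a sink, so $(v_n,v_i)\in E$ for some $i<n$; choosing $i$ minimal, either $i=1$ and $v_n\to v_1\to\cdots\to v_{n-1}$ is a second Hamiltonian path, or $i>1$ and $v_1\to\cdots\to v_{i-1}\to v_n\to v_i\to\cdots\to v_{n-1}$ is one (minimality gives the arc $(v_{i-1},v_n)$, and this path ends in $v_{n-1}\neq v_n$, so it differs from $P$). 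Hence uniqueness of the Hamiltonian path forces $v_n$ to be a sink; deleting $v_n$ preserves uniqueness (any Hamiltonian path of $G-v_n$ extends by appending the sink $v_n$), and induction together with your canonical form yields transitivity. With that substitution the cycle of implications closes and the proof is complete.
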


In \cite[Lemma 4]{GMT18} it is shown that SSG is polynomial on acyclic tournaments
without stating a running time. Since acyclic tournaments, and equivalently
transitive tournaments, are a subclass of oriented co-graphs, we reconsider
the following result.

\begin{remark}\label{tt}
Every transitive tournament $G$ can be defined from a
single vertex graph $v_1$ by repeatedly adding
a vertex of maximum indegree and outdegree $0$,
i.e.\ an  in-dominated vertex $v_2,\ldots, v_n$ (cf.\ Lemma \ref{s11}). This
order can be defined in $\bigo(n^2)$ time from $G$.
The feasible solutions w.r.t. the digraph constraint  (\ref{dc})
are $\emptyset$ and
for $1\leq k \leq n$ the set $\{v_i~|~ k\leq i \leq n\}$.
This leads to at most $n+1$ possible solutions for SSG for which
we have to check the capacity
constraint  (\ref{cap}) and among those satisfying  (\ref{cap})
we select one set with largest sum of sizes.  
Thus, SSG is solvable in $\bigo(n^2)$ time on transitive tournaments with $n$ vertices.
\end{remark}

\begin{example}
We consider the SSG instance $I$ with $n=4$ items using the transitive
tournament $\g(X)$
defined by expression
\begin{equation}
X=(((v_1\oslash v_2)\oslash v_3)\oslash v_4),
\end{equation}
$c=7$, and the following sizes.
$$
\begin{array}{l|llll}
j  &  1 & 2 & 3 & 4 \\
\hline
s_j&1 & 2 & 2  & 3
\end{array}
$$
By Remark \ref{tt} for this instance the feasible
solutions of SSG w.r.t. the digraph constraint  (\ref{dc})
are $\emptyset$, $\{a_4\}$, $\{a_3,a_4\}$, $\{a_2,a_3, a_4\}$,
and  $\{a_1,a_2,a_3, a_4\}$. Among these only $\emptyset$, $\{a_4\}$, $\{a_3,a_4\}$, and
$\{a_2,a_3, a_4\}$ satisfy the capacity
constraint (\ref{cap}).
Thus, the optimal solution is $\{a_2,a_3,a_4\}$ with $OPT(I)=7$.
\end{example}

A {\em bioriented clique} is a digraph $G=(A,E)$ 
where for each two different vertices $u,v\in A$ it holds that 
both  of the two pairs  $(u,v)$ and $(v,u)$ belong to $E$.
The class of  bioriented cliques is denoted by $\BC$.

\begin{remark}\label{cl}
Within a bioriented clique $G=(A,E)$  the whole vertex set 
is a strongly connected component. By 
Lemma \ref{le-reach-sc} the only possible solutions are
$A$ and $\emptyset$. Thus, SSG is solvable in $\bigo(n)$ time on 
bioriented cliques  with $n$ vertices.
\end{remark}

\subsection{Subset sum  with  weak digraph constraint (SSGW)}

Next, we consider SSGW on directed co-graphs. 
In order to get useful informations about the sources within a
solution, we use an extended data structure.
We consider an instance of SSGW such that $G=(A,E)$ is a directed co-graph which
is given by some  di-co-expression   $X$.
For some subexpression $X'$ of $X$ let $H(X',s,s')=1$ if there is a solution
$A'$ in the graph defined by $X'$  satisfying (\ref{cap}) and (\ref{wdc})  
such that $s(A')=s$ and the sum of sizes of
the sources in $A'$ is $s'$,  otherwise let $H(X',s,s')=0$. We denote by $o(X)$
the sum of the sizes of all sources in $\g(X)$.

\begin{remark}\label{rem-w}
A remarkable difference between  SSGW and SSG w.r.t.\ co-graph operations  
is the following.
When considering $X_1\oslash X_2$ we can combine solutions $A_1$ of
$X_1$  satisfying (\ref{cap}) and (\ref{wdc}) which do not contain all items
of $X_1$ with solutions $A_2$ of $X_2$ satisfying only (\ref{cap}) to obtain solution
$A_1 \cup A_2$ of $X_1\oslash X_2$  satisfying (\ref{cap}) and (\ref{wdc}), if $s(A_1)+s(A_2)\leq c$. 
Furthermore, within $X_1\otimes X_2$ we can  combine solutions $A_1$ of
$X_1$  satisfying (\ref{cap}) which do not contain all items
and solutions $A_2$ of
$X_2$  satisfying (\ref{cap}) which do not contain all items
to obtain solution $A_1\cup A_2$
of $X_1\otimes X_2$  satisfying (\ref{cap}) and (\ref{wdc}), if $s(A_1)+s(A_2)\leq c$. 
\end{remark}

Thus, in order to solve SSGW on a directed co-graph $G$, we use 
solutions for SSP on subexpressions for $G$. 
We consider an instance of SSP such that $G=(A,E)$ is a directed co-graph which
is given by some di-co-expression $X$.
For a subexpression $X'$ of $X$ let $H'(X',s)=1$ if there is a solution $A'$ 
in the digraph defined by $X'$ satisfying (\ref{cap})
such that $s(A')=s$,  otherwise let $H'(X',s)=0$.

\begin{lemma}\label{le1a} Let $0\leq s \leq c$.
\begin{enumerate}
\item $H'(a_j,s)=1$ if and only if $s=0$ or $s=s_j$.

In all other cases  $H'(a_j,s)=0$.

\item $H'(X_1\oplus X_2,s)=1$, if  and only if 
there are  some $0\leq s'\leq s$ and $0\leq s''\leq s$
such that $s'+s''=s$ and $H'(X_1,s')=1$ and $H'(X_2,s'')=1$.

In all other cases  $H'(X_1\oplus X_2,s)=0$.

\item $H'(X_1\oslash X_2,s)=H'(X_1\oplus X_2,s)$

\item $H'(X_1\otimes X_2,s)=H'(X_1\oplus X_2,s)$
\end{enumerate}
\end{lemma}

\begin{proof}
We show the correctness of the stated equivalences. Let $0\leq s \leq c$.
\begin{enumerate}
\item 
The only possible solutions in $\g(a_j)$  are $\emptyset$ and $\{a_j\}$ which have size 
$0$ and $s_j$, respectively.

\item If $H'(X_1  \oplus X_2,s)=1$, then  by Observation \ref{induced-sd2} there are $s'$ and $s''$ 
such that $s'+s''=s$ and solutions in $\g(X_1)$  
and in $\g(X_2)$ which guarantee $H'(X_1,s')=1$ and $H'(X_2,s'')=1$.

Further, for every $s'$ and $s''$, such that $s'+s''=s$, $H'(X_1,s')=1$, and $H'(X_2,s'')=1$, we can
combine these two solutions into one solution of size $s$ in $\g(X_1\oplus X_2)$. Thus,
it holds  that $H'(X_1  \oplus X_2,s)=1$.

\item Since the arcs are irrelevant for the capacity constraint (\ref{cap}), 
it holds that $H'(X_1\oslash X_2,s)=H'(X_1\oplus X_2,s)$.

\item Since the arcs are irrelevant for the capacity constraint (\ref{cap}), 
it holds that $H'(X_1\otimes X_2,s)=H'(X_1\oplus X_2,s)$.
\end{enumerate}
This shows the statements of the lemma.
\end{proof}

This allows us to compute the values  $H(X',s,s')$ as follows.

\begin{lemma}\label{lemma-ssgw-co}
Let $0\leq s,s' \leq c$.
\begin{enumerate}
\item $H(a_j,s,s')=1$ if and only if $s=s'=0$ or $s_j=s=s'$. 

In all other cases  $H(a_j,s,s')=0$.

\item 
$H(X_1\oplus X_2,s,s')=1$, if  and only if 
there are  $0\leq s_1\leq s$, $0\leq s_2\leq s$, $0\leq s'_1\leq s'$, $0\leq s'_2\leq s'$,
such that $s_1+s_2=s$, $s'_1+s'_2=s'$, $H(X_1,s_1,s'_1)=1$,
and $H(X_2,s_2,s'_2)=1$.

In all other cases  $H(X_1\oplus X_2,s,s')=0$.

\item $H(X_1\oslash X_2,s,s')=1$, if  and only if 
\begin{itemize}
\item
$H(X_1,s,s')=1$ for $1\leq s < s(X_1)$ or

\item
$H'(X_2,s)=1$ for $0\leq s \leq  s(X_2)$\footnote{The value $s=0$ is for 
choosing an empty solution in $\g(X_1\oslash X_2)$.} and $s'=0$ or

\item
there are   $1\leq s_2 \leq s(X_2)$, 
such that $s(X_1)+s_2=s$, $o(X_1)=s'$, and $H(X_2,s_2,o(X_2))=1$, or

\item
$s=s(X_1)+s(X_2)$ and $s'=o(X_1)$, or
\item
there are $0\leq s_1 < s(X_1)$, $0\leq s_2\leq s(X_2)$, 
such that $s_1+s_2=s$,  $H(X_1,s_1,s')=1$,
and $H'(X_2,s_2)=1$.

\end{itemize}
In all other cases  $H(X_1\oslash X_2,s,s')=0$.

\item $H(X_1\otimes X_2,s,0)=1$, if  and only if 
\begin{itemize}
\item 
$H'(X_1,s)=1$ for $1\leq s < s(X_1)$ or

\item
$H'(X_2,s)=1$ for $0\leq s <  s(X_2)$\footnote{The value $s=0$ is for 
choosing an empty solution in $\g(X_1\otimes X_2)$.}  or
\item

there are   $1\leq s_2 \leq s(X_2)$, 
such that $s(X_1)+s_2=s$,  and $H(X_2,s_2,o(X_2))=1$, or

\item
there are   $1\leq s_1 \leq s(X_1)$, 
such that $s_1+s(X_2)=s$, and $H(X_1,s_1,o(X_1))=1$, or

\item
$s=s(X_1)+s(X_2)$, or

\item
there exist  $1\leq s_1< s(X_1)$ and $1\leq s_2< s(X_2)$
such that $s_1+s_2=s$, $H'(X_1,s_1)=1$, and $H'(X_2,s_2)=1$.
\end{itemize}

In all other cases  $H(X_1\otimes X_2,s,s')=0$.
\end{enumerate}
\end{lemma}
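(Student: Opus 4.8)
The plan is to prove each of the four cases of Lemma \ref{lemma-ssgw-co} by verifying both directions of the stated ``if and only if'': given a feasible solution $A'$ for SSGW on the composed graph realizing sizes $(s,s')$, I exhibit the corresponding sub-solutions witnessing the right-hand side; and conversely, given sub-solutions satisfying the right-hand side, I glue them into a feasible SSGW solution of the composed graph with the correct total size $s$ and source-size $s'$. The main tools are the structural lemmata already proved: Lemma \ref{induced-sd3} and especially Lemma \ref{induced-sd4} (to restrict a feasible SSGW solution to an induced subdigraph whose non-sources keep all their predecessors), Lemma \ref{induced-sd5} (to handle the case where one side is fully contained in $A'$ and forces all predecessors in the other side), together with Observation \ref{induced-sd2} and the SSP-values $H'$ from Lemma \ref{le1a} for the parts where only the capacity constraint is relevant. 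The guiding intuition is already recorded in Remark \ref{rem-w}: under $\oslash$ and $\otimes$, a sub-solution of $X_1$ that is \emph{incomplete} (misses at least one vertex of $\g(X_1)$) prevents any vertex of $\g(X_2)$ from having \emph{all} its predecessors selected, so the weak constraint (\ref{wdc}) imposes nothing on $X_2$ and one may use an arbitrary capacity-feasible SSP solution $H'(X_2,\cdot)$ there.

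The cases of increasing difficulty are as follows. The base case (item~1) is immediate: the only solutions in $\g(a_j)$ are $\emptyset$ and $\{a_j\}$, and $a_j$ is a source, so its contribution to $s$ and to $s'$ coincide. Item~2 (disjoint union) is the routine part: since $\oplus$ adds no arcs, a vertex is a source in $\g(X_1\oplus X_2)$ iff it is a source in its own part, so both the total size and the source-size split additively across the two parts; Lemma \ref{induced-sd3} (each $\g(X_i)$ is a weakly connected union of components) gives the restriction in one direction, and the absence of new arcs gives the gluing in the other. I would write these two briefly.

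The genuine work is in the $\oslash$ case (item~3) and the $\otimes$ case (item~4), where I must track how the source-size $s'$ behaves and why the listed alternatives are exhaustive and mutually consistent. The key observation for $\oslash$ is that every vertex of $\g(X_2)$ has every vertex of $\g(X_1)$ as a predecessor, so a vertex $y\in\g(X_2)$ is a source in $\g(X_1\oslash X_2)$ iff it is a source in $\g(X_2)$ \emph{and} $\g(X_1)$ is empty in the relevant sense; meanwhile a non-source $y$ of $\g(X_2)$ has its full predecessor set inside $\g(X_2)\cup V(\g(X_1))$. Thus the analysis forks on whether the $\g(X_1)$-part of $A'$ is complete: if \emph{incomplete} (the first and last bullets), no weak implication crosses from $X_1$ into $X_2$, the chosen part of $X_2$ need only satisfy capacity ($H'(X_2,\cdot)$), and the sources of the whole are exactly the sources taken from $X_1$, giving $s'=s'_1$; if \emph{complete} (the middle bullets), then $V(\g(X_1))\subseteq A'$ forces, via Lemma \ref{induced-sd5}, that the $X_2$-part is itself a feasible SSGW solution of $\g(X_2)$, and — because every predecessor from $X_1$ is present — a vertex of $\g(X_2)$ lies in $A'$ as soon as its $\g(X_2)$-internal predecessors do, so I must feed the $X_2$-solution with its \emph{own} sources present, explaining the appearance of $H(X_2,s_2,o(X_2))$ and the source-size $s'=o(X_1)$. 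The $\otimes$ case is the same dichotomy carried out symmetrically on both sides (each vertex of either part has every vertex of the other as predecessor), which is exactly why the whole digraph has no sources and the second argument of $H$ is forced to $0$; the six bullets enumerate the combinations ``$X_1$ incomplete'', ``$X_2$ incomplete'', and ``both complete'', together with their degenerate empty-set subcases.

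\textbf{The hardest step} will be verifying exhaustiveness and non-redundancy of the bullet lists in items~3 and~4 while keeping the bookkeeping of the source-size $s'$ exactly right: I must check that the completeness/incompleteness dichotomy on each side genuinely covers all feasible $A'$, that the boundary values (empty solution, $s=s(X_1)+s(X_2)$) are assigned to the correct bullet, and — most delicately — that in the ``complete'' branches the weak constraint inside $X_2$ (or $X_1$) is correctly recaptured as a \emph{self-contained} SSGW instance via Lemma \ref{induced-sd5}, with its sources accounted for by $o(X_2)$ rather than by the ambient source-size. I expect the reverse (gluing) directions to be where the source-size arithmetic $s'=s'_1+s'_2$, $s'=o(X_1)$, or $s'=0$ must be argued from the source structure of the composition rather than merely asserted.
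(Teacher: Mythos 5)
Your plan is correct and follows essentially the same route as the paper's own proof: the same base and disjoint-union cases via Lemma \ref{induced-sd3}, and for $\oslash$ and $\otimes$ the same dichotomy on whether the $\g(X_1)$-part (resp.\ each part) of the solution is complete, handled with Lemma \ref{induced-sd4}, Lemma \ref{induced-sd5}, Observation \ref{induced-sd2}, the $H'$ values of Lemma \ref{le1a}, and the combination principle of Remark \ref{rem-w}, including the correct accounting of $s'$ as $s'_1$, $o(X_1)$, or $0$. No substantive difference from the paper's argument.
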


\begin{proof}
We show the correctness of the stated equivalences. Let $0\leq s,s' \leq c$.
\begin{enumerate}
\item 
The only possible solutions in $\g(a_j)$  are $\emptyset$ and $\{a_j\}$ which have size 
$0$ and $s_j$, respectively. Further, a single vertex is a source.

\item 
If $H(X_1  \oplus X_2,s,s')=1$, then  by Lemma \ref{induced-sd3} there are $s_1$, $s_2$ and $s'_1$, $s'_2$ 
such that $s_1+s_2=s$, $s'_1+s'_2=s'$ and solutions in $\g(X_1)$  
and in $\g(X_2)$ which guarantee $H(X_1,s_1,s'_1)=1$ and $H(X_2,s_,s'_2)=1$.

\medskip
Further, for every $0\leq s_1\leq s$, $0\leq s_2\leq s$, $0\leq s'_1\leq s'$, $0\leq s'_2\leq s'$,
such that $s_1+s_2=s$, $s'_1+s'_2=s'$, $H(X_1,s_1,s'_1)=1$,
and $H(X_2,s_2,s'_2)=1$, it holds  that $H(X_1  \oplus X_2,s,s')=1$
since the operation (disjoint union) 
does not create  new edges.

\item 
If $H(X_1  \oslash X_2,s,s')=1$,  then we distinguish the following cases.
If the solution of size $s$ in $\g(X_1\oslash X_2)$ is a non-empty proper 
subset of the vertices of  $\g(X_1)$, then by Lemma \ref{induced-sd4} there 
is a solution in $\g(X_1)$ which guarantees $H(X_1,s,s')=1$.

Next, assume that the solution $A'$ of size $s$ in $\g(X_1\oslash X_2)$ contains no 
vertices of  $\g(X_1)$. Since every solution  satisfying constraints (\ref{cap}) and
(\ref{wdc}) is also a solution which satisfies only  (\ref{cap}), we have
$H'(X_1  \oslash X_2,s)=1$. And since $A'$  contains no
vertices of  $\g(X_1)$, Observation \ref{induced-sd2} implies that 
there is a solution in $\g(X_2)$ which
guarantees $H'(X_2,s)=1$.

If the solution $A'$ of size $s$ in $\g(X_1\oslash X_2)$ contains  all
vertices of  $\g(X_1)$, then the order
composition and the weak digraph constraint (\ref{wdc}) imply 
that the set $A'$ can be extended by every solution of $\g(X_2)$
which includes all sources of $\g(X_2)$. Thus, by Lemma \ref{induced-sd5}, there
is a solution in $\g(X_2)$, which guarantees  $H(X_2,s-s(X_1),o(X_2))=1$.

Further, if the solution $A'$ of size $s$ in $\g(X_1\oslash X_2)$ contains  all
vertices of  $\g(X_1)$, it is also possible to extend $A'$ by  all
vertices of  $\g(X_2)$ and thus  $s=s(X_1)+s(X_2)$.

Finally, if the solution $A'$ of size $s$ in $\g(X_1\oslash X_2)$ contains  
some but not all vertices of  $\g(X_1)$ and possibly vertices of $\g(X_2)$, then by  Lemma
\ref{induced-sd4} and Observation \ref{induced-sd2}  there are  
$s_1$ and $s_2$ such that $s_1+s_2=s$ and solutions in $\g(X_1)$  
and in $\g(X_2)$ which guarantee $H(X_1,s_1,s')=1$ and $H'(X_2,s_2)=1$.

\medskip
The solutions of size $1\leq s < s(X_1)$ from $\g(X_1)$ 
remain feasible for $\g(X_1\oslash X_2)$. 

Every subset $A'$ of size $0\leq s \leq  s(X_2)$ from $\g(X_2)$ which satisfies  (\ref{cap}) 
leads to a solution $A'$ of size $s$ satisfying (\ref{cap}) and
(\ref{wdc}) in $\g(X_1\oslash X_2)$ since every vertex of  $\g(X_2)$
gets a predecessor in  $\g(X_1)$, which is not in $A'$.

Further, the set of all vertices of $\g(X_1)$  extended by every solution of $\g(X_2)$ of size $s_2$
which includes all sources of $\g(X_2)$ leads to a feasible solution for $\g(X_1\oslash X_2)$
of size $s(X_1)+s_2$.  
The size of the sources has to be updated to $o(X_1)$, since the sources of $\g(X_1)$ are 
the sources of $\g(X_1\oslash X_2)$.

Moreover, the complete vertex set of $\g(X_1\oslash X_2)$ is obviously
a feasible SSGW solution if it fulfills the capacity constraint.

Furthermore, by Remark \ref{rem-w} we can combine SSGW solutions 
of size $s_1< s(X_1)$ of $\g(X_1)$ and  SSP solutions  of size $s_2$ of $\g(X_2)$ to a SSGW solution
of size $s_1+s_2$ of $\g(X_1\oslash X_2)$.

\item 
First, we want to mention that $H(X_1  \otimes X_2,s,s')=1$ is only possible 
for $s'=0$, since $\g(X_1  \otimes X_2)$
has no sources.
If $H(X_1  \otimes X_2,s,0)=1$, we distinguish the following cases.
Assume the solution of size $s$ in $\g(X_1\otimes X_2)$ is a proper and non-empty  subset of the
vertices of  $\g(X_1)$. Since $H(X_1  \otimes X_2,s,0)=1$, it holds that $H'(X_1  \otimes X_2,s)=1$. 
And since $A'$  contains only 
vertices of  $\g(X_1)$, Observation  \ref{induced-sd2} implies that 
there is a solution in $\g(X_1)$ which
guarantees $H'(X_1,s)=1$.

If the solution of size $s$ in $\g(X_1\otimes X_2)$  is a proper subset
of the vertices of  $\g(X_2)$, then by the same arguments as for $\g(X_1)$  there 
is a solution in $\g(X_2)$ which guarantees $H'(X_2,s)=1$.

If the solution $A'$ of size $s$ in $\g(X_1\otimes X_2)$ contains  all
vertices of  $\g(X_1)$, then the series
composition and the weak digraph constraint (\ref{wdc}) imply 
that the set $A'$ can be extended by all solutions of $\g(X_2)$
which include all sources of $\g(X_2)$. Thus, by Lemma \ref{induced-sd5}, there
is a solution in $\g(X_2)$, which guarantees  $H(X_2,s-s(X_1),o(X_2))$.

If the solution $A'$ of size $s$ in $\g(X_1\otimes X_2)$ contains  all
vertices of  $\g(X_2)$,  then by the same arguments as for $\g(X_1)$ 
there
is a solution in $\g(X_1)$, which guarantees  $H(X_1,s-s(X_2),o(X_1))$.

If the solution $A'$ of size $s$ in $\g(X_1\otimes X_2)$ contains  all
vertices of  $\g(X_1)$ or all
vertices of  $\g(X_2)$, then by (\ref{wdc}) solution $A'$ can be extended by
all vertices of  $\g(X_2)$  or  all
vertices of  $\g(X_1)$, respectively, and thus  $s=s(X_1)+s(X_2)$.

Finally, if the solution $A'$ of size $s$ in $\g(X_1\otimes X_2)$ contains  
some but not all vertices of  $\g(X_1)$ and some but not all vertices of $\g(X_2)$, then by
Observation \ref{induced-sd2} there are  
$s_1$ and $s_2$ such that $s_1+s_2=s$ and solutions in $\g(X_1)$  
and in $\g(X_2)$ which guarantee $H'(X_1,s_1)=1$ and $H'(X_2,s_2)=1$.

\medskip
Every subset $A'$ of size $1\leq s <  s(X_1)$ from $\g(X_1)$ which satisfies  (\ref{cap}) 
leads to a solution $A'$ of size $s$ satisfying constraints (\ref{cap}) and
(\ref{wdc}) in $\g(X_1\otimes X_2)$ since every vertex of  $\g(X_1)$
gets a predecessor in  $\g(X_2)$, which is not in $A'$.

In the same way every subset $A'$ of size $0\leq s <  s(X_2)$ from $\g(X_2)$ which satisfies  (\ref{cap}) 
leads to a solution $A'$ of size $s$ satisfying (\ref{cap}) and
(\ref{wdc}) in $\g(X_1\otimes X_2)$ since every vertex of  $\g(X_2)$
gets a predecessor in  $\g(X_1)$, which is not in $A'$.

Further, all the set of all vertices of $\g(X_1)$  extended by every solution of $\g(X_2)$ of size $s_2$
which includes all sources of $\g(X_2)$ leads to a feasible solution for $\g(X_1\otimes X_2)$
of size $s(X_1)+s_2$ 
and the set of all vertices of $\g(X_2)$  extended by every solution of $\g(X_1)$ of size $s_1$
which includes all sources of $\g(X_1)$ leads to a feasible solution for $\g(X_1\otimes X_2)$ of size $s_1+s(X_2)$.

Moreover, the complete vertex set of $\g(X_1\otimes X_2)$ is obviously
a feasible SSGW solution.

Furthermore, by Remark \ref{rem-w}, we can combine SSP solutions of
size $s_1< s(X_1)$ and SSP solutions of size $s_2<s(X_2)$ to a SSGW solution
of size $s_1+s_2$.
\end{enumerate}
This shows the statements of the lemma.
\end{proof}

In order to solve the SSGW problem we traverse  di-co-tree $T$ in a bottom-up order
and perform the following computations depending on the type of operation.

\begin{corollary}
There is a solution with sum $s$  for some
instance of SSGW such that $G$ is a directed co-graph which
is given by some  di-co-expression $X$
if and only if $H(X,s,s')=1$. Therefore, $OPT(I)=\max\{s \mid H(X,s,s')=1\}$.
\end{corollary}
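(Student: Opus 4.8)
The plan is to establish a bijection-style correctness statement connecting feasible SSGW solutions on $\g(X)$ with the indicator entries $H(X,s,s')$ computed by the dynamic program of Lemma \ref{lemma-ssgw-co}. Since the optimum $OPT(I)$ is by definition the maximum size of a feasible solution, it suffices to show that for every $0\le s\le c$ there exists a feasible SSGW solution $A'$ with $s(A')=s$ if and only if $H(X,s,s')=1$ for some $s'$; then $OPT(I)=\max\{s\mid \exists s'\; H(X,s,s')=1\}$ follows immediately.

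First I would prove the two directions by structural induction on the di-co-expression $X$, exactly mirroring the case analysis already carried out in the proof of Lemma \ref{lemma-ssgw-co}. The base case is the single vertex $v$, handled by part (1) of that lemma. For the inductive step I would treat the three operations $X_1\oplus X_2$, $X_1\oslash X_2$, and $X_1\otimes X_2$ separately, using in each case the corresponding part of Lemma \ref{lemma-ssgw-co} together with the decomposition lemmata: the forward direction (a feasible solution $A'$ yields $H=1$) relies on Lemmata \ref{induced-sd3}, \ref{induced-sd4}, and \ref{induced-sd5} to restrict $A'$ to the subgraphs defined by $X_1$ and $X_2$ and obtain sub-solutions of the correct sizes, while the backward direction (the entry $H=1$ yields a feasible $A'$) uses Remark \ref{rem-w} to recombine sub-solutions into a feasible solution of $\g(X)$. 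The auxiliary values $H'(X',s)$ (feasibility of mere SSP solutions, Lemma \ref{le1a}) and $o(X')$ (total size of sources) must be tracked in parallel, since the order and series cases combine genuine SSGW sub-solutions with pure SSP sub-solutions whenever one side is fully included.

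The main obstacle I expect is bookkeeping the source-size coordinate $s'$ correctly through the operations, rather than any deep argument. In $\g(X_1\oslash X_2)$ and $\g(X_1\otimes X_2)$ the set of sources of the composed graph differs from the sources of the components (all of $\g(X_2)$ loses its source status under $\oslash$, and the series product has no sources at all), so I must verify that each branch of Lemma \ref{lemma-ssgw-co} assigns $s'$ consistently with the actual source structure of $\g(X)$. The delicate point is the interaction between the weak constraint (\ref{wdc}) and full inclusion of one side: when all of $\g(X_1)$ is taken, the extension into $\g(X_2)$ is forced to contain every source of $\g(X_2)$, which is precisely why the entries $H(X_2,\cdot,o(X_2))$ appear; I would justify this using Lemma \ref{induced-sd5} applied to the $2$-partition $(V_1,V_2)$ induced by the composition.

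Since the statement is the exact analogue of Corollary \ref{cor1} for SSG, I would keep the writeup brief, asserting that each equivalence is a direct consequence of the matching item of Lemma \ref{lemma-ssgw-co} and of the feasibility established there, and conclude that the maximisation $OPT(I)=\max\{s\mid H(X,s,s')=1\}$ holds by evaluating the table at the root $X_r=X$.
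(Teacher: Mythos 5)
Your proposal is correct and matches the paper's approach: the paper states this corollary without a separate proof, treating it as an immediate consequence of the structural induction already embodied in Lemma \ref{lemma-ssgw-co}, which is precisely the induction you outline. Your additional care about the bookkeeping of $s'$ and the role of $H'$ and $o(\cdot)$ is sound but duplicates work the paper has already placed inside the proof of that lemma.
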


The next result can be obtained by similar arguments as given within
the proof of Theorem \ref{th-co-ssg}.

\begin{theorem}\label{ssgw}
SSGW can be solved in directed co-graphs with $n$
vertices and $m$ arcs
in  $\bigo(n\cdot c^4 +m)$ time and  $\bigo(n\cdot c^2)$ space.
\end{theorem}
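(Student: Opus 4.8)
The plan is to mirror the proof of Theorem~\ref{th-co-ssg} exactly, transferring its structure to the richer recursion of Lemma~\ref{lemma-ssgw-co}. First I would set up the same framework: given a directed co-graph $G=(A,E)$, compute a di-co-tree $T$ with root $r$ in $\bigo(n+m)$ time using \cite{CP06}, and precompute all the quantities $s(X_u)$ and $o(X_u)$ for subexpressions $X_u$ by a single bottom-up pass. The correctness of reading off $OPT(I)=\max\{s \mid H(X,s,s')=1\}$ is already supplied by the preceding Corollary, so the only genuine content here is the running-time and space accounting. I would traverse $T$ bottom-up and, at each node, fill in the table $H(X_u,s,s')$ for all $0\leq s,s'\leq c$ following the four cases of Lemma~\ref{lemma-ssgw-co}, using the auxiliary tables $H'$ computed via Lemma~\ref{le1a}.

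The core of the argument is the per-node cost estimate. A leaf contributes $\bigo(1)$ per pair $(s,s')$, hence $\bigo(c^2)$. The decisive case is the disjoint union $\oplus$: by Lemma~\ref{lemma-ssgw-co}(2), determining a single entry $H(X_1\oplus X_2,s,s')$ requires searching over splittings $s=s_1+s_2$ and $s'=s'_1+s'_2$, which is $\bigo(c^2)$ work per entry and therefore $\bigo(c^4)$ for the whole table at that node. The order and series cases each reduce to $\bigo(c)$ work per entry (one free index to range over in the various bullet subcases of parts~(3) and~(4)), giving $\bigo(c^3)$ per node; the auxiliary $H'$ tables cost at most $\bigo(c^2)$ per node as in Lemma~\ref{le1a}. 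Thus every node is handled in $\bigo(c^4)$ time. Since $T$ has $n$ leaves and $n-1$ internal nodes, summing over all nodes yields $\bigo(n\cdot c^4)$, and adding the $\bigo(n+m)$ cost of building $T$ gives the claimed $\bigo(n\cdot c^4+m)$. For space, each node stores a table indexed by $(s,s')$ of size $\bigo(c^2)$; since the bottom-up traversal only needs the tables of a node's children while computing the node, a careful accounting keeps the total storage at $\bigo(n\cdot c^2)$.

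I expect the main obstacle to be the bookkeeping of the second coordinate $s'$, the total source-size, across the composition rules, since it behaves differently under the three operations: it is additive under $\oplus$, it collapses (to $o(X_1)$ or to $0$) in several branches of $\oslash$, and it must be identically $0$ under $\otimes$ because $\g(X_1\otimes X_2)$ has no sources. Care is needed to confirm that these collapses genuinely reduce the per-entry work for $\oslash$ and $\otimes$ to $\bigo(c)$ rather than $\bigo(c^2)$, so that only the union operation incurs the $c^4$ factor; this is precisely what Remark~\ref{rem-w} and the subcase analysis of Lemma~\ref{lemma-ssgw-co} were designed to guarantee. Once the per-node bounds are verified, the summation over the di-co-tree and the overall time and space bounds follow by the same argument as in Theorem~\ref{th-co-ssg}, and this completes the proof.
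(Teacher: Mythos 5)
Your proposal is correct and follows essentially the same route as the paper, which simply remarks that the result is obtained by the same arguments as Theorem~\ref{th-co-ssg}; you carry out exactly that plan, with the correct per-node accounting (the $\oplus$-nodes dominating at $\bigo(c^2)$ work per table entry, hence $\bigo(c^4)$ per node, while the $\oslash$- and $\otimes$-cases and the auxiliary $H'$ tables stay within this bound). The time bound $\bigo(n\cdot c^4+m)$ and space bound $\bigo(n\cdot c^2)$ then follow by summing over the $\bigo(n)$ nodes of the di-co-tree, as you state.
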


\begin{example}\label{ex-ssgw-co}
We consider the SSGW instance $I$ with $n=4$ items using   $\g(X)$ shown in Fig.~\ref{F02} and
defined 
by the expression in 
(\ref{eq-ori-c4}), $c=7$, and the following sizes.
$$
\begin{array}{l|llll}
j  &  1 & 2 & 3 & 4 \\
\hline
s_j&1 & 2 & 2  & 3
\end{array}
$$

The rules  given in  Lemma \ref{le1a} lead to the values in Table \ref{tab-ssgx-co}  and the rules given in Lemma \ref{lemma-ssgw-co} lead to
the values in Table \ref{tab-bigg-a}.
Thus, the optimal solution is $\{a_2,a_3,a_4\}$ with $OPT(I)=7$.
\end{example}

\begin{table}[h!]
\caption{Table for Example \ref{ex-ssgw-co}\label{tab-ssgx-co}}
$$
\begin{array}{l|llllllll}
                                        &\multicolumn{8}{c}{H'(X',s)}      \\
                                \hline
X'~~~~~~~~~~~~~~~~~~~~~~~~~~~s          &0&1&2&3&4&5&6&7 \\
\hline
v_1                                     &1&1&0&0&0&0&0&0  \\
v_2                                     &1&0&1&0&0&0&0&0  \\
v_3                                     &1&0&1&0&0&0&0&0  \\
v_4                                     &1&0&0&1&0&0&0&0  \\
\hline
v_1\oplus v_3                           &1&1&1&1&0&0&0&0  \\
v_2 \otimes v_4                         &1&0&1&1&0&1&0&0  \\
(v_1\oplus v_3)\oslash(v_2 \otimes v_4) &1&1&1&1&1&1&1&1  \\
\hline
\end{array}
$$
\end{table}

\begin{table}[h!]
\centering
\caption{Table for Example \ref{ex-ssg-spg}\label{tab-bigg-a}}

\rotatebox{90}{\footnotesize
$
\begin{array}{l|llllllll|llllllll|llllllll|llllllll|}
  & \multicolumn{32}{c|}{H(X',s,s')}   \\
\hline
      & \multicolumn{8}{c|}{s'=0}   & \multicolumn{8}{c|}{s'=1}     & \multicolumn{8}{c|}{s'=2}   & \multicolumn{8}{c|}{s'=3}\\
X'~~~~~~~~~~~~~~~~~~~~~~~~~~~~~~~~~~~~~~~~~~~~~~~s              &0&1&2&3&4&5&6&7  & 0& 1 & 2 & 3 &  4 & 5 & 6 & 7   & 0&  1 & 2 & 3 &  4 & 5 & 6 & 7 & 0& 1 & 2 & 3 &  4 & 5 & 6 & 7   \\
\hline
v_1                                                             &1 &0&0&0&0&0&0&0& 0&1&0&0&0&0&0&0& 0&0&0&0&0&0&0&0& 0&0&0&0&0&0&0&0\\
v_2                                                             &1 &0&0&0&0&0&0&0& 0&0&0&0&0&0&0&0& 0&0&1&0&0&0&0&0& 0&0&0&0&0&0&0&0\\
v_3                                                             &1 &0&0&0&0&0&0&0& 0&0&0&0&0&0&0&0& 0&0&1&0&0&0&0&0& 0&0&0&0&0&0&0&0\\
v_4                                                             &1 &0&0&0&0&0&0&0& 0&0&0&0&0&0&0&0& 0&0&0&0&0&0&0&0& 0&0&0&1&0&0&0&0\\

\hline
v_1\oplus v_3                                                & 1&0&0&0&0&0&0&0& 0&1&0&0&0&0&0&0& 0&0&1&0&0&0&0&0& 0&0&0&1&0&0&0&0\\
v_2 \otimes v_4                                              & 1&0&0&0&0&1&0&0& 0&0&0&0&0&0&0&0& 0&0&0&0&0&0&0&0& 0&0&0&0&0&0&0&0\\
(v_1\oplus v_3)\oslash(v_2 \otimes v_4)                      & 1&0&1&1&0&1&0&0& 0&1&0&1&1&0&1&0& 0&0&1&0&1&1&0&1& 0&0&0&1&0&0&0&0\\

\hline
\\
\\
\\
  & \multicolumn{32}{c|}{H(X',s,s')}   \\
\hline
      &  \multicolumn{8}{c|}{s'=4}     & \multicolumn{8}{c|}{s'=5}   & \multicolumn{8}{c|}{s'=6}  & \multicolumn{8}{c|}{s'=7} \\
X'~~~~~~~~~~~~~~~~~~~~~~~~~~~~~~~~~~~~~~~~~~~~~~~s              &0&1&2&3&4&5&6&7  & 0& 1 & 2 & 3 &  4 & 5 & 6 & 7   & 0&  1 & 2 & 3 &  4 & 5 & 6 & 7 & 0& 1 & 2 & 3 &  4 & 5 & 6 & 7   \\
\hline
v_1                                                             &0&0&0&0&0&0&0&0& 0&0&0&0&0&0&0&0& 0&0&0&0&0&0&0&0& 0&0&0&0&0&0&0&0\\
v_2                                                             &0&0&0&0&0&0&0&0& 0&0&0&0&0&0&0&0& 0&0&0&0&0&0&0&0& 0&0&0&0&0&0&0&0\\
v_3                                                             &0&0&0&0&0&0&0&0& 0&0&0&0&0&0&0&0& 0&0&0&0&0&0&0&0& 0&0&0&0&0&0&0&0\\
v_4                                                             &0&0&0&0&0&0&0&0& 0&0&0&0&0&0&0&0& 0&0&0&0&0&0&0&0& 0&0&0&0&0&0&0&0\\

\hline
v_1\oplus v_3                                               & 0&0&0&0&0&0&0&0& 0&0&0&0&0&0&0&0& 0&0&0&0&0&0&0&0& 0&0&0&0&0&0&0&0\\
v_2 \otimes v_4                                              & 0&0&0&0&0&0&0&0& 0&0&0&0&0&0&0&0& 0&0&0&0&0&0&0&0& 0&0&0&0&0&0&0&0\\
(v_1\oplus v_3)\oslash(v_2 \otimes v_4)                      & 0&0&0&0&0&0&0&0& 0&0&0&0&0&0&0&0& 0&0&0&0&0&0&0&0& 0&0&0&0&0&0&0&0\\

\hline

\end{array}
$
}
\end{table}

\section{SSG and SSGW on series-parallel digraphs}\label{sol-spd}

\subsection{Series-parallel digraphs}

We recall the definitions from \cite{BG18} which are
based on \cite{VTL82}. First, we introduce two operations
for two vertex-disjoint digraphs $G_1=(V_1,E_1)$ and $G_2=(V_2,E_2)$.
Let $O_1$ be the set of vertices of outdegree $0$ (set of sinks) in $G_1$ and
$I_2$ be the set of vertices of indegree $0$ (set of sources) in $G_2$.

\begin{itemize}
\item 
The {\em parallel composition} of $G_1$ and $G_2$,
denoted by $G_1 \cup G_2$, is the digraph with vertex set $V_1\cup V_2$
and arc set $E_1\cup E_2$.

\item 
The {\em  series composition}  of $G_1$ and $G_2$,
denoted by $G_1 \times G_2$ is the digraph with vertex set $V_1\cup V_2$
and arc set $E_1\cup E_2\cup (O_1 \times I_2)$.
\end{itemize}

\begin{definition}[Minimal series-parallel digraphs]
The class of {\em minimal series-parallel digraphs}, {\em msp-digraphs} for
short, is recursively defined as follows.
\begin{enumerate}
\item Every digraph on a single vertex $(\{v\},\emptyset)$,
denoted by $v$, is a {\em minimal series-parallel digraph}.

\item If  $G_1$ and $G_2$
are vertex-disjoint minimal series-parallel digraphs, then
\begin{enumerate}
\item the parallel composition
$G_1 \cup G_2$ and

\item  then series composition
$G_1 \times G_2$
are  {\em minimal series-parallel digraphs}.
\end{enumerate}
\end{enumerate}
The class of minimal series-parallel digraphs is denoted by $\MSP$.
\end{definition}

Every expression $X$ using  these three operations
is called an {\em msp-expression} and
$\g(X)$ the defined digraph. 

\begin{example}\label{ex-msp}
\begin{enumerate}
\item
The msp-expression 
\begin{equation}
X=((v_1\cup v_2) \times (v_3 \cup v_4)) \label{eq-ori-c4msp}
\end{equation}
defines  $\g(X)$ shown in Fig.~\ref{F04}.

\item
The msp-expression 
\begin{equation}
X=(((v_1\times v_2) \cup (v_3\times v_4)) \times (v_5 \times v_6)) \label{eq-ori-c4msp2}
\end{equation}
defines  $\g(X)$ shown in Fig.~\ref{F05}.
\end{enumerate}
\end{example}

\begin{figure}[hbtp]
\centering
\parbox[b]{60mm}{
\centerline{\epsfxsize=30mm \epsfbox{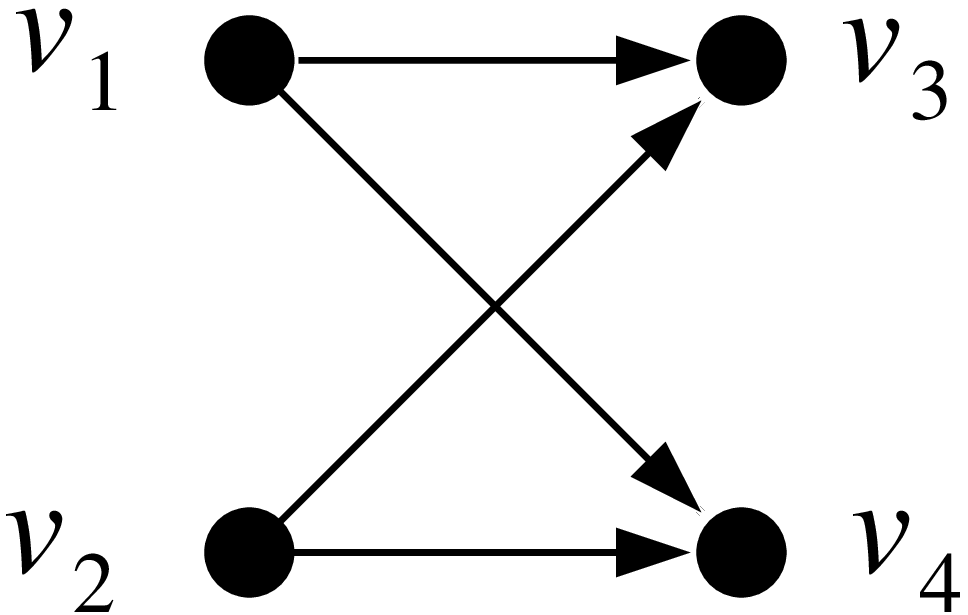}}
\caption{Digraph  in Example \ref{ex-msp}(1.).}
\label{F04}}
\hspace{1cm}
\parbox[b]{60mm}{
\centerline{\epsfxsize=45mm \epsfbox{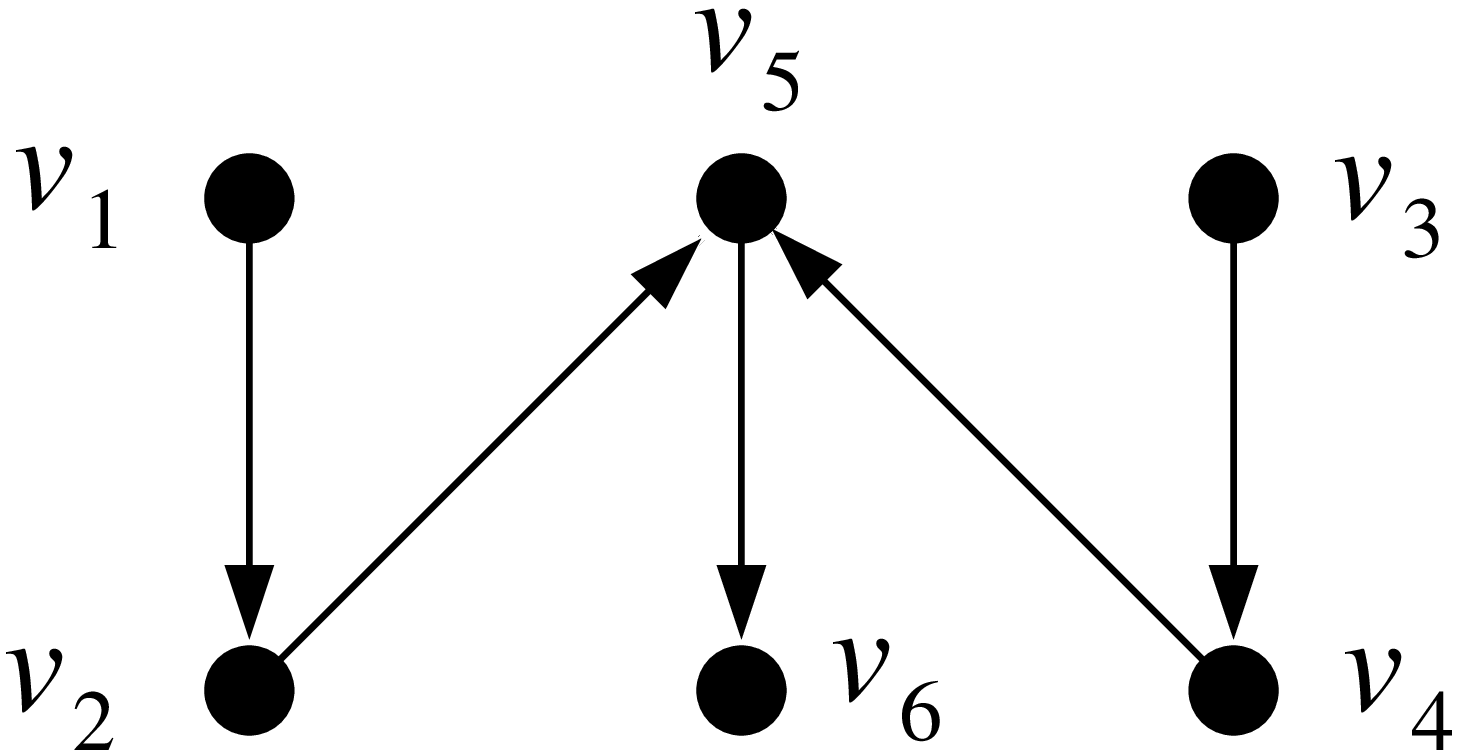}}
\caption{Digraph in Example \ref{ex-msp}(2.).
}
\label{F05}}
\end{figure}

For every minimal series-parallel digraph we can define a tree structure,
denoted as {\em msp-tree}.\footnote{In \cite{VTL82} the tree-structure for an msp-digraphs
is denoted as binary decomposition tree.} The leaves of the msp-tree represent the
vertices of the graph and the inner vertices of the msp-tree  correspond
to the operations applied on the subexpressions defined by the subtrees.
For every minimal series-parallel digraph one can construct a msp-tree in linear time,
see \cite{VTL82}.

\medskip
\begin{observation}\label{ex-outtsp}
Every in- or out-rooted
tree is a  minimal series-parallel digraph.
\end{observation}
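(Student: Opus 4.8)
The plan is to prove the claim by induction on the number of vertices, handling out-rooted trees directly and deducing the in-rooted case by symmetry. For the base case, a single-vertex tree is a minimal series-parallel digraph by definition. For the inductive step, I would take an out-rooted tree $T$ with root $r$, delete $r$, and let $T_1,\dots,T_k$ be the resulting components, where $T_i$ is the subtree hanging at the child $c_i$ of $r$. Since all arcs point away from $r$, each $T_i$ is again an out-rooted tree on fewer vertices, so by the induction hypothesis each $T_i\in\MSP$.

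The core of the argument is to reassemble $T$ from the $T_i$ using exactly one parallel and one series composition. First I would form $G=T_1\cup\cdots\cup T_k$ by iterated parallel composition, which is legitimate since $\cup$ is associative; as $\cup$ is just disjoint union, no new arcs appear and the set of sources of $G$ is precisely $\{c_1,\dots,c_k\}$. Then I would observe that the single vertex $r$ is simultaneously a source and a sink, so the series composition $r\times G$ adds exactly the arcs from the unique sink $r$ to every source of $G$, namely $(r,c_1),\dots,(r,c_k)$. These are precisely the arcs of $T$ incident with $r$, whence $r\times G=T$ and $T\in\MSP$.

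The step I expect to require the most care is the bookkeeping of sources and sinks. I must verify that in an out-rooted tree the root is the unique source because every other vertex has indegree $1$, that sources behave additively under parallel composition, and that a single vertex counts as both a source and a sink so that the series composition attaches $r$ to the $c_i$ exactly as intended. Once these facts are pinned down, the construction matches $T$ arc-for-arc and nothing else has to be checked.

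Finally, for in-rooted trees I would run the mirror-image argument: the same decomposition yields in-rooted subtrees whose roots $c_i$ are now their unique sinks, and the series composition $(T_1\cup\cdots\cup T_k)\times r$ inserts exactly the arcs $(c_1,r),\dots,(c_k,r)$, rebuilding $T$. Alternatively, I could first show that $\MSP$ is closed under reversing all arcs, noting that parallel composition is preserved while $G_1\times G_2$ reverses to $G_2^{\mathrm{rev}}\times G_1^{\mathrm{rev}}$, and then use that the reversal of an out-rooted tree is an in-rooted tree, which immediately transfers the out-rooted result to the in-rooted case.
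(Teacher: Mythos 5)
The paper states this as an Observation without providing any proof, so there is nothing to compare against; judged on its own, your induction is correct and complete. The key bookkeeping facts you identify --- that the root of each out-rooted subtree $T_i$ is its unique source (every non-root vertex having indegree $1$), that parallel composition preserves sources, and that the single vertex $r$ is both a source and a sink so that $r\times(T_1\cup\cdots\cup T_k)$ adds exactly the arcs $(r,c_1),\ldots,(r,c_k)$ --- all hold under the paper's definitions of $\cup$ and $\times$, and the mirror argument (or the closure of $\MSP$ under arc reversal, using $(G_1\times G_2)^{\mathrm{rev}}=G_2^{\mathrm{rev}}\times G_1^{\mathrm{rev}}$) settles the in-rooted case.
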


\begin{lemma}\label{lemma-reach}
Let $G=(V,E)$ be a minimal series-parallel digraph. Then, for every vertex $x\in V$
there is a sink $x_s$ of $G$, such that there is a directed path from $x$ to $x_s$ in $G$ and there
is a source $x_o$ of $G$, such that there is a path from $x_o$ to $x$ in $G$.
\end{lemma}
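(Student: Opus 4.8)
The plan is to prove the statement by structural induction on the msp-expression $X$ defining $G$ (equivalently, a bottom-up induction over the msp-tree). I would carry the two halves of the claim through the induction \emph{together}, since I will need both facts about $G_1$ and $G_2$ when I reach the series case; moreover the two halves are dual to each other (reversing all arcs sends $G_1\times G_2$ to $G_2^R\times G_1^R$ and $G_1\cup G_2$ to $G_1^R\cup G_2^R$, interchanging sources and sinks), so I would write out only the sink half in detail and dispatch the source half as ``the entirely analogous argument with the roles of sinks/sources and of $G_1$/$G_2$ exchanged.'' Throughout, I would adopt the convention that the required directed path is trivial (a single vertex) when $x$ is itself the sink $x_s$ or the source $x_o$; this is exactly what happens in the base case.

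For the base case $G=(\{v\},\emptyset)$, the vertex $v$ is simultaneously a source and a sink, and the trivial path settles both halves. For the parallel composition $G=G_1\cup G_2$, no arc joins $V_1$ and $V_2$, so the sinks of $G$ are precisely the sinks of $G_1$ together with those of $G_2$, and every directed path of $G$ lies inside a single $G_i$. Hence for $x\in V_i$ the sink and the path delivered by the induction hypothesis on $G_i$ remain a sink and a path in $G$, and similarly for the source half.

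The series composition $G=G_1\times G_2$ is the substantive case. First I would identify the sinks: the new arcs $O_1\times I_2$ begin in $V_1$ and end in $V_2$, so every sink of $G_2$ keeps outdegree $0$ in $G$ and is therefore a sink of $G$. I also note that applying the induction hypothesis to any vertex of $G_1$ forces $O_1\neq\emptyset$, and applying the (source half of the) hypothesis to any vertex of $G_2$ forces $I_2\neq\emptyset$. Now fix $x\in V$. If $x\in V_2$, the induction hypothesis on $G_2$ gives a sink $x_s$ of $G_2$ and a path $x\leadsto x_s$ inside $G_2$; both survive in $G$ and $x_s$ is a sink of $G$. If $x\in V_1$, the induction hypothesis on $G_1$ gives a sink $y\in O_1$ of $G_1$ with a path $x\leadsto y$ in $G_1$; picking any $z\in I_2$, the arc $(y,z)$ lies in $G$, and the induction hypothesis on $G_2$ applied to $z$ gives a sink $x_s$ of $G_2$ with a path $z\leadsto x_s$ in $G_2$. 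Concatenating $x\leadsto y$, the arc $(y,z)$, and $z\leadsto x_s$ yields a directed path from $x$ to the sink $x_s$ of $G$, as required.

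The step I expect to need the most care is this series case, specifically the bookkeeping behind the concatenation: I must be sure that the vertex $y$ produced for $x\in V_1$ really is a sink of $G_1$ (so that $y\in O_1$ and an outgoing arc into $I_2$ is genuinely available), that $I_2$ and $O_1$ are nonempty so that the crossing arc exists, and that the three pieces splice into one honest directed path even in the degenerate situations where $x=y$ or $z=x_s$. Once the sink set (and, dually, the source set) of each composition has been pinned down, everything else is routine.
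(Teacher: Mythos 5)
Your proof is correct and follows exactly the route the paper indicates: the paper's own proof is just the one-line remark that the claim follows by induction on the recursive definition of msp-digraphs, and your argument is a careful, accurate elaboration of that induction (including the key points that the sinks of $G_1\times G_2$ are the sinks of $G_2$, that $O_1,I_2\neq\emptyset$, and that the spliced path is genuinely a path because $V_1$ and $V_2$ are disjoint). No gaps.
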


\begin{proof}
Can be shown by induction on the recursive definition of minimal 
series-parallel digraphs. 
\end{proof}

\begin{lemma}\label{lemma-sink}
Let $G=(V,E)$ be a minimal series-parallel digraph. Then, every non-empty
feasible solution of SSG  contains a sink of $G$.
\end{lemma}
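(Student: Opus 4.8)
The plan is to reduce the statement to the two reachability facts already available, namely Lemma~\ref{le-reach-fr}, which characterizes membership in a feasible SSG solution through forward reachability, and Lemma~\ref{lemma-reach}, which guarantees that from any vertex of a minimal series-parallel digraph one can reach a sink. First I would take an arbitrary non-empty feasible solution $A'$ for SSG on $G$ and fix some vertex $x\in A'$, which exists because $A'\neq\emptyset$.

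Next I would apply Lemma~\ref{lemma-reach} to $x$ to obtain a sink $x_s$ of $G$ together with a directed path from $x$ to $x_s$. By the definition of the reachable set this gives $x_s\in R_x$, where the path may be trivial, in which case $x=x_s$ is itself a sink. Since $x\in A'$, Lemma~\ref{le-reach-fr} yields $R_x\subseteq A'$, and hence $x_s\in A'$. Thus $A'$ contains the sink $x_s$, which is exactly the claim.

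The argument is short because the genuine work is delegated: the digraph-constraint logic is encapsulated in Lemma~\ref{le-reach-fr}, and the existence of a reachable sink is encapsulated in Lemma~\ref{lemma-reach}. Accordingly, the only point that needs care is the interface between the two lemmas, that is, verifying that the sink produced by Lemma~\ref{lemma-reach} genuinely belongs to $R_x$ and is not merely weakly connected to $x$. Since Lemma~\ref{lemma-reach} explicitly provides a \emph{directed} path from $x$ to the sink, this step is immediate, and no induction on the msp-structure is needed here; that induction has already been spent in establishing Lemma~\ref{lemma-reach}. I therefore expect the main conceptual obstacle to lie entirely in Lemma~\ref{lemma-reach} rather than in the present statement.
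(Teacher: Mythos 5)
Your proof is correct and follows essentially the same route as the paper: both arguments pick a vertex $x$ in the non-empty solution, invoke Lemma~\ref{lemma-reach} to obtain a sink $x_s$ reachable from $x$ by a directed path, and then conclude $x_s\in A'$ from the digraph constraint --- the paper applies (\ref{dc}) directly, while you route this through Lemma~\ref{le-reach-fr}, which is just the paper's own packaging of that consequence.
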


\begin{proof}
If a feasible solution $A'$ contains some $x\in V$, then Lemma \ref{lemma-reach}
implies that there is a
sink $x_s$ of $G$, such that there is a path from $x$ to $x_s$ in $G$
which implies by (\ref{dc}) that $x_s\in A'$.
\end{proof}

\begin{definition}[Series-parallel digraphs]
Series-parallel digraphs are exactly the digraphs whose transitive closure equals
the transitive closure of some minimal series-parallel digraph.

The class of series-parallel digraphs is denoted by $\SPD$.
\end{definition}

\begin{theorem}[\cite{VTL82}]\label{th-sp}
An acyclic digraph is series-parallel, if and
only if its transitive closure is $N$-free, 
where $N=(\{u,v,w,x\},\{(v,w),(u,w),(u,x)\})$.
\end{theorem}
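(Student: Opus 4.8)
The plan is to observe first that both properties in the statement depend only on the transitive closure. Since $G$ is acyclic, $td(G)$ is a finite strict partial order, and by the definition of series-parallel digraphs $G$ is series-parallel exactly when $td(G)$ equals $td(H)$ for some msp-digraph $H$. Using Lemma~\ref{lemma-reach} one checks that the transitive closure behaves well under the two msp-operations: $td(G_1\cup G_2)$ is the disjoint union of the posets $td(G_1)$ and $td(G_2)$, while $td(G_1\times G_2)$ is their \emph{ordinal sum}, i.e.\ the poset in which every element of $td(G_1)$ lies below every element of $td(G_2)$ (by Lemma~\ref{lemma-reach} every vertex of $G_1$ reaches a sink, hence via the added arcs every source of $G_2$, and therefore every vertex of $G_2$). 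Consequently the theorem is equivalent to the purely order-theoretic statement: \emph{a finite poset is obtainable from singletons by disjoint union and ordinal sum if and only if it contains no induced copy of $N$.} Throughout I read ``$N$-free'' as ``no four vertices induce a subdigraph isomorphic to $N$''.

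For the forward direction I would induct on the structure of an msp-expression for $H$, writing $P=td(H)$. The single-vertex case is vacuous. If $P$ is the disjoint union of $P_1$ and $P_2$, note that the underlying undirected graph of $N$ is connected (it is the path $v\,w\,u\,x$), so any induced $N$ lies entirely in one summand and the induction hypothesis finishes the case. If $P$ is the ordinal sum of $P_1$ and $P_2$, recall that $N$ has exactly three incomparable pairs, $\{u,v\}$, $\{w,x\}$, $\{v,x\}$, while in an ordinal sum every cross-pair is comparable; hence each incomparable pair of a putative induced $N$ lies within a single summand. Since the incomparability graph of $N$ is itself connected (the path $u\,v\,x\,w$), all four vertices fall into the same summand and induction applies. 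Thus both operations preserve $N$-freeness and every series-parallel poset is $N$-free.

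For the converse I would argue by induction on $|V|$. Let $P=td(G)$ be $N$-free with $|V|\ge 2$. If the comparability graph of $P$ is disconnected, then $P$ is the disjoint union of the posets induced on its components; each is $N$-free as an induced subposet, each is series-parallel by induction, and $P$ is obtained by a parallel composition. The remaining case is that the comparability graph is connected, where I must produce a nontrivial ordinal decomposition $P=A\oplus B$ with $A,B\neq\emptyset$ and every element of $A$ below every element of $B$; recursing on $A$ and $B$ then realises $P$ by a series composition and yields an msp-digraph $H$ with $td(H)=P$, so that $G$ is series-parallel.

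The hard part is exactly this last decomposition: showing that an $N$-free poset whose comparability graph is connected, on at least two elements, is a nontrivial ordinal sum. The intended route is to prove that the comparability graph of an $N$-free poset contains no induced $P_4$, because transitivity and acyclicity force the four vertices of any induced $P_4$ to carry alternating orientations, and the only admissible ones make them induce $N$ (which is self-dual, so its reverse is again $N$). The comparability graph is then a cograph, and a connected cograph is a join $A\ast B$, giving a partition $V=A\uplus B$ in which all cross-pairs are comparable. It remains to establish \emph{direction-consistency}, namely that all these cross-arcs point the same way so that the join is a genuine ordinal sum; I would prove this by contradiction, assuming arcs $a\to b$ and $b'\to a'$ across the cut with $a,a'\in A$ and $b,b'\in B$, and extracting an induced $N$ from an extremal such quadruple via transitivity. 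Establishing this step cleanly is where I expect the most care to be needed, and it is the place where $N$-freeness is used in full; everything else is bookkeeping along the msp-tree together with Lemma~\ref{lemma-reach}.
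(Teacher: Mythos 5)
The paper offers no proof of Theorem \ref{th-sp} (it is imported from \cite{VTL82}), so your attempt can only be measured against the classical argument, whose architecture you have essentially reproduced. Most of what you write is sound: the reduction to the order-theoretic statement via Lemma \ref{lemma-reach} is correct; the forward direction is complete, since the comparability graph of $N$ is the connected path $v\,w\,u\,x$ (handling disjoint unions) and its incomparability graph is the connected path $u\,v\,x\,w$ (handling ordinal sums); and your observation that any transitive orientation of an induced $P_4$ of the comparability graph is forced to be a copy of $N$ or of its dual (which is again $N$) correctly shows that the comparability graph of an $N$-free poset is a cograph.

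The gap is exactly where you predicted care would be needed, and it is not merely a matter of care: the direction-consistency claim is false as stated. A connected cograph admits many join bipartitions $A\ast B$, and for an arbitrary one the cross-arcs need not be consistent even when the poset is $N$-free and series-parallel. The three-element chain $a<b<a'$ is a counterexample: its comparability graph is a triangle, $A=\{a,a'\}$, $B=\{b\}$ is a legitimate join bipartition, and the cross-arcs $a\to b$ and $b\to a'$ point in opposite directions, while no induced $N$ exists on three vertices, so no contradiction can be extracted from an ``extremal quadruple.'' The standard repair is to bipartition not arbitrarily but along the connected components $M_1,\dots,M_k$ of the incomparability graph (the co-components of the cograph; $k\ge 2$ because the comparability graph is connected). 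For $i\ne j$ every cross-pair is comparable, and transitivity together with the co-connectivity of each $M_i$ forces $M_i$ to lie entirely below or entirely above $M_j$ --- a step that needs no $N$-freeness, only that an element comparable in opposite directions to two others would make those two comparable. The resulting tournament on $\{M_1,\dots,M_k\}$ is transitive, hence a linear order, and cutting it anywhere yields the genuine ordinal decomposition on which your induction can recurse. With that substitution your proof goes through.
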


In order to define series-parallel partial order digraphs
by  series-parallel partial orders, we introduce two operations.
Let $(X_1 ,\leq)$ and $(X_2,\leq)$ be two partially ordered sets over a set $X$, such 
that $X_1\subseteq X$, $X_2\subseteq X$, and  $X_1 \cap X_2 =\emptyset$. 
\begin{itemize}
\item The {\em series composition} of $(X_1 ,\leq)$ and $(X_2,\leq)$
is the order with the following properties. If $x$ and $y$ 
are of the same set, then their order does not change. If $x\in X_1$
and $y\in X_2$, then it holds that  $x \leq  y$.
\item
The {\em parallel composition} of $(X_1 ,\leq)$ and $(X_2,\leq)$
is the order with the following properties.
Elements $x$ and $y$ are comparable if and only if they
are both comparable in $X_1$ or both comparable in $X_2$ and they 
keep their corresponding order.
\end{itemize}

\begin{definition}[Series-parallel partial order] The class of
{\em series-parallel partial orders} over a set $X$ 
is recursively defined as follows. 
\begin{enumerate}
\item Every single element $(\{x\},\emptyset)$, $x\in X$,   is a {\em series-parallel partial order}.
\item If $(X_1 ,\leq)$ and $(X_2,\leq)$  are series-parallel partial orders over set $X$,  such 
that $X_1\subseteq X$, $X_2\subseteq X$, and  $X_1 \cap X_2 =\emptyset$, then
\begin{enumerate}
\item 
the series composition of $(X_1 ,\leq)$ and $(X_2,\leq)$ and
\item
the parallel composition of $(X_1 ,\leq)$ and $(X_2,\leq)$ are {\em series-parallel partial orders}.
\end{enumerate}
\end{enumerate}
\end{definition}

\begin{example}\label{ex-pos}
The following  partially ordered sets are series-parallel partial orders over set $\{x_1,x_2,x_3,x_4\}$.
\begin{itemize}
\item The parallel composition of $(\{x_1\},\emptyset)$ and $(\{x_3\},\emptyset)$
leads to the series-parallel partial order $(\{x_1,x_3\},\emptyset)$.
\item The series composition of $(\{x_2\},\emptyset)$ and $(\{x_4\},\emptyset)$
leads to the series-parallel partial order  $(\{x_2,x_4\},\{(x_2,x_4)\})$.
\item The series composition of $(\{x_1,x_3\},\emptyset)$ and  $(\{x_2,x_4\},\{(x_2,x_4)\})$
leads to the series-parallel partial order  $(\{x_1,x_2,x_3,x_4\},\{(x_2,x_4),(x_1,x_2),(x_1,x_4),(x_3,x_2),(x_3,x_4)\})$.
\end{itemize}
\end{example}

\begin{definition}[Series-parallel partial order digraphs]
A {\em series-parallel partial order  digraph} $G=(V,E)$ is a digraph, where $(V,\leq)$
is a series-parallel partial order  and $(x,y)\in E$ if and only if
$x \neq y$ and $x \leq  y$.

The class of series-parallel partial order digraphs is denoted by $\SPO$.
\end{definition}

\begin{example}\label{ex-spo} The series-parallel partial orders given in
Example \ref{ex-pos} show that the digraph shown in Fig.~\ref{F03}
is a series-parallel partial order  digraph.
\end{example}

Comparing the definitions of the order composition of oriented
co-graphs with the series composition of series-parallel partial order  digraphs
and the disjoint union composition of oriented
co-graphs  with the parallel composition of series-parallel partial order  digraphs, 
see Examples  \ref{ex-orico} and \ref{ex-pos}, we obtain the following result.

\medskip
\begin{observation}
The sets $\OC$ and $\SPO$ are equal.
\end{observation}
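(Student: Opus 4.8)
The plan is to prove the two inclusions $\SPO \subseteq \OC$ and $\OC \subseteq \SPO$ simultaneously by a single structural induction that matches the two recursive definitions operation by operation. The guiding idea, already hinted at in the comparison preceding the statement, is that the parallel composition of series-parallel partial orders corresponds exactly to the disjoint union $\oplus$ of oriented co-graphs, while the series composition of series-parallel partial orders corresponds exactly to the order composition $\oslash$.

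First I would fix the translation between expressions. Given a series-parallel partial order $(V,\leq)$ built by a sequence of parallel and series compositions, I replace every parallel composition by $\oplus$ and every series composition by $\oslash$; conversely, an oriented co-graph given by a di-co-expression $X$ uses by definition only $\oplus$ and $\oslash$, so I replace every $\oplus$ by a parallel composition and every $\oslash$ by a series composition. In both directions a single vertex maps to a single element and back. The statement to establish by induction is then: for corresponding expressions, the oriented co-graph $\g(X)$ and the series-parallel partial order digraph of $(V,\leq)$ have the same vertex set $V_1\cup V_2$ and the same arc set.

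The base case is immediate: the one-vertex digraph $(\{v\},\emptyset)$ is precisely the digraph of the one-element order $(\{v\},\emptyset)$, both with empty arc set. For the inductive step I would verify the two operation correspondences at the level of arc sets. For disjoint union versus parallel composition, the parallel composition leaves an element of the first part and an element of the second part incomparable, so no arc crosses between the parts and the arc set of the resulting partial order digraph is $E_1 \cup E_2$, which is exactly the arc set produced by $\oplus$. For order composition versus series composition, the series composition sets $x \leq y$ for every $x$ in the first part and $y$ in the second, and by antisymmetry never $y \leq x$, so the arc set of the resulting partial order digraph is $E_1 \cup E_2 \cup \{(x,y) \mid x\in V_1, y \in V_2\}$, which is exactly the arc set produced by $\oslash$. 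Applying the induction hypothesis to the two sub-expressions, which identifies $E_1$ and $E_2$ on both sides, then yields equality of the full arc sets.

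The step that needs the most care is the order/series correspondence, where one must check that reading a series composition as its strict comparability digraph introduces exactly the forward arcs from $V_1$ to $V_2$ and no others — in particular that transitivity of the order creates no arc outside $E_1 \cup E_2 \cup (V_1\times V_2)$. This is unproblematic because $\oslash$ already inserts all arcs from $V_1$ to $V_2$, so the oriented co-graph side is itself transitive and matches the order side directly; the verification is thus a routine unwinding of definitions rather than a genuine difficulty. Since the induction carries the arc-set equality in both directions, every series-parallel partial order digraph arises as $\g(X)$ for some oriented co-graph expression $X$ and every oriented co-graph is the digraph of a series-parallel partial order, proving $\OC = \SPO$.
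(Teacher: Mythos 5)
Your proof is correct and follows essentially the same route as the paper, which justifies this observation simply by comparing the disjoint union with the parallel composition and the order composition with the series composition; you merely formalize that comparison as a structural induction on the defining expressions. The one point needing care — that the series composition's comparability digraph contributes exactly the arcs $V_1\times V_2$ and nothing more — is handled correctly.
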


\medskip

In Fig.~\ref{grcl} we summarize the relation of directed co-graphs, series-parallel digraphs
and related graph classes. 
The directed edges represent the existing relations between
the graph classes, which follow by their definitions. 
For the relations to further graph classes we refer to  \cite[Figure 11.1]{BG18}.

\begin{figure}
\begin{center}
\epsfig{figure=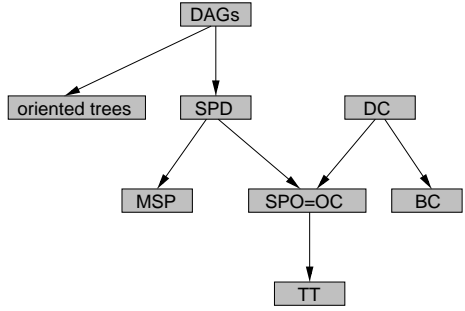,width=6.0cm}
\caption{The figure shows the inclusions of special graph classes. A directed edge
from class $A$ to class $B$ indicates that $B\subseteq A$. Two classes $A$ and $B$ 
are incomparable, if there is neither a directed path from $A$ to $B$, nor
a directed path from $B$ to $A$.}\label{grcl}
\end{center}
\end{figure}

Since SSP corresponds to SSG and also to SSGW on a digraph without arcs,
which is a minimal series-parallel digraph, we obtain the following result.

\begin{proposition}
SSG and SSGW are NP-hard on  minimal series-parallel digraph.
\end{proposition}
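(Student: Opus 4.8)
The plan is to observe that the ordinary subset sum problem SSP is a special case of both SSG and SSGW precisely when the digraph constraint becomes vacuous, and that the underlying digraph witnessing this can be taken to be minimal series-parallel. This mirrors exactly the argument already used for Proposition \ref{np-oc}, so I would simply adapt that reasoning to the class $\MSP$.

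First I would verify that the arcless digraph $(A,\emptyset)$ on any vertex set $A$ lies in $\MSP$. Each single vertex is a minimal series-parallel digraph by definition, and the parallel composition $\cup$ of two such digraphs takes vertex set $V_1\cup V_2$ and arc set $E_1\cup E_2$ without introducing any new arcs. Hence repeatedly applying the parallel composition to $|A|$ single-vertex digraphs yields exactly the arcless digraph $(A,\emptyset)\in\MSP$.

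Next I would check that on such an arcless digraph both constraints are automatically satisfied by every subset $A'\subseteq A$. Since every vertex $y$ has $N^-(y)=\emptyset$, the premise $N^-(y)\cap A'\neq\emptyset$ of the digraph constraint (\ref{dc}) is always false, and likewise the premise $N^-(y)\neq\emptyset$ of the weak digraph constraint (\ref{wdc}) is always false; in both cases the implication holds vacuously. Consequently feasibility reduces to the capacity constraint (\ref{cap}) alone, so an instance of SSG (respectively SSGW) on $(A,\emptyset)$ is precisely an instance of SSP.

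Finally, since SSP is NP-hard and the transformation just described maps an arbitrary SSP instance to an SSG (or SSGW) instance on a minimal series-parallel digraph in linear time, NP-hardness of SSG and SSGW on minimal series-parallel digraphs follows. I do not anticipate any genuine obstacle here; the only point meriting care is confirming membership of the arcless digraph in $\MSP$, which the parallel composition settles at once.
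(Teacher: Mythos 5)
Your argument is correct and matches the paper's own reasoning exactly: the paper also observes that SSP coincides with SSG and SSGW on the arcless digraph, which is a minimal series-parallel digraph (obtainable by repeated parallel composition of single vertices). Your proposal just spells out the vacuity of constraints (\ref{dc}) and (\ref{wdc}) in more detail.
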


Next, we will show pseudo-polynomial solutions for SSG and SSGW restricted to
(minimal) series-parallel digraphs. The main idea is a dynamic programming along 
the recursive structure of a given (minimal) series-parallel digraph.

\subsection{Subset sum  with  digraph constraint (SSG)}

We consider
an instance of SSG such that $G=(A,E)$ is a minimal series-parallel digraph which
is given by some msp-expression $X$.
For some subexpression $X'$ of $X$ let $F(X',s)=1$ if there is a solution
$A'$ in the graph defined by $X'$  satisfying (\ref{cap}) and (\ref{dc}) 
such that $s(A')=s$, otherwise let $F(X',s)=0$.
We use the notation $s(X')= \sum_{a_j\in X'}s_j$.

\begin{lemma}\label{le1sp} Let $0\leq s \leq c$.
\begin{enumerate}
\item $F(a_j,s)=1$ if and only if  $s=0$ or $s_j=s$.

In all other cases  $F(a_j,s)=0$.

\item $F(X_1\cup X_2,s)=1$, if  and only if 
there are  some $0\leq s'\leq s$ and $0\leq s''\leq s$
such that $s'+s''=s$ and $F(X_1,s')=1$ and $F(X_2,s'')=1$.

In all other cases  $F(X_1\cup X_2,s)=0$.

\item $F(X_1\times X_2,s)=1$, if  and only if 
\begin{itemize}
\item
$F(X_2,s)=1$ for $0\leq s \leq  s(X_2)$\footnote{The value $s=0$ is for 
choosing an empty solution in $\g(X_1\times X_2)$.} or

\item
there is some $1\leq s'\leq s(X_1)$ such that $s=s'+s(X_2)$ and  $F(X_1,s')=1$.
\end{itemize}
In all other cases  $F(X_1\times X_2,s)=0$.
\end{enumerate}
\end{lemma}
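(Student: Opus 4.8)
The plan is to prove the three equivalences of Lemma \ref{le1sp} by recursion along the msp-tree, closely mirroring the argument already carried out for directed co-graphs in Lemma \ref{le1}. The base case is immediate: in $\g(a_j)$ the only two feasible solutions are $\emptyset$ and $\{a_j\}$, of sizes $0$ and $s_j$, so $F(a_j,s)=1$ exactly when $s=0$ or $s=s_j$. For the parallel composition, the key observation is that $\g(X_1\cup X_2)$ is the disjoint union of $\g(X_1)$ and $\g(X_2)$ as digraphs — the parallel composition adds no arcs — so $\g(X_1)$ and $\g(X_2)$ are induced subdigraphs and indeed weakly connected components of $\g(X_1\cup X_2)$. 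Hence I would invoke Lemma \ref{induced-sd} to split any feasible solution $A'$ of size $s$ into $A'\cap V(X_1)$ and $A'\cap V(X_2)$, giving the decomposition $s=s'+s''$ with $F(X_1,s')=1$ and $F(X_2,s'')=1$; conversely, since no new arcs are created, the union of a feasible solution of $\g(X_1)$ with one of $\g(X_2)$ is feasible in $\g(X_1\cup X_2)$, establishing both directions.

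The main work is the series case $F(X_1\times X_2,s)$. The crucial structural fact is how the digraph constraint (\ref{dc}) propagates across a series composition. In $\g(X_1\times X_2)$ every arc from $O_1$ (the sinks of $\g(X_1)$) to $I_2$ (the sources of $\g(X_2)$) is present, so if a feasible solution $A'$ contains any vertex of $\g(X_1)$, then by Lemma \ref{lemma-reach} that vertex reaches a sink in $O_1$, which by (\ref{dc}) (applied via Lemma \ref{le-reach-fr}) must lie in $A'$; this sink then forces every source in $I_2$ into $A'$, and repeated application of (\ref{dc}) forces the entire vertex set of $\g(X_2)$ into $A'$. This gives the dichotomy I would build the proof around: either $A'$ contains no vertex of $\g(X_1)$, in which case $A'\subseteq V(X_2)$ and Lemma \ref{induced-sd} yields $F(X_2,s)=1$ with $0\le s\le s(X_2)$; or $A'$ contains at least one vertex of $\g(X_1)$, in which case all of $\g(X_2)$ is included, so $s=s'+s(X_2)$ for some $1\le s'\le s(X_1)$ and Lemma \ref{induced-sd} gives $F(X_1,s')=1$. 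For the converse, any feasible solution of $\g(X_2)$ remains feasible in $\g(X_1\times X_2)$ because its vertices acquire predecessors in $O_1$ that are absent from the solution (so (\ref{dc}) is not triggered), while any nonempty feasible solution of $\g(X_1)$ extended by all of $V(X_2)$ is feasible since the forced-completion of $\g(X_2)$ exactly satisfies (\ref{dc}).

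The step I expect to be the main obstacle is verifying the arc-propagation argument in the series case with full rigour, specifically that a single vertex of $\g(X_1)$ in the solution forces \emph{all} of $\g(X_2)$, not merely its sources. This requires combining Lemma \ref{lemma-reach} (every vertex of $\g(X_1)$ reaches some sink, every vertex of $\g(X_2)$ is reachable from some source) with the transitive propagation of (\ref{dc}) captured in Lemma \ref{le-reach-fr}; I would need to check that the reachability relation in $\g(X_1\times X_2)$ indeed sends the chosen $\g(X_1)$-vertex through $O_1\times I_2$ into \emph{every} vertex of $\g(X_2)$. A second, more subtle point is ensuring the two cases of the disjunction are handled without double-counting or gaps at the boundary value $s=s(X_2)$, which is why the footnote isolates $s=0$ as the empty solution and the ranges are written as $0\le s\le s(X_2)$ and $1\le s'\le s(X_1)$ respectively; I would state these ranges explicitly to confirm that every feasible size is produced exactly once.
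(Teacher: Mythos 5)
Your proposal is correct and follows essentially the same route as the paper's proof: base case by inspection, Lemma \ref{induced-sd} to split solutions across the parallel composition, and for the series composition the dichotomy on whether the solution meets $\g(X_1)$, with the reachability propagation (Lemma \ref{lemma-reach} together with the digraph constraint) forcing all of $\g(X_2)$ into any solution that contains a vertex of $\g(X_1)$. The only cosmetic difference is that you chain Lemma \ref{lemma-reach} with Lemma \ref{le-reach-fr} directly on the chosen vertex of $\g(X_1)$, whereas the paper routes the same argument through Lemma \ref{lemma-sink}; the substance is identical.
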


\begin{proof}
We show the correctness of the stated equivalences. Let $0\leq s \leq c$.
\begin{enumerate}
\item 
The only possible solutions in $\g(a_j)$  are $\emptyset$ and $\{a_j\}$ which have size 
$0$ and $s_j$, respectively.

\item If $F(X_1  \cup X_2,s)=1$, then  by Lemma \ref{induced-sd} there are 
$s'$ and $s''$ such that $s'+s''=s$ and solutions in $\g(X_1)$  
and in $\g(X_2)$ which guarantee $F(X_1,s')=1$ and $F(X_2,s'')=1$.

\medskip
Further,
for every $s'$ and $s''$, such that $s'+s''=s$, $F(X_1,s')=1$, and $F(X_2,s'')=1$,
it holds  that $F(X_1  \cup X_2,s)=1$ since the parallel composition creates no additional arcs.

\item 
If $F(X_1  \times X_2,s)=1$, then we distinguish two cases.
If the solution of size $s$ in $\g(X_1\times X_2)$ contains no vertex
of  $\g(X_1)$, then by Lemma \ref{induced-sd} there is a solution in $\g(X_2)$ which
guarantees $F(X_2,s)=1$.  

Otherwise, the solution $A'$ of size $s$ in $\g(X_1\times X_2)$ contains at least one vertex
of  $\g(X_1)$.
By the definition of the series  composition and the digraph constraint
(\ref{dc}) every solution  from $\g(X_1)$ which contains a sink has to be extended
by every vertex of $X_2$ which is reachable by  a source from $\g(X_2)$.
Since by Lemma \ref{lemma-sink} 
every non-empty
feasible solution of SSG  contains a sink,
every solution  from $\g(X_1)$ has to be extended
by every vertex of $X_2$ which reachable by  a source from $\g(X_2)$.
By Lemma \ref{lemma-reach} every solution  from $\g(X_1)$ has to be extended by all
vertices of $\g(X_2)$.
Thus, by Lemma \ref{induced-sd} there is a solution in $\g(X_1)$ which
guarantees $F(X_1,s-s(X_2))=1$.

\medskip
Further, for every  $0\leq s \leq  s(X_2)$ where $F(X_2,s)=1$ 
we have $F(X_1\times X_2,s)=1$ since the solutions from $\g(X_2)$ do not contain
any predecessors of vertices from $\g(X_1)$ in $\g(X_1\times X_2)$. 

For every  $1\leq s' \leq  s(X_1)$ where $F(X_1,s')=1$  
the definition of the series  composition and the digraph constraint
(\ref{dc}) imply that for $s=s'+s(X_2)$ it holds that  $F(X_1\times X_2,s)=1$ for  reasons
given above.
\end{enumerate}
This shows the statements of the lemma.
\end{proof}

\begin{corollary}\label{cor1-sp}
There is a solution with sum $s$  for some
instance of SSG such that $G$ is a minimal series-parallel digraph which
is given by some  msp-expression $X$
if and only if $F(X,s)=1$. Therefore, $OPT(I)=\max\{s \mid F(X,s)=1\}$.
\end{corollary}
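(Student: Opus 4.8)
The plan is to prove Corollary~\ref{cor1-sp} as a direct structural consequence of Lemma~\ref{le1sp}, exactly mirroring the argument for Corollary~\ref{cor1} in the co-graph setting. Since $X$ is an msp-expression defining $G=(A,E)$, the corollary asserts two things: the equivalence ``a feasible solution of value $s$ exists $\iff F(X,s)=1$'', and the resulting formula $OPT(I)=\max\{s\mid F(X,s)=1\}$. The first is really just the definition of $F$ specialized to the top-level expression $X$, made rigorous through the recursive correctness already established in Lemma~\ref{le1sp}; the second is then immediate.

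First I would argue the main equivalence by structural induction on the msp-expression $X$, following its recursive construction. For the base case $X=a_j$, item~(1) of Lemma~\ref{le1sp} states $F(a_j,s)=1$ iff $s=0$ or $s=s_j$, and these are precisely the values realized by the only two feasible solutions $\emptyset$ and $\{a_j\}$ in $\g(a_j)$. For the inductive step, suppose the equivalence holds for the proper subexpressions $X_1$ and $X_2$. If $X=X_1\cup X_2$, item~(2) characterizes $F(X,s)=1$ in terms of a split $s=s'+s''$ with $F(X_1,s')=1$ and $F(X_2,s'')=1$; if $X=X_1\times X_2$, item~(3) gives the analogous characterization. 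In each case the correctness proof of Lemma~\ref{le1sp} shows that $F$ evaluates to $1$ exactly when a feasible solution of the corresponding value exists in $\g(X)$, so combining this with the induction hypothesis yields the equivalence for $X$. Applying this to the root expression gives: there is a feasible solution with $s(A')=s$ for the SSG instance on $G$ iff $F(X,s)=1$.

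From the equivalence the optimality formula follows at once. By definition $OPT(I)$ is the maximum value $s(A')$ over all feasible solutions $A'$ satisfying~(\ref{cap}) and~(\ref{dc}); since the set of attainable values $s$ of feasible solutions is exactly $\{s\mid F(X,s)=1\}$, taking the maximum over this set gives $OPT(I)=\max\{s\mid F(X,s)=1\}$. The set is nonempty because $\emptyset$ is always feasible, so the maximum is well defined.

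I do not expect a genuine obstacle here, since Lemma~\ref{le1sp} already carries the full combinatorial weight and this corollary is essentially a packaging statement. The only point requiring mild care is ensuring the induction is anchored correctly at the outermost operation and that the recursive equivalence in Lemma~\ref{le1sp} is invoked at each level with the correct subexpressions; in particular one should note that the restrictions $0\le s\le c$ built into the definition of $F$ mean only values respecting the capacity constraint~(\ref{cap}) are ever flagged, so no infeasible solution is inadvertently counted. Given this, the proof is short and reduces to citing Lemma~\ref{le1sp} together with a one-line induction.
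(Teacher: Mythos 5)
Your proposal is correct and matches the paper's approach: the paper states this corollary without an explicit proof, treating it as an immediate consequence of Lemma~\ref{le1sp}, and your structural induction simply spells out that routine packaging step. Nothing is missing.
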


\begin{theorem}\label{th-co-ssg-msp}
SSG can be solved in minimal series-parallel digraphs with $n$
vertices and $m$ arcs
in $\bigo(n\cdot c^2+m)$ time and $\bigo(n\cdot c)$ space.
\end{theorem}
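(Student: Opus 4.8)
The plan is to mirror the proof of Theorem~\ref{th-co-ssg} almost verbatim, with the di-co-tree replaced by the msp-tree and the recurrences of Lemma~\ref{le1} replaced by those of Lemma~\ref{le1sp}. Let $G=(A,E)$ be a minimal series-parallel digraph and let $T$ be an msp-tree for $G$ with root $r$. For a node $u$ of $T$ I write $T_u$ for the subtree rooted at $u$ and $X_u$ for the msp-subexpression defined by $T_u$. I would traverse $T$ bottom-up (in post-order) and, for every node $u$ and every $0\le s\le c$, compute the Boolean value $F(X_u,s)$ according to the three cases of Lemma~\ref{le1sp}: a leaf via case~1, a parallel node $X_1\cup X_2$ via case~2, and a series node $X_1\times X_2$ via case~3. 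By Corollary~\ref{cor1-sp} the optimum is then read off at the root as $OPT(I)=\max\{s\mid F(X_r,s)=1\}$. In this way correctness is entirely inherited from Lemma~\ref{le1sp} and Corollary~\ref{cor1-sp} and requires no separate argument.

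The substantive part is therefore the resource analysis. By~\cite{VTL82} an msp-tree $T$ can be built from $G$ in $\bigo(n+m)$ time, and $T$ has $n$ leaves and $n-1$ inner nodes. A single preliminary bottom-up pass computes all aggregate sizes $s(X_u)$ in $\bigo(n)$ total time. I expect the following per-node costs for the table entries. At a leaf each of the $c+1$ values $F(a_j,s)$ is decided in $\bigo(1)$, so $\bigo(c)$ per leaf. At a series node, case~3 of Lemma~\ref{le1sp} shows that $F(X_1\times X_2,s)$ is fixed by a constant number of lookups — namely $F(X_2,s)$ when $s\le s(X_2)$, or $F(X_1,s-s(X_2))$ otherwise — hence $\bigo(1)$ per value and $\bigo(c)$ per series node. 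The bottleneck is the parallel node: evaluating $F(X_1\cup X_2,s)$ requires scanning all splits $s'+s''=s$, which is $\bigo(c)$ per value of $s$ and thus $\bigo(c^2)$ per parallel node. Summing over the $\bigo(n)$ inner nodes gives $\bigo(n\cdot c^2)$ for the dynamic program, and adding the $\bigo(n+m)$ tree construction yields the claimed $\bigo(n\cdot c^2+m)$ bound. For space, each node stores a single Boolean vector of length $c+1$, i.e.\ $\bigo(c)$ words; the tables that must be held simultaneously during the post-order traversal total $\bigo(n\cdot c)$ in the worst case, matching the stated space bound.

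I do not anticipate a genuine obstacle, since all combinatorial content is already discharged by Lemma~\ref{le1sp}. The only points deserving care are bookkeeping points in the timing argument: confirming that the series composition costs only $\bigo(c)$ (not $\bigo(c^2)$) per node, so that the parallel nodes alone are responsible for the $c^2$ factor, and noting that it is the construction of the msp-tree rather than the dynamic program that contributes the additive $m$ term.
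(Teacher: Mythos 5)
Your proposal is correct and follows essentially the same route as the paper's own proof: build the msp-tree in $\bigo(n+m)$ time, precompute the aggregate sizes, evaluate $F(X_u,s)$ bottom-up via Lemma \ref{le1sp}, and read off the optimum at the root via Corollary \ref{cor1-sp}, with the parallel nodes contributing the $\bigo(c^2)$ per-node cost and the series nodes only $\bigo(c)$. The resource accounting matches the paper's, so nothing further is needed.
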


\begin{proof}
Let $G=(V,E)$  be a minimal series-parallel digraph and $T$ be an
msp-tree for $G$ with root $r$.
For some vertex $u$ of $T$ we denote by $T_u$
the subtree rooted at $u$ and $X_u$ the msp-expression defined by $T_u$.
In order to solve the SSG problem for an instance $I$
graph $G$, we traverse  msp-tree $T$ in a bottom-up order.
For every vertex $u$ of $T$ and $0\leq s \leq c$ we compute $F(X_u,s)$
following the rules given in Lemma \ref{le1sp}. By Corollary \ref{cor1-sp} we can solve our
problem by $F(X_r,s)=F(X,s)$.

An msp-tree $T$ can be computed in $\bigo(n+m)$ time from
a  minimal series-parallel digraph with $n$ vertices and $m$ arcs,  see \cite{VTL82}.
All $s(X_i)$ can be precomputed in $\bigo(n)$ time.
Our rules given in Lemma \ref{le1sp} show the following running times.
\begin{itemize}
\item
For every $a_j\in V$ and every $0\leq s \leq c$ value
$F(a_j,s)$ is computable in $\bigo(1)$ time.

\item
For every $0\leq s \leq c$, every
$F(X_1  \cup X_2,s)$   can be computed in  $\bigo(c)$ time from
$F(X_1,s')$ and $F(X_2,s'')$.

\item For every $0\leq s \leq c$,
every $F(X_1  \times X_2,s)$ can be computed
in   $\bigo(1)$ time from $F(X_1,s')$, $F(X_2,s'')$,
and $s(X_2)$.
\end{itemize}
Since we have $n$ leaves and $n-1$ inner vertices in $T$,
the running time is in $\bigo(nc^2+m)$.
\end{proof}

\begin{example}\label{ex-ssgx-spg}
We consider the SSG instance $I$ with $n=6$ items using  $\g(X)$
defined 
by the expression in 
(\ref{eq-ori-c4msp2}), $c=7$, and the following sizes.
$$
\begin{array}{l|llllll}
j  &  1 & 2 & 3 & 4 & 5 & 6\\
\hline
s_j&  2 & 1 & 4  & 3 & 2 & 3
\end{array}
$$

The rules  given in  Lemma \ref{le1sp} lead to
the values in Table \ref{tab-ssgx-spg}.
Thus, the optimal solution is $\{a_2,a_5,a_6\}$ with $OPT(I)=6$.
\end{example}

\begin{table}[h!]
\caption{Table for Example \ref{ex-ssgx-spg}\label{tab-ssgx-spg}}
$$
\begin{array}{l|lllllllll}
 &  \multicolumn{8}{c}{F(X',s)} \\
 \hline
X'~~~~~~~~~~~~~~~~~~~~~~~~~~~~~~~~~~~~~~~~~~~~~~~s                &0& 1 & 2 & 3 &  4 & 5 & 6 & 7  \\
\hline
v_1                                                               &1&0 & 1 & 0  & 0 & 0 & 0 & 0  \\
v_2                                                               &1&1 & 0 & 0  & 0 & 0 & 0 & 0  \\
v_3                                                               &1&0 & 0 & 0  & 1 & 0 & 0 & 0  \\
v_4                                                               &1&0 & 0 & 1  & 0 & 0 & 0 & 0  \\
v_5                                                               &1&0 & 1 & 0  & 0 & 0 & 0 & 0  \\
v_6                                                               &1&0 & 0 & 1  & 0 & 0 & 0 & 0  \\
\hline
v_1\times  v_2                                                    &1& 1 & 0 & 1 & 0 & 0 & 0 & 0\\
v_3 \times v_4                                                    &1& 0 & 0 & 1 & 0 & 0 & 0 & 1\\
v_5 \times v_6                                                    &1& 0 & 0 & 1 & 0 & 1 & 0 & 0\\
(v_1\times  v_2) \cup (v_3 \times v_4)                            &1& 1 & 0 & 1 & 1 &0  & 1 & 1 \\
((v_1\times  v_2) \cup (v_3 \times v_4)) \times(v_5 \times v_6)   &1& 0 & 0 & 1 & 0 &1  &1 & 0\\
\hline
\end{array}
$$
\end{table}

\begin{theorem}\label{th-co-ssg-sp}
SSG can be solved in  series-parallel digraphs  with $n$
vertices and $m$ arcs
in $\bigo(n\cdot c^2+n^{2.3729})$ time  and $\bigo(n\cdot c)$ space.
\end{theorem}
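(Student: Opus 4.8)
The plan is to reduce SSG on a series-parallel digraph to SSG on a minimal series-parallel digraph, for which Theorem~\ref{th-co-ssg-msp} already provides an $\bigo(n\cdot c^2+m)$ algorithm, and to pay only an additive $\bigo(n^{2.3729})$ for converting the input into minimal form. First I would observe that every series-parallel digraph $G$ is acyclic: by definition its transitive closure $td(G)$ equals $td(G')$ for some $G'\in\MSP$, and since every msp-digraph is a DAG its transitive closure is acyclic, so $G$ cannot contain a directed cycle. Hence Lemma~\ref{lem2} applies to $G$, telling us that the feasible, and therefore the optimal, solutions of SSG for $G$ coincide with those for its transitive reduction $tr(G)$.

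The key step is to argue that $tr(G)$ is itself a minimal series-parallel digraph, so that Theorem~\ref{th-co-ssg-msp} becomes applicable. I would do this in two moves. First, the transitive reduction of a DAG is unique and is determined solely by the reachability relation, i.e.\ by the transitive closure; consequently $td(G)=td(G')$ forces $tr(G)=tr(G')$. Second, I would show by induction on the recursive construction of msp-digraphs that every minimal series-parallel digraph equals its own transitive reduction: a parallel composition adds no arcs, and in a series composition $G_1\times G_2$ every newly inserted arc runs from a sink of $G_1$ to a source of $G_2$, while all cross arcs point from $V_1$ to $V_2$. Thus, once a directed path leaves $V_1$ it cannot return, a sink of $G_1$ has no arc inside $G_1$, and a source of $G_2$ has no arc inside $G_2$; so no alternative path of length at least two can connect the endpoints of any arc, whether new or inherited. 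Therefore $tr(G')=G'$, and hence $tr(G)=tr(G')=G'\in\MSP$.

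Putting the pieces together, the algorithm computes $tr(G)$ and then runs the msp-digraph procedure of Theorem~\ref{th-co-ssg-msp} on it. Computing $tr(G)$ costs the same as Boolean matrix multiplication, namely $\bigo(n^{2.3729})$ by \cite{AGU72,Leg14}. Since $tr(G)$ is a DAG it has at most $\binom{n}{2}$ arcs, so the $m$-term of Theorem~\ref{th-co-ssg-msp} is $\bigo(n^2)$ and is absorbed into $\bigo(n^{2.3729})$, while the dynamic programming itself contributes $\bigo(n\cdot c^2)$. The overall running time is therefore $\bigo(n\cdot c^2+n^{2.3729})$, and the working space of the dynamic programming stage is $\bigo(n\cdot c)$ as in Theorem~\ref{th-co-ssg-msp}.

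I expect the main obstacle to be the middle step: rigorously establishing that the transitive reduction of a series-parallel digraph lands back inside $\MSP$. The uniqueness of the transitive reduction for DAGs reduces this to the self-reduction property of msp-digraphs, and the inductive case analysis above closes it. One could alternatively invoke the $N$-free characterization of Theorem~\ref{th-sp} together with the recognition algorithm of \cite{VTL82} to recover an msp-tree of $tr(G)$ in linear time once the $N$-freeness of $td(G)$ is known, which is exactly what Theorem~\ref{th-co-ssg-msp} then consumes.
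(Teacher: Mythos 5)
Your proposal is correct and takes essentially the same route as the paper: reduce to the transitive reduction via Lemma~\ref{lem2}, pay $\bigo(n^{2.3729})$ for computing it, and run the algorithm of Theorem~\ref{th-co-ssg-msp}. The paper's own proof is a two-line sketch that leaves implicit the key fact you establish explicitly --- that $tr(G)$ coincides with the msp-digraph $G'$ witnessing series-parallelity, because every msp-digraph is its own transitive reduction and the transitive reduction of a DAG is determined by its transitive closure --- so your write-up is, if anything, the more complete argument.
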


\begin{proof}
Let $G$ be some  series-parallel digraph. By Lemma \ref{lem2}
we can use the transitive reduction of $G$, which can be computed in
$\bigo(n^{2.3729})$ time by \cite{Leg14}. 
\end{proof}

\subsection{Subset sum  with weak digraph constraint (SSGW)}

Next, we consider SSGW on minimal series-parallel digraph. 
In order to get useful informations about the sinks within a
solution, we use an extended data structure.
We consider an instance of SSGW such that $G=(A,E)$ is a minimal series-parallel digraph which
is given by some  msp-expression  $X$.
For some subexpression $X'$ of $X$ let $H(X',s,s')=1$ if there is a solution
$A'$ in the graph defined by $X'$  satisfying (\ref{cap}) and (\ref{wdc})  
such that $s(A')=s$ and the sum of sizes of
the sinks in $A'$ is $s'$,  otherwise let $H(X',s,s')=0$. We denote by $i(X)$
the sum of the sizes of all sinks in $\g(X)$.

\begin{lemma}\label{le-ssgw-sp} Let $0\leq s,s' \leq c$.
\begin{enumerate}
\item $H(a_j,s,s')=1$ if and only if $s=s'=0$ or $s_j=s=s'$.

In all other cases  $H(a_j,s,s')=0$.

\item $H(X_1\cup X_2,s,s')=1$, if  and only if 
there are  $0\leq s_1\leq s$, $0\leq s_2\leq s$, $0\leq s'_1\leq s'$, $0\leq s'_2\leq s'$,
such that $s_1+s_2=s$, $s'_1+s'_2=s'$, $H(X_1,s_1,s'_1)=1$,
and $H(X_2,s_2,s'_2)=1$.

In all other cases  $H(X_1\cup X_2,s,s')=0$.

\item $H(X_1\times X_2,s,s')=1$, if  and only if 
\begin{itemize}
\item
$0\leq s \leq  s(X_2)$\footnote{The value $s=s'=0$ is for choosing an empty 
solution in $\g(X_1\times X_2)$. The values $s>s'=0$ are for choosing a 
solution without sinks in $\g(X_1\times X_2)$} and $0\leq s' \leq s(X_2)$,  
such that $H(X_2,s,s')=1$   or

\item
there are  $1\leq s_1 \leq s(X_1)$ and $1\leq s'_1 < i(X_1)$,
such that $s_1=s$, $0=s'$, and $H(X_1,s_1,s'_1)=1$, or

\item
there are   $1\leq s_1 \leq s(X_1)$, 
such that $s_1+s(X_2)=s$, $i(X_2)=s'$, and $H(X_1,s_1,i(X_1))=1$, or

\item
there are    $1\leq s_1 \leq s(X_1)$, $1\leq s'_1 < i(X_1)$, $1\leq s_2 \leq s(X_2)$, and 
$1\leq s'_2 \leq s(X_2)$,
such that $s_1+s_2=s$, $s'_2=s'$, $H(X_1,s_1,s'_1)=1$, and  $H(X_2,s_2,s'_2)=1$.
\end{itemize}
In all other cases  $H(X_1\times X_2,s,s')=0$.
\end{enumerate}
\end{lemma}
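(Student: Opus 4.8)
The plan is to establish the four equivalences directly, in the same spirit as Lemma~\ref{le1sp} and Lemma~\ref{lemma-ssgw-co}, proving for each operation both that every pair $(s,s')$ produced by the right-hand side is realized by an actual SSGW-feasible solution of that size and sink-size (the ``$\Leftarrow$'' direction, by combining solutions of the subexpressions) and that every feasible solution decomposes so as to satisfy the right-hand side (the ``$\Rightarrow$'' direction, by restricting with the lemmata \ref{induced-sd3}--\ref{induced-sd4}). Before the case analysis I would record the two structural facts that drive everything: in $\g(X_1\times X_2)$ the set of sinks is exactly the set of sinks of $\g(X_2)$, so a vertex of $\g(X_1)$ never contributes to the sink-size $s'$ (which is precisely why tracking sinks rather than sources is the right choice here), and the only vertices whose predecessor set changes are the sources $I_2$ of $\g(X_2)$, each of which acquires predecessor set exactly $O_1$, the sinks of $\g(X_1)$.

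The single-vertex and parallel cases are routine and mirror the directed co-graph argument. For a single vertex the only solutions are $\emptyset$ and the vertex itself, which is simultaneously a source and a sink, forcing $s=s'=0$ or $s=s'=s_j$. For $X_1\cup X_2$ no arcs are added and the sinks of $\g(X_1\cup X_2)$ are the disjoint union of the sinks of the two parts, so both the total size and the total sink-size split additively; the ``$\Rightarrow$'' direction uses that each of $\g(X_1),\g(X_2)$ is a union of weakly connected components of $\g(X_1\cup X_2)$ and applies Lemma~\ref{induced-sd3}, while ``$\Leftarrow$'' simply juxtaposes the two solutions.

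The series case is the substantive one, and I would organize it around a single \emph{forcing} observation. If a feasible solution $A'$ satisfies $O_1\subseteq A'$, then every source of $\g(X_2)$ has its full predecessor set $O_1$ inside $A'$ and is forced into $A'$ by~(\ref{wdc}); moreover, once all sources of $\g(X_2)$ lie in $A'$, the whole of $\g(X_2)$ must lie in $A'$. This last propagation is the key sub-lemma, which I would prove separately by induction on the msp-expression: a feasible SSGW solution that contains every source of a minimal series-parallel digraph equals the entire vertex set. The induction step for a parallel node restricts via Lemma~\ref{induced-sd3}, and the step for a series node restricts the $\g(X_1)$-part via Lemma~\ref{induced-sd4} (no non-source of $\g(X_1)$ gains a predecessor), concludes that all of $\g(X_1)$ and hence all its sinks belong to the solution, and then repeats the forcing argument on the lower factor. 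With this in hand the characterization of $H(X_1\times X_2,s,s')$ follows by splitting a feasible solution $A'$ according to whether it meets $\g(X_1)$ at all, whether it contains \emph{all} sinks of $\g(X_1)$, and whether it meets $\g(X_2)$: the case $A'\cap V_1=\emptyset$ gives the first bullet (restrict to $\g(X_2)$ by Lemma~\ref{induced-sd4} with $H=\g(X_2)$); a nonempty solution inside $\g(X_1)$ that omits a sink gives the second bullet and contributes sink-size $0$; the case $O_1\subseteq A'$ collapses, by the forcing observation, to the third bullet with the whole of $\g(X_2)$ and sink-size $i(X_2)$; and the genuinely mixed case with $O_1\not\subseteq A'$ gives the fourth bullet. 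For the ``$\Leftarrow$'' direction each bullet is turned into an explicit construction and checked against~(\ref{cap}) and~(\ref{wdc}): the capacity bookkeeping uses Observation~\ref{induced-sd2}, the third bullet extends a feasible $\g(X_1)$-solution containing all sinks by the whole of $\g(X_2)$ (no vertex of $\g(X_1)$ outside the solution is forced, since that part is feasible on $\g(X_1)$), and the fourth bullet juxtaposes the two partial solutions, which is legitimate because $O_1\not\subseteq A'$ makes~(\ref{wdc}) vacuous at every source of $\g(X_2)$.

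The main obstacle is the forcing propagation sub-lemma: the statement ``a feasible solution containing all sources is the entire vertex set'' is \emph{false} for general DAGs, so the proof must genuinely exploit the recursive minimal series-parallel structure, and the cleanest route is the induction above anchored on Lemma~\ref{induced-sd4}. The second delicate point is keeping the case split exhaustive while attributing the sink-size $s'$ to the $\g(X_2)$-part only; in particular one must argue that the conditions $s'_1<i(X_1)$ (no sink of $\g(X_1)$ chosen) and $O_1\subseteq A'$ (all sinks chosen) are exactly the dichotomy separating the freely combinable fourth bullet from the collapsing third bullet, and that partial solutions of $\g(X_2)$ which contain no sink are accounted for consistently across the bullets.
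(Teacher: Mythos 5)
Your proof is correct and follows essentially the same route as the paper's: the same four-way case split for the series composition, with Lemma \ref{induced-sd3} and Lemma \ref{induced-sd4} driving the forward directions and explicit constructions for the converses. The only substantive difference is how the propagation claim (all sinks of $\g(X_1)$ in $A'$ forces all of $\g(X_2)$ into $A'$) is established: the paper peels off the sources of $\g(X_2)$ layer by layer (``new sources appear and are forced, and so on''), while you prove it as a separate sub-lemma by induction on the msp-expression; both work. One correction: your claimed ``main obstacle'' is overstated --- the statement that a feasible SSGW solution containing every source equals the whole vertex set is in fact true for \emph{every} DAG (induct along a topological order: a non-source all of whose predecessors already lie in $A'$ is forced in by (\ref{wdc})); it fails only for digraphs containing source-free strongly connected components, so the recursive msp-structure is not genuinely needed at that point and the simpler topological-order induction would suffice.
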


\begin{proof}
We show the correctness of the stated equivalences. Let $0\leq s, s' \leq c$.
\begin{enumerate}
\item 
The only possible solutions in $\g(a_j)$  are $\emptyset$ and $\{a_j\}$ which have size 
$0$ and $s_j$, respectively. Further, a single vertex corresponds to
a sink.

\item 
If $H(X_1  \cup X_2,s,s')=1$, then  by Lemma \ref{induced-sd3} there are $s_1$, $s_2$ and $s'_1$, $s'_2$ 
such that $s_1+s_2=s$, $s'_1+s'_2=s'$ and solutions in $\g(X_1)$  
and in $\g(X_2)$ which guarantee $H(X_1,s_1,s'_1)=1$ and $H(X_2,s_,s'_2)=1$.

\medskip
Further, for every $0\leq s_1\leq s$, $0\leq s_2\leq s$, $0\leq s'_1\leq s'$, $0\leq s'_2\leq s'$,
such that $s_1+s_2=s$, $s'_1+s'_2=s'$, $H(X_1,s_1,s'_1)=1$,
and $H(X_2,s_2,s'_2)=1$, it holds  that $H(X_1  \cup X_2,s,s')=1$
since we do not create any new edges by the parallel composition.

\item
If $H(X_1  \times X_2,s,s')=1$,  then we distinguish four cases.
If the solution of size $s$ and sink size $s'$ in $\g(X_1\times X_2)$ contains no
vertices of  $\g(X_1)$, then by Lemma \ref{induced-sd4} there is a solution in $\g(X_2)$ which
guarantees $H(X_2,s,s')=1$.  

If the solution of size $s$ and sink size $s'$ in $\g(X_1\times X_2)$ contains only 
vertices of  $\g(X_1)$ but not all sinks of $\g(X_1)$, then by Lemma \ref{induced-sd4} there is a solution in $\g(X_1)$ which
guarantees $H(X_1,s,s')=1$.  

If the solution $A'$ of size $s$ and sink size $s'$ in $\g(X_1\times X_2)$ contains 
all sinks of $\g(X_1)$,  the series
composition and the weak digraph constraint (\ref{wdc}) imply 
that the set $A'$ has to be extended by all sources of $\g(X_2)$.
After ignoring the sources of $\g(X_2)$ (because the graph is acyclic), 
there must exist new sources, which have to be contained in $A'$, since 
all their predecessors were sources in the original graph and so on. 
Thus, set $A'$ contains all vertices of $X_2$ and by Lemma \ref{induced-sd4}
there is a solution in $\g(X_1)$ which
guarantees $H(X_1,s-s(X_2),i(X_1))=1$.

If the solution $A'$ of size $s$ and sink size $s'$ in $\g(X_1\times X_2)$ contains 
vertices of  $\g(X_1)$ but not all sinks of $\g(X_1)$
and vertices of  $\g(X_2)$, then 
by  Lemma \ref{induced-sd4}  there are  
$s_1,s'_1$ and $s_2,s'_2$ such that $s_1+s_2=s$, $s'_2=s'$ and solutions in $\g(X_1)$  
and in $\g(X_2)$ which guarantee $H(X_1,s_1,s'_1)=1$ and $H(X_2,s_2,s'_2)=1$. 

\medskip
Further, the solutions of size $0\leq s \leq  s(X_2)$  from $\g(X_2)$ 
remain feasible in $\g(X_1\times X_2)$ since the solutions from $\g(X_2)$ do not contain
any predecessors of vertices from $\g(X_1)$ in $\g(X_1\times X_2)$.

The solutions from $\g(X_1)$  which
do not contain all sinks of $X_1$, i.e.\ $1\leq s'_1 < i(X_1)$ remain feasible in $\g(X_1\times X_2)$, but
the sizes of sinks have to be changed to $0$ since these sinks are no longer sinks  
in the  $\g(X_1\times X_2)$.

Next we consider solutions $A'$ from $\g(X_1)$ which contain all sinks of $\g(X_1)$, 
i.e.\ $s'= i(X_1)$. As mentioned above, the series
composition and the weak digraph constraint (\ref{wdc}) imply 
that the set $A'$ 
has  to be extended by all vertices of $X_2$.
The sizes of sinks have to be changed to $i(X_2)$, since all sinks of $X_2$ are 
also sinks  in the  $\g(X_1\times X_2)$.

Further, we can combine solutions of  size $1\leq s_1 \leq  s(X_1)$ from $\g(X_1)$, which 
do not contain all sinks of $X_1$, i.e.\ $1\leq s'_1 < i(X_1)$, and 
solutions of  size $1\leq s_2 \leq  s(X_2)$ from $\g(X_2)$,
to a solution of size $s_1+s_2$ and sizes of sinks $s'_2$ in $\g(X_1\times X_2)$.
\end{enumerate}
This shows the statements of the lemma.
\end{proof}

\begin{corollary}\label{cor2-sp}
There is a solution with sum $s$  for some
instance of SSGW such that $G$ is a minimal series-parallel digraph which
is given by some msp-expression $X$
if and only if $H(X,s,s')=1$. Therefore, $OPT(I)=\max\{s \mid H(X,s,s')=1\}$.
\end{corollary}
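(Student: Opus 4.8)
The plan is to mirror the correctness argument already established for SSG on minimal series-parallel digraphs (Corollary~\ref{cor1-sp}), but now invoking the richer recurrence of Lemma~\ref{le-ssgw-sp} for SSGW. I would first fix an msp-tree $T$ for the input digraph $G=(A,E)$, with root $r$, and for each node $u$ let $T_u$ be the subtree rooted at $u$ and $X_u$ the msp-expression it defines. The corollary is a statement about the root: a value $s$ is achievable by a feasible SSGW solution on $G=\g(X)$ if and only if $H(X,s,s')=1$ for some $s'$. The forward implication I would argue by structural induction on the msp-tree: given an arbitrary feasible solution $A'$ of size $s$, I track the sum $s'$ of the sizes of sinks of $\g(X_u)$ contained in $A'\cap V(\g(X_u))$, and show that at each internal node the restriction of $A'$ to the two subexpressions yields compatible values satisfying one of the branches of the corresponding case of Lemma~\ref{le-ssgw-sp}. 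For leaves this is the base case $H(a_j,\cdot,\cdot)$; for a $\cup$-node I would appeal to the decomposition permitted by Lemma~\ref{induced-sd3} (weakly connected components), and for a $\times$-node to Lemma~\ref{induced-sd4} together with Lemma~\ref{induced-sd5}, exactly as in the proof of Lemma~\ref{le-ssgw-sp}.

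For the reverse implication I would show that whenever $H(X,s,s')=1$ there is an actual feasible SSGW solution of size $s$ on $G$. Here I would again induct on $T$, constructing the witnessing solution bottom-up: each branch of each case of Lemma~\ref{le-ssgw-sp} explicitly prescribes how to combine (or extend) the witnesses furnished by the children, and the second half of the proof of that lemma already verifies that the combined set satisfies both the capacity constraint~(\ref{cap}) and the weak digraph constraint~(\ref{wdc}). In particular the bookkeeping of the sink-size parameter $s'$ is what guarantees that, at a series node, a child solution claiming to contain \emph{all} sinks of $\g(X_1)$ (the case $s'_1=i(X_1)$) forces, via~(\ref{wdc}), the inclusion of all of $\g(X_2)$, while a solution omitting some sink of $\g(X_1)$ may be freely combined with an arbitrary capacity-feasible subset of $\g(X_2)$. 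Thus the corollary is essentially a re-reading of Lemma~\ref{le-ssgw-sp} applied at the root, and the final identity $OPT(I)=\max\{s \mid H(X,s,s')=1\}$ follows because every feasible SSGW solution has some size $s\le c$ recorded in the table and, conversely, every recorded $s$ is realizable.

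The step I expect to require the most care is the series-composition case, and specifically the propagation argument that including all sinks of $\g(X_1)$ forces \emph{all} of $\g(X_2)$ into the solution. Lemma~\ref{le-ssgw-sp} handles this by the cascading observation that after the sources of $\g(X_2)$ are mandated, newly exposed vertices (whose predecessors all lay among those sources) become forced in turn, and so on until $\g(X_2)$ is exhausted; the correctness of the corollary hinges on this forcing being faithfully captured by the single table entry $H(X_1,s_1,i(X_1))=1$ paired with $s=s_1+s(X_2)$ and $s'=i(X_2)$. I would make sure the induction hypothesis is stated uniformly for every subexpression $X_u$ (both directions simultaneously), so that the root case is merely the instantiation $u=r$; this is the cleanest way to avoid re-deriving the constraint manipulations that Lemma~\ref{le-ssgw-sp} has already discharged.
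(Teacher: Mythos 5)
Your proposal is correct and follows essentially the same route as the paper: since $H(X',s,s')$ is defined semantically (it equals $1$ exactly when a feasible SSGW solution of size $s$ with sink-sum $s'$ exists in $\g(X')$), the corollary is an immediate instantiation at the root once Lemma~\ref{le-ssgw-sp} certifies that the recurrences compute $H$ correctly, which is why the paper states it without a separate proof. Your structural induction over the msp-tree, including the careful treatment of the series case where $s'_1=i(X_1)$ forces all of $\g(X_2)$ into the solution, is just an explicit unpacking of that same argument.
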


\begin{theorem}\label{th-co2-ssg-sp}
SSGW  can be solved in minimal series-parallel digraphs  with $n$
vertices and $m$ arcs
in  $\bigo(n\cdot c^4+m)$ time and $\bigo(n\cdot c^2)$ space.
\end{theorem}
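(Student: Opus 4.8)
The plan is to mirror the proof of Theorem \ref{th-co-ssg-msp}, replacing the one-dimensional table $F(X',s)$ by the two-dimensional table $H(X',s,s')$ supplied by Lemma \ref{le-ssgw-sp}. First I would compute an msp-tree $T$ for the minimal series-parallel digraph $G=(A,E)$ with root $r$, which takes $\bigo(n+m)$ time by \cite{VTL82}, and in one bottom-up pass precompute all values $s(X_u)$ and all sink-sums $i(X_u)$ in $\bigo(n)$ total time. Then I would traverse $T$ bottom-up and, for every vertex $u$ of $T$ and every pair $0\le s,s'\le c$, compute $H(X_u,s,s')$ from the tables of the children of $u$ following the four-part rule of Lemma \ref{le-ssgw-sp}. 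By Corollary \ref{cor2-sp} the optimum is then read off from the root table as $OPT(I)=\max\{s \mid H(X_r,s,s')=1\}$.

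For the running time I would bound the cost of filling one table. Each table has $(c+1)^2=\bigo(c^2)$ entries. For a leaf $a_j$ every entry is computable in $\bigo(1)$ time by case~1 of Lemma \ref{le-ssgw-sp}. For a parallel node $X_1\cup X_2$ the table is the two-dimensional convolution of the two child tables under $s_1+s_2=s$ and $s'_1+s'_2=s'$; scanning all $\bigo(c^2)$ entries of the first child against all $\bigo(c^2)$ entries of the second and writing into the appropriate target costs $\bigo(c^4)$ per node. For a series node $X_1\times X_2$ the first three cases of Lemma \ref{le-ssgw-sp} only read individual child entries together with the precomputed $s(X_i)$ and $i(X_i)$, hence cost $\bigo(c^2)$; the fourth (dominant) case again couples one coordinate-pair $(s_1,s'_1)$ of $\g(X_1)$ with one coordinate-pair $(s_2,s'_2)$ of $\g(X_2)$ under $s_1+s_2=s$ and $s'_2=s'$, so enumerating the $\bigo(c^2)\cdot\bigo(c^2)$ pairs yields $\bigo(c^4)$ per node. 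Summing over the $n$ leaves and $n-1$ inner vertices of $T$ gives $\bigo(n\cdot c^4)$, and together with the $\bigo(n+m)$ cost of building $T$ this is $\bigo(n\cdot c^4+m)$. Since each table occupies $\bigo(c^2)$ cells and the bottom-up evaluation keeps at most $\bigo(n)$ tables simultaneously, the space is $\bigo(n\cdot c^2)$.

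Because the correctness of the recurrence is already established in Lemma \ref{le-ssgw-sp} and Corollary \ref{cor2-sp}, the substantive work left for the theorem is purely the complexity bookkeeping, and the main obstacle I expect is confirming that the series node stays within the $\bigo(c^4)$ budget. Concretely, one must check that the three ``cheap'' cases of Lemma \ref{le-ssgw-sp}, in particular the case $s'=i(X_1)$ that forces all of $\g(X_2)$ into the solution and resets the recorded sink-size to $i(X_2)$, are realised by $\bigo(c^2)$-many $\bigo(1)$ updates, so that only the genuine two-dimensional product of the child tables contributes the leading term. The remainder of the argument is routine and parallels Theorem \ref{ssgw}.
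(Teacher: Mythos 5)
Your proposal is correct and follows essentially the same route as the paper: build the msp-tree, precompute $s(X_u)$ and $i(X_u)$, fill the tables $H(X_u,s,s')$ bottom-up via Lemma \ref{le-ssgw-sp}, and read off the optimum via Corollary \ref{cor2-sp}. The only cosmetic difference is that you account for the cost per node table ($\bigo(c^4)$ for the whole convolution) whereas the paper accounts per entry ($\bigo(c^2)$ per entry over $\bigo(c^2)$ entries), which yields the identical bound.
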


\begin{proof}
Let $G=(V,E)$  be a minimal series-parallel digraph and $T$ be an
msp-tree for $G$ with root $r$.
For some vertex $u$ of $T$ we denote by $T_u$
the subtree rooted at $u$ and $X_u$ the msp-expression defined by $T_u$.
In order to solve the SSGW problem for an instance $I$
graph $G$, we traverse  msp-tree $T$ in a bottom-up order.
For every vertex $u$ of $T$ and $0\leq s,s' \leq c$ we compute $H(X_u,s,s')$
following the rules given in Lemma \ref{le-ssgw-sp}. 
By Corollary \ref{cor2-sp} we can solve our
problem by $H(X_r,s,s')=H(X,s,s')$.

An msp-tree $T$ can be computed in $\bigo(n+m)$ time from
a  minimal series-parallel digraph with $n$ vertices and $m$ arcs,  see \cite{VTL82}.
All $s(X_i)$ and all $i(X_i)$ can be precomputed in $\bigo(n)$ time.
Our rules given in Lemma \ref{le-ssgw-sp} show the following running times.
\begin{itemize}
\item
For every $a_j\in A$ and every $0\leq s,s' \leq c$ value
$H(a_j,s,s')$ is computable in $\bigo(1)$ time.

\item
For every $0\leq s,s' \leq c$, every
$H(X_1  \cup X_2,s,s')$   can be computed in $\bigo(c^2)$ time from
$H(X_1,s_1,s'_1)$ and $H(X_2,s_2,s'_2)$.

\item For every $0\leq s,s' \leq c$,
every $H(X_1  \times X_2,s,s')$ can be computed
in  $\bigo(c^2)$ time from $H(X_1,s_1,s'_1)$, $H(X_2,s_2,s'_2)$,
and $i(X_1)$.
\end{itemize}
Since we have $n$ leaves and $n-1$ inner vertices in $T$,
the running time is in $\bigo(nc^4+m)$.
\end{proof}

\begin{example}\label{ex-ssg-spg}
We consider the SSGW instance $I$ with $n=6$ items using $\g(X)$
defined 
by the expression in  (\ref{eq-ori-c4msp2}), $c=7$, and the following sizes.
$$
\begin{array}{l|llllll}
j  &  1 & 2 & 3 & 4 & 5 & 6\\
\hline
s_j&  2 & 1 & 4  & 3 & 2 & 3
\end{array}
$$

The rules  given in  Lemma \ref{le-ssgw-sp} lead to
the values in Table \ref{tab-ssg-spg}.
Thus the optimal solution is $\{v_3,v_4\}$ with $OPT(I)=7$.
\end{example}

\begin{table}[h!]
\centering
\caption{Table for Example \ref{ex-ssg-spg}\label{tab-ssg-spg}}


\rotatebox{90}{\footnotesize
$
\begin{array}{l|llllllll|llllllll|llllllll|llllllll|}
  & \multicolumn{32}{c|}{H(X',s,s')}   \\
\hline
      & \multicolumn{8}{c|}{s'=0}   & \multicolumn{8}{c|}{s'=1}     & \multicolumn{8}{c|}{s'=2}   & \multicolumn{8}{c|}{s'=3}\\
X'~~~~~~~~~~~~~~~~~~~~~~~~~~~~~~~~~~~~~~~~~~~~~~~s              &0&1&2&3&4&5&6&7  & 0& 1 & 2 & 3 &  4 & 5 & 6 & 7   & 0&  1 & 2 & 3 &  4 & 5 & 6 & 7 & 0& 1 & 2 & 3 &  4 & 5 & 6 & 7   \\
\hline
v_1                                                             &1 &0&0&0&0&0&0&0& 0&0&0&0&0&0&0&0& 0&0&1&0&0&0&0&0& 0&0&0&0&0&0&0&0\\
v_2                                                             &1 &0&0&0&0&0&0&0& 0&1&0&0&0&0&0&0& 0&0&0&0&0&0&0&0& 0&0&0&0&0&0&0&0\\
v_3                                                             &1 &0&0&0&0&0&0&0& 0&0&0&0&0&0&0&0& 0&0&0&0&0&0&0&0& 0&0&0&0&0&0&0&0\\
v_4                                                             &1 &0&0&0&0&0&0&0& 0&0&0&0&0&0&0&0& 0&0&0&0&0&0&0&0& 0&0&0&1&0&0&0&0\\
v_5                                                             &1 &0&0&0&0&0&0&0& 0&0&0&0&0&0&0&0& 0&0&1&0&0&0&0&0& 0&0&0&0&0&0&0&0\\
v_6                                                             &1 &0&0&0&0&0&0&0& 0&0&0&0&0&0&0&0& 0&0&0&0&0&0&0&0& 0&0&0&1&0&0&0&0\\

\hline
v_1\times  v_2                                                  & 1&0&0&0&0&0&0&0& 0&1&0&1&0&0&0&0& 0&0&0&0&0&0&0&0& 0&0&0&0&0&0&0&0\\
v_3 \times v_4                                                  & 1&0&0&0&0&0&0&0& 0&0&0&0&0&0&0&0& 0&0&0&0&0&0&0&0& 0&0&0&1&0&0&0&1\\
v_5 \times v_6                                                  & 1&0&0&0&0&0&0&0& 0&0&0&0&0&0&0&0& 0&0&0&0&0&0&0&0& 0&0&0&1&0&1&0&0\\
(v_1\times  v_2) \cup (v_3 \times v_4)                          & 1&0&0&0&0&0&0&0& 0&1&0&1&0&0&0&0& 0&0&0&0&0&0&0&0& 0&0&0&1&0&0&0&1\\
((v_1\times  v_2) \cup (v_3 \times v_4)) \times(v_5 \times v_6) & 1&1&0&1&0&0&0&1& 0&0&0&0&0&0&0&0& 0&0&0&0&0&0&0&0& 0&0&0&1&1&1&1&0\\
\hline
\\
\\
\\
  & \multicolumn{32}{c|}{H(X',s,s')}  \\
\hline
      &  \multicolumn{8}{c|}{s'=4}     & \multicolumn{8}{c|}{s'=5}   & \multicolumn{8}{c|}{s'=6}  & \multicolumn{8}{c|}{s'=7} \\
X'~~~~~~~~~~~~~~~~~~~~~~~~~~~~~~~~~~~~~~~~~~~~~~~s              & 0& 1 & 2 & 3 &  4 & 5 & 6 & 7 & 0& 1 & 2 & 3 &  4 & 5 & 6 & 7 & 0& 1 & 2 & 3 &  4 & 5 & 6 & 7  & 0& 1 & 2 & 3 &  4 & 5 & 6 & 7 \\
\hline
v_1                                                             & 0&0&0&0&0&0&0&0& 0&0&0&0&0&0&0&0& 0&0&0&0&0&0&0&0& 0&0&0&0&0&0&0&0\\
v_2                                                             & 0&0&0&0&0&0&0&0& 0&0&0&0&0&0&0&0& 0&0&0&0&0&0&0&0& 0&0&0&0&0&0&0&0\\
v_3                                                             & 0&0&0&0&1&0&0&0& 0&0&0&0&0&0&0&0& 0&0&0&0&0&0&0&0& 0&0&0&0&0&0&0&0\\
v_4                                                             & 0&0&0&0&0&0&0&0& 0&0&0&0&0&0&0&0& 0&0&0&0&0&0&0&0& 0&0&0&0&0&0&0&0\\
v_5                                                             & 0&0&0&0&0&0&0&0& 0&0&0&0&0&0&0&0& 0&0&0&0&0&0&0&0& 0&0&0&0&0&0&0&0\\
v_6                                                             & 0&0&0&0&0&0&0&0& 0&0&0&0&0&0&0&0& 0&0&0&0&0&0&0&0& 0&0&0&0&0&0&0&0\\

\hline
v_1\times  v_2                                                  & 0&0&0&0&0&0&0&0& 0&0&0&0&0&0&0&0& 0&0&0&0&0&0&0&0& 0&0&0&0&0&0&0&0\\
v_3 \times v_4                                                  & 0&0&0&0&0&0&0&0& 0&0&0&0&0&0&0&0& 0&0&0&0&0&0&0&0& 0&0&0&0&0&0&0&0\\
v_5 \times v_6                                                  & 0&0&0&0&0&0&0&0& 0&0&0&0&0&0&0&0& 0&0&0&0&0&0&0&0& 0&0&0&0&0&0&0&0\\
(v_1\times  v_2) \cup (v_3 \times v_4)                          & 0&0&0&0&1&0&1&0& 0&0&0&0&0&0&0&0& 0&0&0&0&0&0&0&0& 0&0&0&0&0&0&0&0\\
((v_1\times  v_2) \cup (v_3 \times v_4)) \times(v_5 \times v_6) &  0&0&0&0&0&0&0&0& 0&0&0&0&0&0&0&0& 0&0&0&0&0&0&0&0& 0&0&0&0&0&0&0&0\\
\hline

\end{array}
$
}
\end{table}

\section{Conclusions and outlook}\label{sec-con}

The presented methods allow us to solve SSG and SSGW with
digraph constraints given by directed co-graphs and
(minimal) series-parallel digraphs in pseudo-polynomial time.
 
In contrast to \cite{GMT18} we did not consider null sizes.
This allows us  to verify whether a solution
consists of all vertices or contains all sinks of a subgraph
by using the sum of the sizes of the corresponding items.
SSG and SSGW using null sizes can also be solved in pseudo-polynomial time 
on directed co-graphs and (minimal) series-parallel digraphs by additional
counting the number of vertices or sinks within a SSGW solution.

For future work it could be interesting to find a solution for SSGW for
series-parallel digraphs in general. Example \ref{ex-ssgw-tr} shows that Lemma
\ref{lem2}
and the recursive structure of  minimal series-parallel digraphs cannot
be used in this case.
 
It remains  to analyze whether the shown
results also hold for other graph classes. Therefore one could consider
edge series-parallel digraphs from \cite{VTL82}. Further,
it remains to look at more  general graph classes, such as graphs
of bounded directed clique-width.
Directed clique-width  measures the difficulty of
decomposing a graph into a special tree-structure and was
defined by  Courcelle and Olariu in \cite{CO00}.
An alternative parameter is directed tree-width defined in \cite{JRST01}.
Since in the directed case bounded directed tree-width does not
imply bounded directed clique-width, solutions for
subset sum problems with digraph constraints of bounded directed
tree-width are interesting as well.
 
Furthermore, it could be useful to consider related problems.
These include the two minimization problems which are
introduced in \cite{GMT18}  by adding a maximality constraint
to SSG and SSGW. Moreover, a generalization of the results for SSG to the
partially ordered knapsack problem \cite{JN83,KP04} is still open.

\section*{Acknowledgements} \label{sec-a}

The work of the second and third author was supported
by the Deutsche
Forschungsgemeinschaft (DFG, German Research Foundation) -- 388221852


\bibliographystyle{alpha}

\newcommand{\etalchar}[1]{$^{#1}$}

\end{document}